\newtheorem{theorem}{Theorem}
\newtheorem{example}{Example}
\newtheorem{lemma}{Lemma}
\title{On the Local Quadratic Stability of T-S Fuzzy Systems in the Vicinity of the Origin}
\author{Donghwan Lee and Do Wan Kim
\thanks{D. Lee is with the Department of Electrical Engineering,
KAIST, Daejeon, 34141, South Korea {\tt\small
donghwan@kaist.ac.kr}.}
\thanks{D. W. Kim is with the Department of Electrical Engineering, Hanbat National University, Daejeon 34158, South Korea {\tt\small
dowankim@hanbat.ac.kr}.}
}
\begin{document}

\maketitle

\begin{abstract}
The main goal of this paper is to introduce new local stability conditions for continuous-time Takagi-Sugeno (T-S) fuzzy systems. These stability conditions are based on linear matrix inequalities (LMIs) in combination with quadratic Lyapunov functions. Moreover, they integrate information on the membership functions at the origin and effectively leverage the linear structure of the underlying nonlinear system in the vicinity of the origin. As a result, the proposed conditions are proved to be less conservative compared to existing methods using fuzzy Lyapunov functions. Moreover, we establish that the proposed methods offer necessary and sufficient conditions for the local exponential stability of T-S fuzzy systems. The paper also includes discussions on the inherent limitations associated with fuzzy Lyapunov approaches. To demonstrate the theoretical results, we provide comprehensive examples that elucidate the core concepts and validate the efficacy of the proposed conditions.
\end{abstract}
\begin{IEEEkeywords}
Nonlinear systems, Takagi--Sugeno (T--S) fuzzy systems, local
stability, linear matrix inequality (LMI)
\end{IEEEkeywords}

\section{Introduction}

Over the past several decades, Takagi--Sugeno (T--S) fuzzy systems have gained prominence as effective tools for modeling nonlinear systems~\cite{khalil2002nonlinear}. These systems are characterized by a convex combination of linear subsystems, each weighted by nonlinear membership functions (MFs)~\cite{tanaka2004fuzzy}. Particularly within the framework of Lyapunov theory, T--S fuzzy systems offer a systematic approach for leveraging linear system theories in the stability analysis and control design of nonlinear systems. Moreover, these systems facilitate the formulation of problems based on convex linear matrix inequality (LMI) optimization procedures, which are readily solvable through standard convex optimization techniques~\cite{boyd1994linear}.
Due to these advantages, there has been a surge of interest in the stability analysis and control design of T--S fuzzy systems in recent years~\cite{nguyen2019fuzzy}.

Despite these advantages, T--S fuzzy model-based methods are often subject to a considerable degree of conservatism, which can be attributed to multiple sources: 1) The discrepancy between the infinite-dimensional parameterized matrix inequalities, which naturally arise in Lyapunov inequalities, and the numerically tractable finite-dimensional LMI problems introduces significant conservatism; 2) The constrained architectures of the Lyapunov functions and control laws can also be a source of conservatism; 3) Attempts to ensure stability across an overly expansive region of the state-space can lead to conservatism, particularly when the underlying nonlinear system only exhibits stability within a small region around the origin.

To reduce the conservatism, extensive research efforts have been undertaken. These efforts can be broadly classified into three primary categories:
\begin{enumerate}
\item \textbf{Relaxation of Parameterized LMIs:} Techniques for relaxing parameterized LMIs have also been developed. These include slack variable methods~\cite{fang2006new,kim2000new}, multidimensional convex summation techniques based on Polya's theorem~\cite{sala2007asymptotically,oliveira2007parameter}, as well as methods that leverage the structural information of membership functions~\cite{sala2007relaxed,narimani2009relaxed,kruszewski2009triangulation} to name just a few.

\item \textbf{Generalization of Lyapunov Functions and Control Laws:} Various extensions to traditional Lyapunov functions have been proposed, including piecewise Lyapunov functions~\cite{johansson1999piecewise,feng2003controller,campos2012new}, fuzzy Lyapunov functions (FLFs)~\cite{tanaka2003multiple,mozelli2009reducing,guerra2004lmi,ding2006further,lee2010improvement,lee2011new,lee2014generalized,lee2013relaxed,lee2010improvement,ding2010homogeneous}, line-integral Lyapunov functions~\cite{rhee2006new}, polynomial Lyapunov functions~\cite{Tanaka2009,sala2009polynomial,lam2011polynomial,bernal2011stability}, and augmented FLFs~\cite{kruszewski2008nonquadratic,lee2011approaches} to name just a few.

\item \textbf{Local Stability Approaches:} Some studies focus on estimating the domain of attraction through local stability analysis~\cite{bernal2010generalized,bernal2011stability,pan2011nonquadratic,lee2012fuzzy,lee2013relaxed}. By assessing stability within a localized region, these approaches aim to reduce conservatism by balancing the trade-off between the volume of the domain of attraction and the degree of conservatism.
\end{enumerate}

The primary aim of this paper is to introduce new local stability conditions for continuous-time T-S fuzzy systems by building upon the common quadratic Lyapunov function (QLF) approach~\cite{tanaka1997design}. While the QLF method is straightforward and widely used, it is known to introduce considerable conservatism. This conservatism is primarily due to the common Lyapunov matrix that should be found across all subsystems of the underlying fuzzy system.

In addressing these challenges, this paper demonstrates that the limitations associated with the common QLF approach can be mitigated. This is achieved by incorporating information about the membership functions at the origin into the quadratic stability conditions and by effectively leveraging the linear structure of the underlying nonlinear system in proximity to the origin. Consequently, the proposed quadratic stability conditions are theoretically proven to be less conservative than existing FLF methods in the literature.

The main contributions of this paper can be summarized as follows: 1) We develop several LMI stability conditions based on the QLF. Both theoretical and numerical evidences are provided to demonstrate that these quadratic stability conditions are less conservative compared to those based on FLFs in existing literature; 2) We rigorously establish that the proposed methods yield necessary and sufficient conditions for ensuring the local exponential stability of T-S fuzzy systems; 3) We identify and discuss the inherent limitations associated with both the QLF and FLF approaches. To substantiate these theoretical insights, comprehensive examples are provided to clarify the core concepts underpinning our contributions; 4) We demonstrate that the proposed QLF approach can be synergistically integrated with FLF methods to produce compounded effects, thereby reducing conservatism of existing stability conditions.

The organization of this paper is as follows: \cref{sec:preliminaries} introduces preliminary definitions, reviews existing quadratic and FLF approaches, and provides initial discussions to set the stage for the paper.
\cref{sec:main-results} contains the main results, detailing the proposed stability conditions along with associated examples and discussions to validate the theoretical contributions. \cref{sec:analysis} offers a comprehensive analysis, including discussions on the reduced conservatism of the proposed conditions, converse theorems, and the fundamental limitations of various approaches. This section also includes relevant examples to support the analysis. \cref{sec:combined-approach} integrates the two primary methods discussed in this paper—the FLF and QLF approaches—and analyzes their combined effects. Finally,~\cref{sec:conclusion} concludes the paper, offering remarks on potential topics for future research.

\section{Preliminaries}\label{sec:preliminaries}

\subsection{Notation}

The adopted notation is as follows. ${\mathbb R}$: sets of real
numbers; $A^T$: transpose of matrix $A$; $A \succ 0$ ($A \prec 0$,
$A \succeq 0$, and $A \preceq 0$, respectively): symmetric
positive definite (negative definite, positive semi-definite, and
negative semi-definite, respectively) matrix $A$; $I_n$ and $I$: $n\times n$ identity matrix and identity matrix of appropriate dimensions, respectively; $0$: zero vector of appropriate dimensions; $\lambda _{\min} (A)$ and $\lambda _{\max } (A)$:
minimum and maximum eigenvalues of symmetric matrix $A$, respectively.

\subsection{Takagi--Sugeno (T--S) fuzzy system}

Consider the continuous-time nonlinear system
\begin{align}
&\dot x(t) = f(x(t)),\label{nonlinear-system}
\end{align}
where $t \geq 0$ is the time, $x(t): = \begin{bmatrix} x_1(t) & \cdots  & x_n(t)\\
\end{bmatrix}^T \in {\mathbb R}^n$ is the state, $f:{\mathbb R}^n \to {\mathbb R}^n$ is a nonlinear function
such that $f(0) = 0$, i.e., the origin is an equilibrium
point of~\eqref{nonlinear-system}. By the sector nonlinearity
approach~\cite{tanaka2004fuzzy}, one can obtain the following T--S
fuzzy system representation of the nonlinear
system~\eqref{nonlinear-system} in the subset ${\cal X}$ of the
state-space ${\mathbb R}^n$:
\begin{align}
\dot x(t) = \sum_{i = 1}^r {{\alpha _i}(x(t)){A_i}x(t)} =: A(\alpha(t)) x(t) ,\quad \forall x(t) \in {\cal X}\label{fuzzy-system}
\end{align}
where $A_i \in {\mathbb R}^{n \times
n}, i\in \{1,2,\ldots, r \}$ are subsystem matrices, $i \in {\cal I}_r :=
\{ 1,\,2, \ldots ,\,r\}$ is the rule number, $\alpha_i :\,{\mathbb R}^n \to [0,\,1]$ is the membership function (MF) for each rule, and the vector of the MFs
\begin{align*}
\alpha (x): = \left[ {\begin{array}{*{20}{c}}
{{\alpha _1}(x)}\\
{{\alpha _2}(x)}\\
 \vdots \\
{{\alpha _r}(x)}
\end{array}} \right] \in {\mathbb R}^r
\end{align*}
lies in the unit simplex
\[
{\Lambda _r}: = \left\{ {\begin{array}{*{20}{c}}
{\alpha  \in {\mathbb R}^r: \sum\limits_{i = 1}^r {{\alpha _i}}  = 1,0 \le {\alpha _i} \le 1,i \in {\cal I}_r}
\end{array}} \right\}.
\]

In addition, ${\cal X} \in {\mathbb R}^n $ is a subset of the state-space ${\mathbb R}^n$ including the origin $0 \in {\cal X}$ and satisfies
\begin{align}
{\cal X} \subseteq \left\{ {x \in {\mathbb R}^n:f(x) = \sum_{i = 1}^r {{\alpha _i}(x)A_ix} } \right\}\label{region-X}
\end{align}
In other words, ${\cal X}$ is a modelling region where the T--S fuzzy model is defined. In this paper, we assume that ${\cal X}$ is compact.

\subsection{Quadratic stability}

Let us consider the quadratic Lyapunov function (QLF) candidate $V(x) = x^T P x$, where $P = P^T \succ 0$ is some matrix. Its time-derivative along the solution of~\eqref{nonlinear-system} is given as
\begin{align*}
\dot V(x(t)) = x{(t)^T}\sum_{i = 1}^r {{\alpha _i}(x(t))(A_i^TP + P{A_i})x(t)}
\end{align*}

Using Lyapunov direct method~\cite{khalil2002nonlinear}, the corresponding stability condition is given below.
\begin{lemma}[{\cite[Theorem~5]{tanaka2004fuzzy}}]
Suppose that there exists a symmetric matrix $P = P^T \in {\mathbb R}^{n\times n}$ such that the following LMIs hold:
\begin{align*}
&P \succ 0,\quad A_i^TP + P{A_i} \prec 0,\quad i \in {\cal I}_r.
\end{align*}
Then, the fuzzy system~\eqref{fuzzy-system} is locally asymptotically stable.
\end{lemma}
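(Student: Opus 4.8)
The plan is to apply Lyapunov's direct method with the candidate $V(x) = x^T P x$, exploiting the convex-combination structure of the membership functions. First, since $P \succ 0$, the function $V$ is positive definite, with $\lambda_{\min}(P)\|x\|^2 \le V(x) \le \lambda_{\max}(P)\|x\|^2$ for all $x$. Next, along any solution of~\eqref{nonlinear-system} that remains inside the modelling region ${\cal X}$, the inclusion~\eqref{region-X} permits replacing $f(x)$ by $\sum_{i=1}^r \alpha_i(x)A_i x$, so that $\dot V$ equals exactly the expression displayed just before the lemma, namely $\dot V(x) = x^T\!\left(\sum_{i=1}^r \alpha_i(x)(A_i^TP + PA_i)\right)\!x$ for $x \in {\cal X}$.

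The key step is to show this derivative is strictly negative on ${\cal X}\setminus\{0\}$. For a fixed $x \neq 0$, the scalar $\dot V(x) = \sum_{i=1}^r \alpha_i(x)\,x^T(A_i^TP + PA_i)x$ is a convex combination of the numbers $x^T(A_i^TP + PA_i)x$, each of which is strictly negative by the hypothesis $A_i^TP + PA_i \prec 0$; since $\alpha(x) \in \Lambda_r$ forces the weights to be nonnegative and to sum to one, at least one weight is positive, and hence $\dot V(x) < 0$. Equivalently, the parameter-dependent matrix $\sum_{i=1}^r \alpha_i(x)(A_i^TP + PA_i)$ is negative definite for every $\alpha(x) \in \Lambda_r$.

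Finally, I would localize the conclusion so that the fuzzy representation stays legitimate along the whole trajectory. Since ${\cal X}$ contains a neighbourhood of the origin, choose $c>0$ small enough that the sublevel set $\Omega_c := \{x \in {\mathbb R}^n : V(x) \le c\}$ satisfies $\Omega_c \subseteq {\cal X}$; $\Omega_c$ is compact by the bounds on $V$. Because $\dot V < 0$ on $\Omega_c \setminus \{0\}$, the set $\Omega_c$ is positively invariant, so any solution starting in $\Omega_c$ remains in ${\cal X}$ for all $t \ge 0$, which is exactly what validates the substitution $f(x) = A(\alpha(x))x$ used above. Standard Lyapunov (or LaSalle) arguments then yield local asymptotic stability of the origin, with $\Omega_c$ as an estimate of its region of attraction.

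The bulk of the work is not the matrix-inequality manipulation, which is elementary, but the bookkeeping that guarantees~\eqref{fuzzy-system} holds along entire trajectories: the decrease of $V$ must first be used to build an invariant sublevel set contained in ${\cal X}$, and only then can the usual Lyapunov conclusion be invoked. This is also precisely why the statement asserts only \emph{local} asymptotic stability. A small but essential point is that $\alpha(x) \in \Lambda_r$ rules out the degenerate case in which all weights vanish, so the strictness of the per-subsystem LMIs transfers to strictness of $\dot V$.
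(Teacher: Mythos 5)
Your proof is correct and follows essentially the same route the paper takes (the paper simply displays the expression for $\dot V$ as a convex combination of the terms $x^T(A_i^TP+PA_i)x$ and invokes the Lyapunov direct method, citing the original reference). Your additional care in constructing an invariant sublevel set $\Omega_c \subseteq {\cal X}$ to justify the substitution $f(x)=A(\alpha(x))x$ along entire trajectories is exactly the point the paper handles implicitly in its remark that any $L_V(c)\subseteq{\cal X}$ is a domain of attraction.
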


When the LMIs are feasible, then the system is locally asymptotically stable, and the domain of attraction (DA)~\cite{khalil2002nonlinear} is given by any Lyapunov sublevel sets defined by
\begin{align*}
{L_V}(c): = \{ x \in {\mathbb R}^n:V(x) \le c\}  \subseteq {\cal X}
\end{align*}
with any $c>0$ such that ${L_V}(c) \subseteq {\cal X}$. The largest DA is ${L_V}(c^*)$ with ${c^*}: = \max \{ c > 0:{L_V}(c) \subseteq {\cal X} \}$. While the LMI condition offers simplicity, a key drawback of the quadratic stability approach is its inherent conservatism. This conservatism primarily stems from the requirement for a common Lyapunov matrix $P$, which should satisfy the Lyapunov inequality across all subsystems of the fuzzy system. As an alternative, the fuzzy Lyapunov function (FLF) can be employed.

\subsection{Fuzzy Lyapunov stability}

Let us consider the Lyapunov function candidate
\begin{align*}
V(x) = {x^T}\left( {\sum_{i = 1}^r {{\alpha _i}(x)P_i} } \right)x =: {x^T}P(\alpha(x))x,
\end{align*}
which is called a fuzzy Lyapunov function (FLF), where $P_i = P_i^T \succ 0,i\in {\cal I}_r$.
Taking its time-derivative along the trajectory of~\eqref{fuzzy-system} leads to
\begin{align*}
\dot V(x(t)) =& x{(t)^T}\left[ {\sum\limits_{i = 1}^r {\sum\limits_{j = 1}^r {{\alpha _i}(x(t)){\alpha _j}(x(t))(A_i^T{P_j} + {P_j}{A_i})} } } \right.\\
&\left. {\sum\limits_{k = 1}^r {{{\dot \alpha }_k}(x(t)){P_k}} } \right]x(t).
\end{align*}

It is important to note that the use of the FLF presupposes that the MFs are differentiable, an assumption that does not universally hold. Therefore, when employing the FLF, we implicitly assume the differentiability of the MFs. A challenge that emerges in this context is that the stability condition depends on the derivatives of the MFs, which are not readily amenable to convexification. To circumvent this issue, \cite{tanaka2003multiple} proposed incorporating a bound on the derivatives of the MFs, $|{\nabla _x}{\alpha _i}(x)A(\alpha (x))x| \le {\varphi _i},i \in {\cal I}_r,x \in {\cal X}$. Building on this bound, a subsequent LMI-based stability condition has been introduced.
\begin{lemma}[{\cite[Theorem~1]{tanaka2003multiple}}]\label{lemma:Tanaka2003}
Suppose that the MFs are differentiable.
Moreover, suppose that there exist constants $\varphi_k > 0, k\in {\cal I}_r$, symmetric matrices $P_i = P_i^T \in {\mathbb R}^{n\times n}, i\in {\cal I}_r$ such that the following LMIs hold:
\begin{align*}
&P_i \succ 0,\quad i\in {\cal I}_r\\
&\sum\limits_{k = 1}^r {{\varphi _k}{P_k}}  + \frac{1}{2}\left( {A_i^T{P_j} + {P_j}{A_i} + A_j^T{P_i} + {P_i}{A_j}} \right)\prec 0,\\
&\forall (i,j) \in {\cal I}_r \times {\cal I}_r,\quad i \le j.
\end{align*}
Then, the fuzzy system~\eqref{fuzzy-system} is locally asymptotically stable.
\end{lemma}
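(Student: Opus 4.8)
The plan is to run Lyapunov's direct method for the FLF $V(x)=x^TP(\alpha(x))x$ restricted to a sublevel set contained in ${\cal X}$. Positive definiteness of $V$ on ${\cal X}$ is immediate: since $\alpha(x)\in\Lambda_r$ and every $P_i\succ0$, the matrix $P(\alpha(x))=\sum_{i=1}^r\alpha_i(x)P_i$ is a convex combination of positive definite matrices, and since there are only finitely many of them one even obtains the uniform bound $\lambda_{\min}(P(\alpha(x)))\ge\min_{i\in{\cal I}_r}\lambda_{\min}(P_i)>0$. So the whole task reduces to showing that $\dot V$ is negative definite on ${\cal X}\setminus\{0\}$.

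Starting from the expression for $\dot V$ displayed above, the only term that resists the usual simplex-based convexification is $x^T\big(\sum_{k=1}^r\dot\alpha_k(x)P_k\big)x$, since the scalars $\dot\alpha_k$ may take either sign. The key observation is that $\varphi_k$ bounds $|\dot\alpha_k(x(t))|=|\nabla_x\alpha_k(x(t))A(\alpha(x(t)))x(t)|\le\varphi_k$ along any trajectory lying in ${\cal X}$; in particular $\dot\alpha_k\le\varphi_k$, so writing $\dot\alpha_k=\varphi_k-(\varphi_k-\dot\alpha_k)$ with $\varphi_k-\dot\alpha_k\ge0$ and invoking $P_k\succ0$ gives the matrix inequality $\sum_{k=1}^r\dot\alpha_k(x)P_k\preceq\sum_{k=1}^r\varphi_kP_k$. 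Hence $\dot V(x)\le x^T\big[\sum_{i=1}^r\sum_{j=1}^r\alpha_i\alpha_j(A_i^TP_j+P_jA_i)+\sum_{k=1}^r\varphi_kP_k\big]x$.

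Next I would rewrite this right-hand side. Because $\sum_{i,j}\alpha_i\alpha_j=\big(\sum_i\alpha_i\big)^2=1$, the constant term $\sum_k\varphi_kP_k$ can be absorbed into the double sum, and collecting the $(i,j)$ and $(j,i)$ contributions for $i<j$ yields
\[
\dot V(x)\le x^T\Big[\sum_{i=1}^r\alpha_i^2\,\Xi_{ii}+\sum_{i<j}2\alpha_i\alpha_j\,\Xi_{ij}\Big]x,\qquad \Xi_{ij}:=\sum_{k=1}^r\varphi_kP_k+\tfrac12\big(A_i^TP_j+P_jA_i+A_j^TP_i+P_iA_j\big),
\]
and the LMI hypotheses of the lemma are exactly the statements $\Xi_{ij}\prec0$ for all $i\le j$ (the diagonal case $\Xi_{ii}\prec0$ reducing to $\sum_k\varphi_kP_k+A_i^TP_i+P_iA_i\prec0$). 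Since the scalar weights $\alpha_i^2$ and $2\alpha_i\alpha_j$ are nonnegative and sum to $1$, the bracketed matrix is a convex combination of the finitely many negative definite matrices $\Xi_{ij}$; putting $\beta:=\min_{i\le j}\big(-\lambda_{\max}(\Xi_{ij})\big)>0$ then gives $\dot V(x)\le-\beta\|x\|^2$ for all $x\in{\cal X}$.

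To conclude I would choose $c>0$ with $L_V(c)\subseteq{\cal X}$ (possible because $0$ is interior to ${\cal X}$ and $V$ is continuous with $V(0)=0$); then $L_V(c)$ is positively invariant since $\dot V<0$ on it, every trajectory starting in it remains in ${\cal X}$ so that the bound $\varphi_k$ and the estimate $\dot V\le-\beta\|x\|^2$ stay valid, and Lyapunov's theorem yields local asymptotic stability with $L_V(c)$ serving as an estimate of the domain of attraction. The single genuine obstacle is the $\dot\alpha_k$ term; once its one-sided bound $\dot\alpha_k\le\varphi_k$ is combined with $P_k\succ0$, the remainder is the routine simplex-weighted LMI relaxation. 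One point worth flagging explicitly is that differentiability of the MFs and the validity of the $\varphi_k$ as genuine bounds on $|\dot\alpha_k|$ over ${\cal X}$ are standing hypotheses on which this lemma rests.
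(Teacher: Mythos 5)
Your proof is correct and is the standard argument for this result; note, though, that the paper does not prove this lemma at all---it is quoted directly from Tanaka et al.\ (reference \cite{tanaka2003multiple}), so there is no in-paper proof to compare against. The two points you handle---the one-sided bound $\sum_k\dot\alpha_kP_k\preceq\sum_k\varphi_kP_k$ via $P_k\succ0$, and the simplex-weighted recombination into the $\Xi_{ij}$---are exactly the ingredients of the original proof, and your explicit caveat that the conclusion is only valid where the derivative bounds actually hold is consistent with the paper's own subsequent discussion of the set $\Omega(\varphi)$ and of sublevel sets $L_V(c)\subseteq\Omega(\varphi)$ as domain-of-attraction estimates.
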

In the Fuzzy Lyapunov Function (FLF) approach, the imposed bounds $|{\nabla _x}{\alpha _i}{(x)^T}A(\alpha (x))x| \le {\varphi _i},i \in {\cal I}_r$ imply that the matrix $A(\alpha(x))$ varies at a sufficiently slow rate. This constraint effectively limits the rate of change of the system's dynamics as represented by $A(\alpha(x))$, thereby facilitating the application of the FLF method for stability analysis. To proceed, let us define the set
\begin{align*}
\Omega (\varphi ): = \{ x \in {\cal X}:|{\nabla _x}{\alpha _i}{(x)^T}A(\alpha (x))x| \le {\varphi_i},i \in {\cal I}_r\}
\end{align*}
where $\varphi : = {\left[ {\begin{array}{*{20}{c}}
{{\varphi _1}}&{{\varphi _2}}& \cdots &{{\varphi _r}}
\end{array}} \right]^T}$. Clearly, this set is a subset of $\cal X$ such that the derivatives of the MFs are bounded within certain intervals. If the LMIs in~\cref{lemma:Tanaka2003} are feasible, then we have
\begin{align*}
&V(x)> 0,\quad \forall x \in {\cal X} \backslash \{ 0\},\\
&{\nabla _x}V{(x)^T}A(\alpha (x))x < 0,\quad \forall x \in \Omega (\varphi )\backslash \{ 0\}.
\end{align*}
Therefore, from the Lyapunov direct method~\cite{khalil2002nonlinear}, any Lyapunov sublevelset
\begin{align*}
{L_V}(c): = \{ x \in {\mathbb R}^n:V(x) \le c\},
\end{align*}
with some $c>0$ such that ${L_V}(c) \subseteq  \Omega (\varphi )$ is a domain of attraction (DA).
A fundamental question that naturally arises in this context is whether the set $\Omega(\varphi)$ includes a neighborhood of the origin, i.e., a ball centered at the origin. The significance of this question is directly tied to the practical utility of the conditions elaborated upon in this paper; if $\Omega(\varphi)$ does not include a neighborhood of the origin, then the proposed conditions would be useless for stability analysis around that point. Fortunately, the answer to this question is positive under certain mild assumptions.
\begin{theorem}\label{prop:1}
Suppose that the MFs $\alpha(x)$ are continuously differentiable on $\cal X$. Then, 1) the derivatives of the MFs are bounded in $\cal X$, 2) we have $\lim_{x \to 0_n} |{\nabla _x}{\alpha _i}(x)A(\alpha (x))x| = 0,\forall i \in {\cal I}_r$, and 3) there exists a ball ${\cal B}(\varepsilon): = \{ x \in {\mathbb R}^n:{\left\| x \right\|_2} \le \varepsilon \}$ of radius $\varepsilon>0$ centered at the origin such that ${\cal B}(\varepsilon) \subseteq \Omega(\varphi )$.
\end{theorem}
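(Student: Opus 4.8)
The plan is to establish the three claims in sequence, since each one essentially feeds into the next. For claim~1, I would use the standing assumption that $\cal X$ is compact together with the hypothesis that $\alpha(x)$ is continuously differentiable on $\cal X$. Then each component $\nabla_x \alpha_i(x)$ is continuous on the compact set $\cal X$, hence bounded; likewise $A(\alpha(x))x = \sum_j \alpha_j(x) A_j x$ is a continuous (indeed, polynomial-in-$\alpha$ times linear-in-$x$) function on $\cal X$, hence bounded. Therefore the scalar product ${\nabla_x}{\alpha_i}(x)^T A(\alpha(x))x$ is a product of bounded continuous functions, hence bounded on $\cal X$; call a bound $M_i$, so one may in principle take $\varphi_i = M_i$. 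This is the routine part.

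For claim~2, the key observation is that $A(\alpha(x))x \to 0$ as $x \to 0$. Indeed $\|A(\alpha(x))x\|_2 \le \left(\max_{i}\|A_i\|_2\right)\|x\|_2 \to 0$, using $\alpha(x)\in\Lambda_r$ so that $\|A(\alpha(x))\|_2 \le \max_i \|A_i\|_2$ uniformly. Meanwhile, $\nabla_x \alpha_i(x)$ stays bounded near the origin by claim~1. Hence $|{\nabla_x}{\alpha_i}(x)^T A(\alpha(x))x| \le \|\nabla_x\alpha_i(x)\|_2\,\|A(\alpha(x))x\|_2 \le (\text{const})\cdot \|x\|_2 \to 0$. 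This is the crucial step and, in my view, the only one where a subtlety could hide: one must be sure that $\|\nabla_x\alpha_i(x)\|_2$ does not blow up as $x\to 0$, which is exactly guaranteed by continuity of the gradient on $\cal X$ (a ball around $0$ lies in $\cal X$ since $0\in\cal X$ is interior — or at least $\cal X$ contains a neighborhood of $0$, which should be invoked or assumed). So the limit is zero by the squeeze theorem.

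For claim~3, I would simply unwind the definition of $\Omega(\varphi)$. Given $\varphi_i > 0$ fixed, claim~2 provides, for each $i$, a radius $\varepsilon_i > 0$ such that $\|x\|_2 \le \varepsilon_i$ implies $|{\nabla_x}{\alpha_i}(x)^T A(\alpha(x))x| \le \varphi_i$. Taking $\varepsilon := \min_{i\in{\cal I}_r}\varepsilon_i > 0$ (a finite minimum of positive numbers, hence positive) and shrinking further if necessary so that ${\cal B}(\varepsilon)\subseteq\cal X$, we get ${\cal B}(\varepsilon)\subseteq\Omega(\varphi)$ by definition of the latter set. The only mild gap to be careful about is the interplay between "$\varphi$ is given/arbitrary positive" versus "$\varphi$ is whatever the LMIs of \cref{lemma:Tanaka2003} produced"; the statement as phrased treats $\varphi$ as given, so the argument above suffices, and the main obstacle is really just making claim~2 airtight.
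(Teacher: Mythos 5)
Your proposal is correct and follows essentially the same route as the paper's proof: boundedness of $\nabla_x\alpha_i$ from continuity on the compact set $\cal X$, the Cauchy--Schwarz/submultiplicative bound $|\nabla_x\alpha_i(x)^T A(\alpha(x))x|\le (\text{const})\cdot\|x\|_2$, and then choosing $\varepsilon$ small enough that this is below each $\varphi_i$. The only cosmetic difference is that you bound $\|A(\alpha(x))\|_2$ by $\max_i\|A_i\|_2$ via convexity while the paper uses $\sum_i\|A_i\|_2$ via the triangle inequality; both suffice.
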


The proof is given in Appendix~\ref{appendix:2}.
\cref{prop:1} indicates that for the set $\Omega(\varphi )$ to include a neighborhood of the origin, it is not sufficient for the MFs to merely be differentiable; their derivatives must also exhibit continuity. To elucidate this point, we present an example featuring MFs that are differentiable over the set $\cal X$, but whose derivatives lack continuity and show its impact on the stability conditions.
\begin{example}\label{ex:5}
Let us consider the MFs
\begin{align*}
&{\alpha _1}(x) = \left\{ {\begin{array}{*{20}{c}}
{0,\quad {\rm{if}}{\mkern 1mu} {\mkern 1mu} x = 0}\\
{{x^2}\left( {\frac{1}{2}\sin \left( {\frac{\pi }{{2x}}} \right) + \frac{1}{2}} \right),\quad {\rm{if}}{\mkern 1mu} {\mkern 1mu} x \in [ - 1,1]\backslash \{ 0\}}
\end{array}} \right.\\
&{\alpha _2}(x) = 1 - {\alpha _1}(x)
\end{align*}
with ${\cal X} = [-1,1]$. The functions ${\alpha _1}$ and ${\alpha _2}$ are continuous on $\cal X$.
Its derivative in $x\in [-1,1]$ is given by
\begin{align*}
\frac{d}{{dx}}{\alpha _1}(x) = \left\{ {\begin{array}{*{20}{c}}
{0,\quad {\rm{if}}\quad x = 0}\\
\begin{array}{l}
2x\left( {\frac{1}{2}\sin \left( {\frac{\pi }{{2x}}} \right) + \frac{1}{2}} \right) - \frac{\pi }{2}\cos \left( {\frac{\pi }{{2x}}} \right),\\
{\rm{if}}\quad x \in [ - 1,1]\backslash \{ 0\}
\end{array}
\end{array}} \right.
\end{align*}
which is discontinuous because it oscillates as $x \to 0$. Therefore, $\lim_{x \to 0} |{\nabla _x}{\alpha _i}(x)A(\alpha (x))x|,\forall i \in {\cal I}_r$ do not exist. As a consequence, although $\Omega(\varphi )$ contains the origin, it does not contain any ball centered at the origin.
\end{example}

In~\cite{mozelli2009reducing}, an improved (less conservative) LMI condition has been proposed using the null effect
\begin{align}
\sum\limits_{k = 1}^r {{{\dot \alpha }_k}(x(t))}  = 0.\label{eq:slack1}
\end{align}
Using this property, one can add the slack term $\sum_{k = 1}^r {{{\dot \alpha }_k}(x(t))}M   = 0$ with any slack matrix variable $M$. For convenience and completeness of the presentation, the condition is formally introduced below.
\begin{lemma}[{\cite[Theorem~6]{mozelli2009reducing}}]\label{lemma:Mozelli2009}
Suppose that the MFs are continuously differentiable.
Moreover, suppose that there exist constants $\varphi_k > 0,k\in {\cal I}_r$, symmetric matrices $P_i = P_i^T \in {\mathbb R}^{n\times n}, i\in {\cal I}_r$ and $M = M^T \in {\mathbb R}^{n\times n}$ such that the following LMIs hold:
\begin{align}
&P_i \succ 0,\quad P_i + M \succeq 0,\quad i\in {\cal I}_r\nonumber\\
&\sum\limits_{k = 1}^r {{\varphi _k}{(P_k + M)}}  + \frac{1}{2}\left( {A_i^T{P_j} + {P_j}{A_i} + A_j^T{P_i} + {P_i}{A_j}} \right) \prec 0,\nonumber\\
&\forall (i,j) \in {\cal I}_r \times {\cal I}_r,\quad i \le j\label{eq:4}
\end{align}
Then, the fuzzy system~\eqref{fuzzy-system} is locally asymptotically stable.
\end{lemma}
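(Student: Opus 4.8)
The plan is to apply Lyapunov's direct method with the fuzzy Lyapunov function $V(x)=x^{T}P(\alpha(x))x$, $P(\alpha(x))=\sum_{i=1}^{r}\alpha_i(x)P_i$. Positive definiteness of $V$ on $\mathcal{X}\setminus\{0\}$ is immediate: since each $P_i\succ 0$ and $\alpha(x)\in\Lambda_r$, the matrix $P(\alpha(x))$ is a convex combination of positive definite matrices, hence positive definite. First I would differentiate $V$ along the trajectories of~\eqref{fuzzy-system}, which yields exactly the expression displayed just before~\cref{lemma:Tanaka2003}: a double-sum term $\sum_{i,j}\alpha_i\alpha_j(A_i^{T}P_j+P_jA_i)$ plus the derivative term $\sum_{k=1}^{r}\dot\alpha_k P_k$.

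The central manipulation is to exploit the null-sum identity~\eqref{eq:slack1}, $\sum_{k=1}^{r}\dot\alpha_k(x)=0$, which allows replacing $\sum_k\dot\alpha_k P_k$ by $\sum_k\dot\alpha_k(P_k+M)$ for the free symmetric matrix $M$. Restricting attention to $x\in\Omega(\varphi)$, where by definition $|\dot\alpha_k|=|\nabla_x\alpha_k(x)^{T}A(\alpha(x))x|\le\varphi_k$, and using $P_k+M\succeq 0$, I would bound termwise $\dot\alpha_k(P_k+M)\preceq|\dot\alpha_k|(P_k+M)\preceq\varphi_k(P_k+M)$, so that $\sum_k\dot\alpha_k P_k\preceq\sum_k\varphi_k(P_k+M)$. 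Next I would write the membership-independent term as $\sum_k\varphi_k(P_k+M)=\bigl(\sum_i\alpha_i\bigr)\bigl(\sum_j\alpha_j\bigr)\sum_k\varphi_k(P_k+M)$ and regroup the whole bound on $\dot V(x)$ symmetrically — isolating $i=j$ and pairing $i<j$ with $j<i$ — to obtain $\dot V(x)\le x^{T}\bigl[\sum_{i\le j}c_{ij}\,\alpha_i\alpha_j\,\Xi_{ij}\bigr]x$, where each $\Xi_{ij}$ is precisely the matrix appearing in the second LMI of~\eqref{eq:4}, each $c_{ij}>0$, and $\sum_{i\le j}c_{ij}\alpha_i\alpha_j=\bigl(\sum_i\alpha_i\bigr)^2=1$. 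Since the LMIs give $\Xi_{ij}\prec 0$, the bracketed matrix is a genuine convex-weight combination of negative definite matrices and hence negative definite; therefore $\dot V(x)<0$ for all $x\in\Omega(\varphi)\setminus\{0\}$.

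Finally, I would invoke~\cref{prop:1}: because the MFs are continuously differentiable, $\Omega(\varphi)$ contains a ball $\mathcal{B}(\varepsilon)$ about the origin, so there exists $c>0$ with $L_V(c)=\{x:V(x)\le c\}\subseteq\mathcal{B}(\varepsilon)\subseteq\Omega(\varphi)$. On this set $V>0$ and $\dot V<0$ off the origin, and $L_V(c)$ is forward invariant since $\dot V<0$ on its boundary; Lyapunov's direct method~\cite{khalil2002nonlinear} then yields local asymptotic stability, with $L_V(c)$ an estimate of the domain of attraction.

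The main obstacle I anticipate is the bookkeeping in the second step: correctly handling the sign-indefinite $\dot\alpha_k$ (the very reason the slack matrix $M$ with $P_k+M\succeq 0$ is introduced, since $P_k\succeq 0$ alone would not permit the termwise bound), and verifying that the symmetric regrouping of the double sum reproduces exactly the $i\le j$ LMIs with strictly positive coefficients summing to one. A subtler point worth stating explicitly is that the derivative bounds, and hence $\dot V<0$, hold only on $\Omega(\varphi)$; this is why~\cref{prop:1} — and thus continuity, not merely existence, of the MF derivatives, as underscored by~\cref{ex:5} — is indispensable for the conclusion to be nonvacuous.
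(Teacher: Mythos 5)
Your proof is correct and follows exactly the route the paper relies on: the paper states this lemma as a cited result from Mozelli et al.\ and only sketches the mechanism ($V>0$ on ${\cal X}\setminus\{0\}$, $\dot V<0$ on $\Omega(\varphi)\setminus\{0\}$ via the null-sum slack term and the bound $|\dot\alpha_k|\le\varphi_k$ with $P_k+M\succeq 0$, then a sublevel set $L_V(c)\subseteq\Omega(\varphi)$ as the DA, made nonvacuous by \cref{prop:1}), which is precisely what you reconstruct. The termwise bound on the sign-indefinite $\dot\alpha_k$ and the symmetric regrouping of the double sum into the $i\le j$ LMIs with weights summing to one are both handled correctly.
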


As before, if the LMI condition in~\cref{lemma:Mozelli2009} is feasible, then any Lyapunov sublevelset ${L_V}(c)$ with some $c>0$ such that ${L_V}(c) \subseteq \Omega(\varphi)$ is a DA. As $\varphi_i$ approaches zero for $i\in {\cal I}_r$, the associated LMI condition in~\cref{lemma:Mozelli2009} become less conservative. However, this reduction in conservatism comes at a cost: the set $\Omega(\varphi )$ contracts towards the origin, leading to a corresponding reduction in the DA. This situation shows a trade-off between the level of conservatism in the stability conditions and the volume of the DA. In essence, achieving less conservative conditions results in a more restricted DA. A natural question arising here is if the conservatism can be completely eliminated as $\varphi_i$ approaches zero for $i\in {\cal I}_r$. The answer to this question is, in fact, negative, as demonstrated by the subsequent result.
\begin{theorem}[Fundamental limitation]\label{thm:conservatism}
Consider a nonlinear function~\eqref{nonlinear-system} that is locally exponentially stable around the origin and the corresponding fuzzy system~\eqref{fuzzy-system}. Suppose that its MFs are continuously differentiable on $\cal X$.
Then, corresponding the LMI condition of~\cref{lemma:Mozelli2009} may be infeasible for any $\varphi_i,i\in {\cal I}_r$.
\end{theorem}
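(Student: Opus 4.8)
The plan is to establish the statement by exhibiting an explicit T--S fuzzy system that satisfies every hypothesis of the theorem---local exponential stability at the origin and continuously differentiable MFs on a compact $\cal X$---while making the LMIs in \eqref{eq:4} infeasible for \emph{every} choice of $\varphi_k > 0$. Since the assertion is existential ("may be infeasible"), a single counterexample suffices, and the construction will simultaneously make transparent \emph{why} the limitation is unavoidable.

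First I would extract from \eqref{eq:4} a necessary feasibility condition that is independent of $\varphi$. Putting $i = j$ in \eqref{eq:4} yields, for each $i \in {\cal I}_r$, the inequality $\sum_{k=1}^r \varphi_k (P_k + M) + A_i^T P_i + P_i A_i \prec 0$. Since $\varphi_k > 0$ and $P_k + M \succeq 0$ by the first line of \eqref{eq:4}, the sum $\sum_{k=1}^r \varphi_k(P_k + M)$ is positive semidefinite, so the displayed inequality forces $A_i^T P_i + P_i A_i \prec 0$ together with $P_i \succ 0$. Hence feasibility of \eqref{eq:4} for \emph{some} $\varphi$ implies that \emph{every} $A_i$ is Hurwitz; contrapositively, if even a single $A_i$ fails to be Hurwitz, then \eqref{eq:4} is infeasible for \emph{all} $\varphi_k > 0$.

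Second I would invoke the standard converse-Lyapunov characterization of local exponential stability: for $f \in C^1$, the origin of \eqref{nonlinear-system} is locally exponentially stable if and only if the Jacobian $\nabla f(0)$ is Hurwitz. For the fuzzy model, differentiating $f(x) = \sum_{i=1}^r \alpha_i(x) A_i x$ and evaluating at $x = 0$ gives $\nabla f(0) = \sum_{i=1}^r \alpha_i(0) A_i = A(\alpha(0))$, because the terms carrying $\nabla_x \alpha_i(0)$ are multiplied by $A_i x|_{x=0} = 0$. Thus local exponential stability constrains only the single convex combination $A(\alpha(0))$ and imposes nothing on the individual vertices $A_i$ at other membership values---this decoupling is the core of the obstruction.

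Finally I would assemble the counterexample, e.g.\ with $r = 2$, arbitrary $n$, and $\mathcal{X} = \{x \in {\mathbb R}^n : \|x\|_2 \le 1\}$: take the continuously differentiable MFs $\alpha_1(x) = 1 - \tfrac12\|x\|_2^2$ and $\alpha_2(x) = \tfrac12\|x\|_2^2$, which map $\cal X$ into $\Lambda_2$ and satisfy $\alpha(0) = \begin{bmatrix}1 & 0\end{bmatrix}^T$; take $A_1 = -I_n$ (Hurwitz) and $A_2 = I_n$ (not Hurwitz); and \emph{define} the nonlinear system by $f(x) := A(\alpha(x))x = (\|x\|_2^2 - 1)x$, so that \eqref{region-X} holds on $\cal X$ by construction. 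Then $A(\alpha(0)) = A_1 = -I_n$ is Hurwitz, so by the second step the origin is locally exponentially stable, whereas $A_2$ is not Hurwitz, so by the first step the LMIs of \cref{lemma:Mozelli2009} are infeasible for every $\varphi_k > 0$, which proves the theorem. The only step requiring genuine care is the second one---verifying that the membership-derivative contributions to $\nabla f(0)$ truly vanish, so that exponential stability at the origin is genuinely independent of the Hurwitzness of the off-origin vertices; the remaining checks (that the chosen $\alpha_i$ are $C^1$, lie in the simplex on $\cal X$, and reproduce $f$ exactly) are routine.
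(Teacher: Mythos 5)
Your proof is correct and follows essentially the same strategy as the paper's: exhibit a fuzzy model whose linearization at the origin, $A_0=\sum_{i}\alpha_i(0)A_i$, is Hurwitz (hence locally exponentially stable by Lyapunov's indirect method) while some vertex matrix is not Hurwitz, and observe that the LMIs of \cref{lemma:Mozelli2009} impose a Lyapunov inequality on that vertex for every choice of $\varphi$. The only cosmetic differences are that the paper extracts its necessary condition by summing the $(i,j)$ blocks against $\alpha_i(x)\alpha_j(x)$ to conclude that $A(\alpha(x))$ must be Hurwitz for all $x\in{\cal X}$ and then evaluates its scalar example $\dot x=(2\sin x-1)x$ at $x=\pi/2$, whereas you read off $A_i^TP_i+P_iA_i\prec 0$ directly from the diagonal blocks and use a different (equally valid) counterexample.
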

The proof is given in~Appendix~\ref{appendix:9}.
\cref{thm:conservatism} establishes that the FLF approach contains an intrinsic conservatism that cannot be fully mitigated. Before concluding this section, we note two primary limitations of the FLF approach: First, its applicability is restricted to cases where the MFs are continuously differentiable. Second, as established in~\cref{thm:conservatism}, the approach exhibits inherent conservatism that cannot be fully eliminated. These limitations serve to delineate the scope and efficacy of the FLF approach in stability analysis. In contrast, the local QLF approaches proposed in the subsequent sections can overcome these disadvantages inherent to the FLF methods.

\section{Main results}\label{sec:main-results}

In this section, a new idea is proposed based on QLFs.
First of all, let us rewrite the system~\eqref{fuzzy-system} into
\begin{align*}
\dot x(t) =& A(\alpha (t))x(t)\\
=& \sum\limits_{i = 1}^r {{\alpha _i}(x(t)){A_i}x(t)} \\
=& \sum\limits_{i = 1}^r {({\alpha _i}(x(t)) - {\alpha _i}(0)){A_i}x(t)}  + \underbrace {\sum\limits_{i = 1}^r {{\alpha _i}(0){A_i}} }_{ = :{A_0}}x(t)\\
=& \sum\limits_{i = 1}^r {({\alpha _i}(x(t)) - {\alpha _i}(0)){A_i}x(t)}  + {A_0}x(t),
\end{align*}
which can be seen as a decomposition of the original system, $\dot x(t) = A(\alpha (x(t)))x(t)$, into the sum of the nominal linear system ${A_0}x(t)$ and the perturbed term $\sum_{i = 1}^r {({\alpha _i}(x(t)) - {\alpha _i}(0)){A_i}x(t)}$. The error term is expressed in terms of the difference between the MF ${\alpha _i}(x(t))$ at the current state $x(t)$ and the MF at the origin ${\alpha _i}(0)$.

Consequently, it is reasonable to conjecture that as $x(t)\to 0$, the error term diminishes, leaving only the nominal linear term. In this context, the nominal system can be considered as the limiting case of the fuzzy system described by~\eqref{fuzzy-system} as the state $x(t)$ approaches the origin $0$. If the MFs are continuous over the set $\cal X$, then as $x(t)\to 0$, the difference $|\alpha_i(x(t))- \alpha_i(0)| $ converges to zero for all $i\in {\cal I}_r$. Consequently, the error term $\sum_{i = 1}^r {({\alpha _i}(x(t)) - {\alpha _i}(0)){A_i}x(t)}$ also tends toward zero.
\begin{lemma}
Suppose that the MFs are continuous on $\cal X$. Then, $\lim_{x \to 0} |{\alpha _i}(x) - {\alpha _i}(0)| = 0,\forall i \in {\cal I}_r$.
\end{lemma}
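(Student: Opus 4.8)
The plan is to recognize that this statement is essentially an unpacking of the definition of continuity at the point $x = 0$. Since $0 \in \mathcal{X}$ by assumption on the modelling region, and each $\alpha_i$ is continuous on $\mathcal{X}$, in particular $\alpha_i$ is continuous at the origin. I would therefore fix an arbitrary index $i \in \mathcal{I}_r$ and an arbitrary $\varepsilon > 0$, and invoke continuity of $\alpha_i$ at $0$ to produce a $\delta > 0$ such that every $x \in \mathcal{X}$ with $\|x\|_2 < \delta$ satisfies $|\alpha_i(x) - \alpha_i(0)| < \varepsilon$.

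Next I would observe that this $\varepsilon$--$\delta$ statement is, verbatim, the assertion that $\lim_{x \to 0} |\alpha_i(x) - \alpha_i(0)| = 0$ (the limit being understood along $x \in \mathcal{X}$, which is the only sense in which the argument of $\alpha_i$ is defined). Finally, since the index $i$ was arbitrary, the conclusion holds for all $i \in \mathcal{I}_r$, which completes the argument.

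There is no real obstacle here: the only point one must be a little careful about is that the limit is taken within the domain $\mathcal{X}$ on which the MFs (and hence the fuzzy representation) are defined, so that the conclusion is the natural one-sided/restricted limit rather than an unrestricted limit in $\mathbb{R}^n$. Since $\mathcal{X}$ contains the origin and the MFs are assumed continuous on all of $\mathcal{X}$, this causes no difficulty. One could optionally remark that continuity on the compact set $\mathcal{X}$ also yields uniform continuity, but this is not needed for the pointwise limit claimed in the lemma.
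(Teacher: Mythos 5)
Your proof is correct and matches the paper's treatment: the paper simply states that the lemma is a direct consequence of the continuity of the MFs, which is exactly the $\varepsilon$--$\delta$ unpacking you give. Your remark that the limit is taken within $\mathcal{X}$ is a fine (if unneeded) clarification, and nothing further is required.
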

This result is a direct consequence of the continuity of the MFs.
The following example shows the consequence when the continuity assumption is not met.
\begin{example}\label{eq:7}
Let us consider the MFs
\begin{align*}
{\alpha _1}(x) =& \left\{ {\begin{array}{*{20}{c}}
{\frac{1}{2},\quad {\rm{if}}{\mkern 1mu} {\mkern 1mu} \,\,x = 0}\\
{\frac{1}{2}\sin \left( {\frac{\pi }{{2x}}} \right) + \frac{1}{2},\quad {\rm{if}}{\mkern 1mu} {\mkern 1mu} \,\,x \in [-1,1]\backslash \{ 0\}}
\end{array}} \right.\\
{\alpha _2}(x) =& 1 - {\alpha _1}(x)
\end{align*}
with ${\cal X} = [-1,1]$. The functions ${\alpha _1}$ is discontinuous at $x = 0$.
The difference is given by
\begin{align*}
|{\alpha _1}(x) - {\alpha _1}(0)| = {\alpha _1}(x) = \frac{1}{2}\sin \left( {\frac{\pi }{{2x}}} \right),\quad x \in [-1,1]\backslash \{ 0\},
\end{align*}
which does not converge to $0$ as $x \to 0$.
Therefore, ${\cal H}(b)$ contains the origin, but does not contain any ball centered at the origin.
\end{example}

In this section, we consider the QLF candidate $V(x) = x^T P x$, where $P \succ 0$. Its time-derivative along the solution of~\eqref{fuzzy-system} is given as
\begin{align}
\dot V(x(t)) =& \sum\limits_{i = 1}^r {({\alpha _i}(x(t)) - {\alpha _i}(0))x{{(t)}^T}(A_i^TP + P{A_i})} x(t)\nonumber\\
&+ x{(t)^T}(A_0^TP + P{A_0})x(t)\label{eq:3}
\end{align}
By Lyapunov theorem~\cite{khalil2002nonlinear}, the asymptotic stability is guaranteed if $V(x(t)) >0,\forall x(t) \in {\cal X}\backslash \{ 0\} $ and $\dot V(x(t)) < 0,\forall x(t) \in {\cal X}\backslash \{ 0\}$, which is guaranteed if the following Lyapunov matrix inequalities hold:
\begin{align}
&P \succ 0,\nonumber\\
&\sum\limits_{i = 1}^r {({\alpha _i}(x)) - {\alpha _i}(0))(A_i^TP + P{A_i})}  + A_0^TP + P{A_0} \prec 0\label{eq:1}
\end{align}

In what follows, we will derive three sufficient LMI conditions based on~\eqref{eq:1}.
The proposed conditions can be categorized into two classes depending on how to treat the difference term $({\alpha _i}(x)) - {\alpha _i}(0))$, which are summarized as follows:
\begin{enumerate}
\item Polytopic bound (hyper rectangle bound):
\begin{align}
|{\alpha _i}(x)) - {\alpha _i}(0)| \le {b_i},\quad i \in {\cal I}_r\label{eq:2}
\end{align}
where $b_i\in [0,1], i\in {\cal I}_r$.
Equivalently, it can be written as
\begin{align*}
\left[ {\begin{array}{*{20}{c}}
{{\alpha _1}(x) - {\alpha _1}(0)}\\
 \vdots \\
{{\alpha _r}(x) - {\alpha _r}(0)}
\end{array}} \right]=:& \alpha (x) - \alpha (0)\in {\cal D}(b),
\end{align*}
where ${\cal D}(b)$ is a hyper rectangle defined as
\begin{align}
{\cal D}(b): = \{ \delta  \in {\mathbb R}^r:|{\delta _i}| \le b_i,i \in {\cal I}_r\}.\label{eq:rectangle}
\end{align}

\item Ball bound: for any $\eta>0$,
\begin{align*}
\sum_{i = 1}^r {{{({\alpha _i}(z) - {\alpha _i}(0))}^2}}  \le \eta.
\end{align*}
\end{enumerate}

Another important property we will use is the following null effect:
\begin{align}
\sum_{i = 1}^r {({\alpha _i}(x(t)) - {\alpha _i}(0))}  = 0\label{eq:null-property}
\end{align}
which is similar to the null effect of the derivatives of the MFs in~\eqref{eq:slack1}.
Such a geometric information can be used in the sequel in order to reduce conservatism of LMI stability conditions.

\subsection{Local quadratic stability based on the polytopic bound}

In this subsection, we introduce two LMI conditions. The first LMI condition employs all vertices of the polytopic bound as described in~\eqref{eq:2}. The second LMI condition utilizes an over-bounding technique, akin to the method outlined in~\cref{lemma:Mozelli2009}. We commence with the first setting. Specifically, by evaluating the condition at all vertices of the polytope, we can derive the following finite-dimensional LMI condition.
\begin{theorem}[Local quadratic stability~I]\label{thm:stability1}
Suppose that the MFs are continuous.
Moreover, suppose that there exist constants $b_i \in [0,1],i\in {\cal I}_r$, symmetric matrices $P = P^T \in {\mathbb R}^{n\times n}$ and $M=M^T \in {\mathbb R}^{n\times n}$ such that the following LMIs hold:
\begin{align}
&P\succ 0,\nonumber\\
&A_0^TP + P{A_0} + \sum\limits_{i = 1}^r {{\delta _i}(A_i^TP + P{A_i} + M)}  \prec 0,\nonumber\\
&({\delta _1},{\delta _2}, \ldots ,{\delta _r}) \in \{  - {b_1},{b_1}\}  \times \{  - {b_2},{b_2}\}  \times  \cdots  \times \{  - {b_r},{b_r}\}\label{eq:thm1:1}
\end{align}
Then, the fuzzy system~\eqref{fuzzy-system} is locally asymptotically stable.
\end{theorem}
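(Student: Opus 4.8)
The plan is to verify the hypotheses of the Lyapunov direct method on a suitable neighborhood of the origin, exactly along the lines of the discussion following~\eqref{eq:1}. I would take the quadratic candidate $V(x)=x^TP x$, which is positive definite and radially well-behaved near the origin because $P\succ0$, and work with its derivative along~\eqref{fuzzy-system} in the form~\eqref{eq:3}.

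First I would insert the slack matrix $M$ using the null property~\eqref{eq:null-property}: since $\sum_{i=1}^r(\alpha_i(x)-\alpha_i(0))=0$, adding $\bigl(\sum_{i=1}^r(\alpha_i(x)-\alpha_i(0))\bigr)M=0$ inside the bracket in~\eqref{eq:3} gives
\[
\dot V(x)=x^T\!\left[A_0^TP+PA_0+\sum_{i=1}^r(\alpha_i(x)-\alpha_i(0))(A_i^TP+PA_i+M)\right]\!x .
\]
Writing $\delta_i:=\alpha_i(x)-\alpha_i(0)$, the matrix in brackets is affine in $\delta=(\delta_1,\dots,\delta_r)$, so over the hyper-rectangle $\mathcal D(b)$ of~\eqref{eq:rectangle} it is a convex combination of its values at the $2^r$ vertices $(\delta_1,\dots,\delta_r)\in\{-b_1,b_1\}\times\cdots\times\{-b_r,b_r\}$. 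Since negative definiteness is preserved under convex combinations, the vertex LMIs~\eqref{eq:thm1:1} imply $A_0^TP+PA_0+\sum_{i=1}^r\delta_i(A_i^TP+PA_i+M)\prec0$ for every $\delta\in\mathcal D(b)$, and hence $\dot V(x)<0$ whenever $x\neq0$ and $\alpha(x)-\alpha(0)\in\mathcal D(b)$.

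Next I would localize. By continuity of the MFs (the lemma stated just above) we have $\lim_{x\to0}|\alpha_i(x)-\alpha_i(0)|=0$ for each $i\in\mathcal I_r$, so $\mathcal H(b):=\{x\in\mathcal X:|\alpha_i(x)-\alpha_i(0)|\le b_i,\ i\in\mathcal I_r\}$ contains a ball $\mathcal B(\varepsilon)$ centered at the origin; on this ball $\alpha(x)-\alpha(0)\in\mathcal D(b)$, so combining with the previous step $\dot V(x)<0$ on $\mathcal B(\varepsilon)\setminus\{0\}$. With $V$ positive definite and $V(0)=0$, the Lyapunov direct method~\cite{khalil2002nonlinear} then gives local asymptotic stability of the origin of~\eqref{fuzzy-system}, and any sublevel set $L_V(c)$ with $L_V(c)\subseteq\mathcal H(b)$ serves as an estimate of the domain of attraction.

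The affine-in-$\delta$/convex-hull reduction and the appeal to Lyapunov's theorem are routine. The point needing care is the localization step: continuity of the MFs guarantees that $\mathcal H(b)$ contains a ball around the origin when the $b_i$ are strictly positive (if some $b_i=0$ one additionally needs $\alpha_i$ to be locally constant at the origin), and one must also use the standing assumption that $\mathcal X$ contains a neighborhood of the origin so that the fuzzy representation~\eqref{region-X} is valid there. I expect this to be the main obstacle, and it is dispatched by the continuity hypothesis together with reasoning analogous to that behind~\cref{prop:1}, applied here to $\alpha(x)-\alpha(0)$ rather than to the derivatives of the MFs.
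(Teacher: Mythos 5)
Your proof is correct and follows essentially the same route as the paper's: the vertex LMIs extend by convexity to the whole hyper-rectangle $\mathcal D(b)$, the slack term $M$ is absorbed via the null property~\eqref{eq:null-property}, and $\dot V<0$ on $\mathcal H(b)\setminus\{0\}$ yields local asymptotic stability. Your explicit treatment of the localization step (that $\mathcal H(b)$ contains a ball around the origin by continuity, with the caveat for $b_i=0$) is in fact slightly more careful than the paper's proof, which relegates that point to a separate result (\cref{prop:2}).
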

The proof is given in Appendix~\ref{appendix:4}.
A notable difference between the condition in~\cref{thm:stability1} and those in fuzzy Lypaunov approaches is that the condition in~\cref{thm:stability1} uses information on the MFs, i.e., $\alpha (0)$, while the conditions derived from fuzzy Lypaunov approaches (e.g.,~\cref{lemma:Tanaka2003},~\cref{lemma:Mozelli2009}) do not depend on the MFs.
If the condition in~\cref{thm:stability1} is feasible, then it implies that there exists $P = P^T \succ 0$ such that
\begin{align*}
A{(\alpha (x))^T}P + PA(\alpha (x)) \prec 0,\quad \forall x \in {\cal H}(b),
\end{align*}
where
\begin{align*}
{\cal H}(b):=\{ x \in {\cal X}:|{\alpha _i}(x) - {\alpha _i}(0)| \le {b_i},i \in {\cal I}_r\},
\end{align*}
$b: = {\left[ {\begin{array}{*{20}{c}}
{{b_1}}&{{b_2}}& \cdots &{{b_r}}
\end{array}} \right]^T} \in {\mathbb R}^r$, and ${b_i} \in [0,1]$ for all $i\in {\cal I}_r$,
which corresponds to $\Omega(\varphi)$ in the FLF approach, while has several differences.
Moreover any Lyapunov sublevelset such that ${L_V}(c) \subseteq {\cal H}(b)$ with some $c>0$ is a DA.
Similar to the FLF methods, the proposed conditions are only useful when ${\cal H}(b)$ contains some neighborhood of the origin. In the sequel, we prove that ${\cal H}(b)$ includes neighborhoods of the origin under the continuity assumption.

Similar to the constraints associated with FLFs, the applicability of the proposed conditions hinges on whether the set ${\cal H}(b)$ includes a neighborhood of the origin. In what follows, we establish that ${\cal H}(b)$ does, in fact, encompass neighborhoods of the origin, provided that the continuity assumption holds. This validation serves to affirm the applicability and effectiveness of the proposed stability conditions.
\begin{theorem}\label{prop:2}
Suppose that the MFs are continuous on $\cal X$. Then, there exists a ball ${\cal B}(c)$ with some $c>0$ such that ${\cal B}(c) \subseteq {\cal H}(b)$ for any $b_i\in [0,1],i\in {\cal I}_r$.
\end{theorem}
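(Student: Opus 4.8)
The plan is to obtain the result as a direct application of the continuity of the membership functions at the origin, following the same structure as (but in a simpler form than) the proof of \cref{prop:1}: there one needed a chain-rule estimate to control $|\nabla_x\alpha_i(x)^T A(\alpha(x))x|$, whereas here we only need the elementary fact that $\alpha_i(x)\to\alpha_i(0)$ as $x\to 0$. Throughout, as is implicit in the paper's standing setup, I would use that the origin lies in the interior of ${\cal X}$, so that some ball around $0$ is contained in ${\cal X}$.

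Concretely, fix any $b=[b_1\ \cdots\ b_r]^T$ with $b_i>0$ for $i\in{\cal I}_r$ (the regime in which ${\cal H}(b)$ is not artificially trivial). First I would invoke continuity of each $\alpha_i$ at $0$: given $b_i>0$, there exists $\varepsilon_i>0$ such that $x\in{\cal X}$ and ${\left\|x\right\|_2}\le\varepsilon_i$ imply $|\alpha_i(x)-\alpha_i(0)|\le b_i$. Next, using $0\in\mathrm{int}({\cal X})$, choose $\varepsilon_0>0$ with ${\cal B}(\varepsilon_0)\subseteq{\cal X}$. Then set $c:=\min\{\varepsilon_0,\varepsilon_1,\ldots,\varepsilon_r\}$, which is strictly positive because it is the minimum of finitely many positive numbers — this is the only place the finiteness of the rule set ${\cal I}_r$ enters. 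For any $x\in{\cal B}(c)$ we then simultaneously have $x\in{\cal X}$ and $|\alpha_i(x)-\alpha_i(0)|\le b_i$ for every $i\in{\cal I}_r$, i.e.\ $x\in{\cal H}(b)$; hence ${\cal B}(c)\subseteq{\cal H}(b)$, as claimed. If one wishes, $c$ can be written explicitly in terms of a modulus of continuity of $\alpha$.

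I do not expect a genuine obstacle: the statement is essentially a restatement of continuity, and the proof is a routine $\varepsilon$-argument. The only points requiring a word of care are (i) that it is continuity — not mere differentiability, nor boundedness of the derivatives — that forces $|\alpha_i(x)-\alpha_i(0)|$ to vanish at $0$; this is precisely the phenomenon that fails in \cref{eq:7}, where $\alpha_1$ is discontinuous at the origin and ${\cal H}(b)$ consequently contains $0$ but no ball around it; and (ii) that the extracted radius $c$ must additionally be small enough for ${\cal B}(c)$ to sit inside ${\cal X}$. It is also worth remarking, for consistency with the trade-off discussed after \cref{lemma:Mozelli2009}, that as $b_i\downarrow 0$ the admissible radius $c$ may shrink to $0$, mirroring the contraction of $\Omega(\varphi)$ as $\varphi_i\downarrow 0$, and that the edge case $b_i=0$ is genuinely excluded, since then $\{x:\alpha_i(x)=\alpha_i(0)\}$ need not contain any neighborhood of the origin.
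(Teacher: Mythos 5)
Your proposal is correct and follows essentially the same argument as the paper's proof: continuity of each $\alpha_i$ at the origin yields a radius $c_i>0$ with $|\alpha_i(x)-\alpha_i(0)|\le b_i$ on ${\cal B}(c_i)$, and taking the minimum over the finitely many rules gives the desired ball. Your additional remarks — that the ball must also lie in ${\cal X}$ and that the edge case $b_i=0$ is genuinely problematic despite the theorem's stated range $b_i\in[0,1]$ — are valid refinements the paper's proof glosses over, but they do not change the route.
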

\begin{proof}
From the definition of continuous functions, for any given $b_i \in [0,1],i \in {\cal I}_r$, we can find a ball ${\cal B}(c_i)$ with some $c_i >0$ such that $|{\alpha _i}(x) - {\alpha _i}(0)| \le {b_i},\forall x \in {\cal B}({c_i}),i \in {\cal I}_r$. Then, letting $c = {\min _{i \in {\cal I}_r }}{c_i}$ leads to ${\cal B}(c) \subseteq {\cal H}(b)$. This completes the proof.
\end{proof}

The following example shows the consequence when the continuity assumption is not met.
\begin{example}
Let us consider the MFs in~\cref{eq:7} again.
The function ${\alpha _1}$ is discontinuous at $x = 0$.
The difference $|{\alpha _1}(x) - {\alpha _1}(0)| = {\alpha _1}(x) = \frac{1}{2}\sin \left( {\frac{\pi }{{2x}}} \right), x \in [-1,1]\backslash \{ 0 \}$ does not converge to $0$ as $x \to 0$. Therefore, ${\cal H}(b)$ does not contain any ball centered at the origin.
\end{example}

Indeed, the continuity assumption is considerably less stringent than the requirement for continuous differentiability over the set $\cal X$ imposed by the FLF approach. This relaxed condition enhances the applicability and flexibility of the proposed stability analysis.
\begin{example}
Let us consider the MFs in~\cref{ex:5} again. The MFs are continuous in $\cal X$ and differentiable in $\cal X$. However, the derivatives are not continuous. As a consequence, $\Omega(\varphi )$ does not contain any ball centered at the origin. On the other hand, the difference is
\begin{align*}
&|{\alpha _1}(x) - {\alpha _1}(0)| = {\alpha _1}(x) = {x^2}\left( {\frac{1}{2}\sin \left( {\frac{\pi }{{2x}}} \right) + \frac{1}{2}} \right)
\end{align*}
for $x \in [-1,1]$, which vanishes as $x \to 0$. Therefore, one can prove that for any $b_i\in [0,1],i\in {\cal I}_r$, ${\cal H}(b)$ contains a ball centered at the origin.
\end{example}

\begin{example}\label{ex:2}
Let us consider the fuzzy system~\eqref{fuzzy-system} with
\begin{align*}
{A_1} = \left[ {\begin{array}{*{20}{c}}
{ - 5}&{ - 4}\\
{ - 1}&a
\end{array}} \right],\quad {A_2} = \left[ {\begin{array}{*{20}{c}}
{ - 4}&{ - 4}\\
{\frac{1}{5}(3b - 2)}&{\frac{1}{5}(3a - 4)}
\end{array}} \right],\\
A_3 = \left[ {\begin{array}{*{20}{c}}
{ - 3}&{ - 4}\\
{\frac{1}{5}(2b - 3)}&{\frac{1}{5}(2a - 6)}
\end{array}} \right],\quad {A_4} = \left[ {\begin{array}{*{20}{c}}
{ - 2}&{ - 4}\\
b&{ - 2}
\end{array}} \right],
\end{align*}
where ${\cal X} = \{ x \in {\mathbb R}^2:|{x_1}| \le \pi /2,|{x_2}| \le \pi /2\}$, and the MFs
\begin{align*}
{\alpha _1}(x) =& {\mu _1}(x){\mu _2}(x),\quad {\alpha _2}(x) = {\mu _1}(x){\beta _2}(x),\\
{\alpha _3}(x) =& {\beta _1}(x){\mu _2}(x),\quad {\alpha _4}(x) = {\beta _1}(x){\beta _2}(x),
\end{align*}
where ${\mu _i}(x) = (1 - \sin ({x_i}))/2, i \in \{ 1,2\}$ and ${\beta _i}(x) = 1 - {\alpha _i}(x), i \in \{ 1,2\}$.
In this case, ${\mu _i}(0) = 1/2, i \in \{ 1,2\}$, ${\beta _i}(0) = 1/2$, and
\begin{align*}
{\alpha _1}(0) =& 1/4,\quad {\alpha _2}(0) = 1/4,\\
{\alpha _3}(0) =& 1/4,\quad {\alpha _4}(0) = 1/4,
\end{align*}
The stability has been checked using~\cref{lemma:Mozelli2009} and~\cref{thm:stability1} for several values of pairs $(a,b)\in [-10,0]\times [0,200]$. For~\cref{lemma:Mozelli2009}, the derivative bounds are set to be $\varphi_i=0.85, i\in {\cal I}_r$. For~\cref{thm:stability1}, we consider the two settings, $b_i=0.2, i\in {\cal I}_r$ and $b_i=0.1, i\in {\cal I}_r$.

\cref{fig:1}(a) depicts a set of feasible points for several values of pairs $(a,b)\in [-10,0]\times [0,200]$ for~\cref{lemma:Mozelli2009} with $\varphi_i=0.85, i\in {\cal I}_r$ and for~\cref{thm:stability1} with $b_i=0.2, i\in {\cal I}_r$. It reveals that~\cref{lemma:Mozelli2009} and~\cref{thm:stability1} cover different regions, and one does not include the other region. Similarly, \cref{fig:1}(b) depicts the feasible region for~\cref{lemma:Mozelli2009} with $\varphi_i=0.85, i\in {\cal I}_r$ and for~\cref{thm:stability1} with $b_i=0.1, i\in {\cal I}_r$. In this case, the region of~\cref{thm:stability1} includes the region of~\cref{lemma:Mozelli2009}. We note that these results do not theoretically demonstrate that~\cref{thm:stability1} is less conservative than~\cref{lemma:Mozelli2009} because conservatism depends on the hyperparameters $\varphi_i, b_i, i\in {\cal I}_r$. Later, we will provide more rigorous and theoretical comparative analysis of conservatism of several approaches.

Next, let us fix $(a,b) = (-8,100)$, and compare the largest possible DAs estimated by~\cref{lemma:Mozelli2009} and~\cref{thm:stability1} in~\cref{fig:3}.
\cref{fig:3} shows an estimation of a DA using~\cref{lemma:Mozelli2009}. To obtain it, we computed the largest $\varphi^*  = {\varphi _i},i \in {\cal I}_r$ such that~\cref{lemma:Mozelli2009} is feasible, which is approximately $\varphi^* = 4.0789$. Next, we computed the largest $c>0$ such that ${L_V}(c) \subseteq \Omega (\varphi )$ with $\varphi  = {\left[ {\begin{array}{*{20}{c}}
{{\varphi ^*}}& \cdots &{{\varphi ^*}}
\end{array}} \right]^T}$. The boundary of such a sublevel set ${L_V}(c)$ is highlighted with the red solid line. Moreover, the region $\Omega (\varphi )$ is highlighted with the grey color. We note that the region $\Omega (\varphi )$ depends on the system matrices $A_i, i\in {\cal I}_r$. Therefore, $\Omega (\varphi )$ may change with across different pairs of $(a,b)$.

\cref{fig:3}(b) depicts an estimation of a DA using~\cref{thm:stability1}. To calculate it, we computed the largest $b^*  = {b_i} \in [0,1],i \in {\cal I}_r$ such that~\cref{thm:stability1} is feasible, which is approximately $b^* = 0.2603$. Next, we computed the largest $c>0$ such that ${L_V}(c) \subseteq {\cal H} (B)$ with $b = {\left[ {\begin{array}{*{20}{c}}
{{b^*}}& \cdots &{{b^*}} \end{array}} \right]^T}$. The boundary of such a sublevel set ${L_V}(c)$ is highlighted with the red solid line, and ${\cal H} (b)$ is highlighted with the grey color. Contrary to $\Omega (\varphi )$, ${\cal H} (b)$ does not depend on the system matrices. As can be seen from the figures, the DA estimation from~\cref{thm:stability1} is larger than that from~\cref{lemma:Mozelli2009}.
\begin{figure}
\centering\subfigure[$\varphi_i=0.85$,
$b_i=0.2,i\in {\cal I}_r$.]{\includegraphics[width=0.3\textwidth]{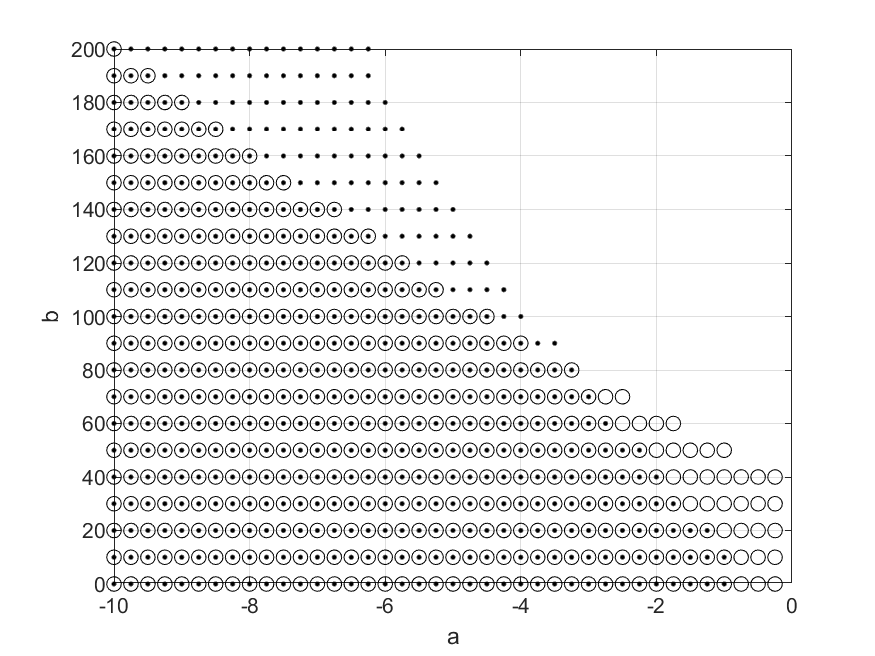}}
\centering\subfigure[$\varphi_i=0.85$, $b_i=0.1,i\in {\cal I}_r$.]{\includegraphics[width=0.3\textwidth]{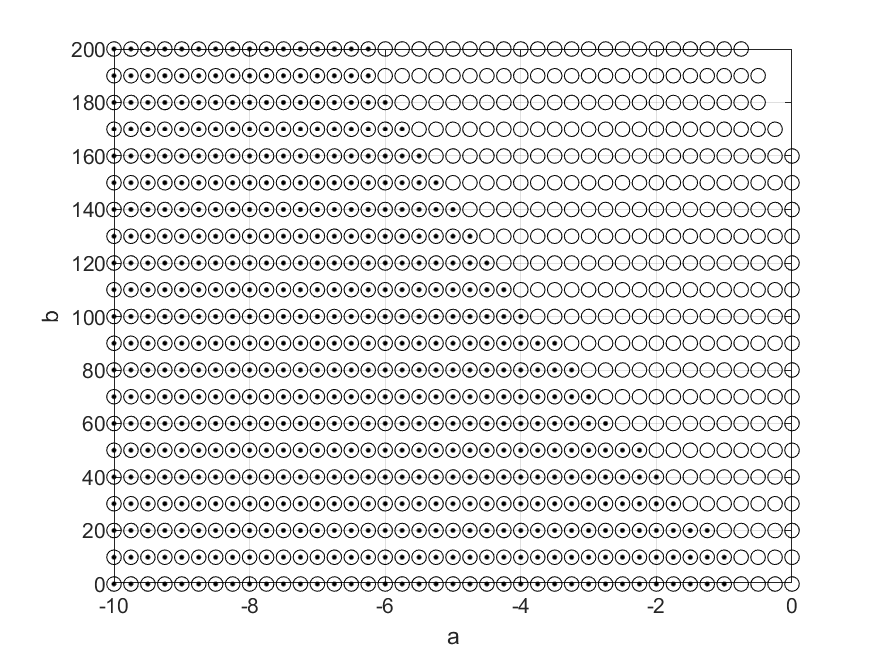}}
\caption{\cref{ex:2}: Feasible region for several values of $(a,b)\in [-10,0]\times [0,200]$.
`$\bullet$' indicates a feasible point using~\cref{lemma:Mozelli2009} with $\varphi_i=0.85, i\in {\cal I}_r$.
`$\circ$' indicates a feasible point using~\cref{thm:stability1}.}\label{fig:1}
\end{figure}

\begin{figure}
\centering\subfigure[The DA estimate from~\cref{lemma:Mozelli2009}. The grey region indicates $\Omega (\varphi )$ with $(a,b) =(-8,100)$ and $\varphi_i=4.0789,i\in {\cal I}_r$. The red line indicates the boundary of the Lyapunov sublevel set ${L_V}(c) \subseteq \Omega (\varphi )$.]{\includegraphics[width=0.3\textwidth]{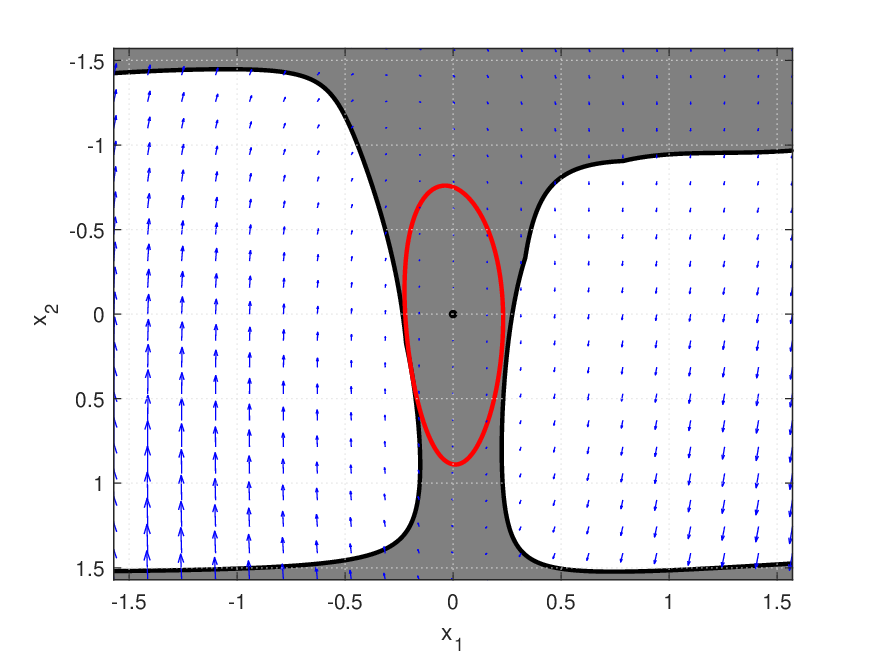}}
\centering\subfigure[The DA estimate from~\cref{thm:stability1}. The grey region indicates ${\cal H}(b)$ with $(a,b) =(-8,100)$ and $b_i=0.2603,i\in {\cal I}_r$. The red line indicates the boundary of the Lyapunov sublevel set ${L_V}(c) \subseteq {\cal H} (b)$.]{\includegraphics[width=0.3\textwidth]{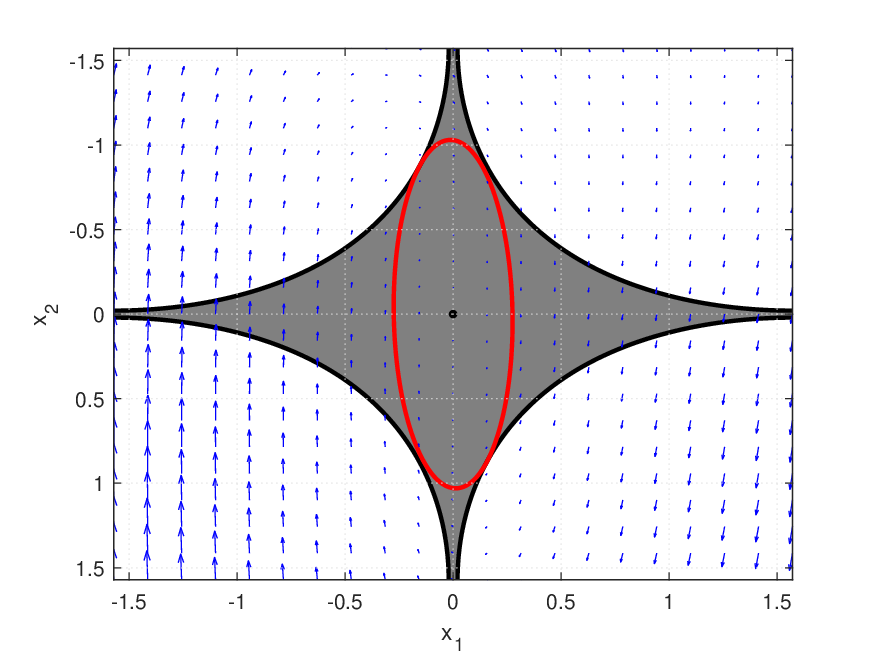}}
\caption{\cref{ex:2}: The DAs estimated by (a)~\cref{lemma:Mozelli2009} and (b)~\cref{thm:stability1} with $(a,b) = (-8,100)$.
The red line indicates the boundaries of the Lyapunov sublevel sets.
 }\label{fig:3}
\end{figure}

We note that through this single example, the conservativeness cannot be fairly compared, while it may give some insights.
In the remaining parts of the paper, we provide some theoretical analysis of conservatism of different LMI conditions.
\end{example}

The condition in~\cref{thm:stability1} necessitates solving the LMIs at all possible vertices of the hyper-rectangle described by~\eqref{eq:rectangle}. While this approach is thorough, this approach can be cumbersome to implement and computationally inefficient. A more efficient, albeit potentially more conservative, LMI condition can be achieved through the use of over-bounding techniques. These techniques obviate the need to check LMIs at every vertex of the hyper-rectangle in~\eqref{eq:rectangle}, thereby reducing the numerical and implementational complexities of~\cref{thm:stability1}.
\begin{theorem}[Local quadratic stability~II]\label{thm:stability2}
Suppose that the MFs are continuous.
Moreover, suppose that there exist constants $b_i \in [0,1],i\in {\cal I}_r$, symmetric matrices $P=P^T \in {\mathbb R}^{n\times n}$ and $M=M^T \in {\mathbb R}^{n\times n}$ such that the following LMIs hold:
\begin{align}
&P \succ 0,\nonumber\\
&A_i^TP + P{A_i} + M \prec 0,\quad i \in {\cal I}_r\label{eq:thm2:1}\\
&A_0^TP + P{A_0} - \sum\limits_{i = 1}^r {{b_i}(A_i^TP + P{A_i} + M)}  \prec 0\label{eq:thm2:2}
\end{align}
Then, the fuzzy system~\eqref{fuzzy-system} is locally asymptotically stable.
\end{theorem}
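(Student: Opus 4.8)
The plan is to reduce the claim to a pointwise negative-definiteness statement over the hyper-rectangle $\mathcal{D}(b)$ defined in~\eqref{eq:rectangle}, using the over-bounding idea already familiar from~\cref{lemma:Mozelli2009}. First I would start from the decomposition~\eqref{eq:3} of $\dot V$ along the trajectories of~\eqref{fuzzy-system} and exploit the null identity~\eqref{eq:null-property}, $\sum_{i=1}^r(\alpha_i(x)-\alpha_i(0))=0$, to inject the free slack term $\sum_{i=1}^r(\alpha_i(x)-\alpha_i(0))\,x^TMx=0$. This rewrites
\[
\dot V(x)=x^T\!\left(A_0^TP+PA_0+\sum_{i=1}^r(\alpha_i(x)-\alpha_i(0))(A_i^TP+PA_i+M)\right)\!x=:x^T\Xi(\alpha(x)-\alpha(0))\,x .
\]
Since the MFs are continuous, \cref{prop:2} guarantees that $\alpha(x)-\alpha(0)\in\mathcal{D}(b)$ for all $x$ in some ball $\mathcal{B}(c)\subseteq\mathcal{H}(b)$ around the origin, so it suffices to prove that $\Xi(\delta)\prec 0$ for every $\delta\in\mathcal{D}(b)$.

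The core step is the over-bound. Write $Q_i:=A_i^TP+PA_i+M$; the LMIs~\eqref{eq:thm2:1} give $Q_i\prec 0$ for each $i$. For any $\delta_i\in[-b_i,b_i]$ we have $\delta_i+b_i\ge 0$, hence $(\delta_i+b_i)Q_i\preceq 0$, i.e. $\delta_iQ_i\preceq -b_iQ_i$. Summing over $i$ and adding $A_0^TP+PA_0$ yields, for every $\delta\in\mathcal{D}(b)$,
\[
\Xi(\delta)\preceq A_0^TP+PA_0-\sum_{i=1}^r b_i\,Q_i=A_0^TP+PA_0-\sum_{i=1}^r b_i(A_i^TP+PA_i+M),
\]
and the right-hand side is precisely the matrix in~\eqref{eq:thm2:2}, hence $\prec 0$. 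Therefore $\Xi(\delta)\prec 0$ on all of $\mathcal{D}(b)$. (Equivalently, this shows the hypotheses of~\cref{thm:stability1} hold at every vertex of $\mathcal{D}(b)$, so the conclusion could also be obtained by invoking~\cref{thm:stability1} directly.)

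Combining the two observations, $\dot V(x)=x^T\Xi(\alpha(x)-\alpha(0))x<0$ for every nonzero $x\in\mathcal{H}(b)$ — in particular on $\mathcal{B}(c)\setminus\{0\}$ — while $P\succ 0$ gives $V(x)>0$ there. The Lyapunov direct method~\cite{khalil2002nonlinear} then yields local asymptotic stability of the origin, with any Lyapunov sublevel set ${L_V}(c')\subseteq\mathcal{H}(b)$ providing an estimate of the domain of attraction.

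I expect the only delicate points to be bookkeeping the directions of the semidefinite inequalities — specifically that it is $Q_i\prec 0$ from~\eqref{eq:thm2:1} (not merely $\preceq 0$) that makes $\delta_iQ_i\preceq -b_iQ_i$ legitimate for the whole interval $[-b_i,b_i]$ — and remembering to invoke the continuity-based \cref{prop:2}, without which the pointwise negativity of $\dot V$ over $\mathcal{H}(b)$ would say nothing near the equilibrium.
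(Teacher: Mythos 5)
Your proof is correct and follows essentially the same route as the paper: the paper likewise uses $A_i^TP+PA_i+M\prec 0$ to over-bound the signed terms so that~\eqref{eq:thm2:2} implies~\eqref{eq:thm1:1} at every vertex of the hyper-rectangle, and then falls back on the proof of~\cref{thm:stability1}. The only (inconsequential) quibble is your closing remark: $A_i^TP+PA_i+M\preceq 0$ would already suffice for the step $\delta_i Q_i\preceq -b_iQ_i$, since the strictness of the final bound is supplied by~\eqref{eq:thm2:2}.
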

\begin{proof}
Suppose that the LMIs are feasible. Then, due to~\eqref{eq:thm2:1}, one can conclude that~\eqref{eq:thm2:2} holds for $b_i, i\in {\cal I}_r$ replaced with all possible combinations $\pm b_i, i\in {\cal I}_r$. Therefore, the LMI in~\eqref{eq:thm1:1} hold. The remaining parts of the proof follow the same lines as in the proof of~\cref{thm:stability1}. This completes the proof.
\end{proof}

\begin{example}\label{ex:3}
Let us consider the system in~\cref{ex:2} again.
The stability has been checked using~\cref{lemma:Mozelli2009} and~\cref{thm:stability2} for several values of pairs $(a,b)\in [-10,0]\times [0,200]$. We consider $\varphi_i=0.85, i\in {\cal I}_r$ for~\cref{lemma:Mozelli2009} and $b_i=0.1, i\in {\cal I}_r$ for~\cref{thm:stability2}. \cref{fig:4} depicts a set of feasible points for several values of pairs $(a,b)\in [-10,0]\times [0,200]$, which reveal that the region of~\cref{thm:stability1} includes the region of~\cref{lemma:Mozelli2009}.
\begin{figure}
\centering
\includegraphics[width=0.3\textwidth]{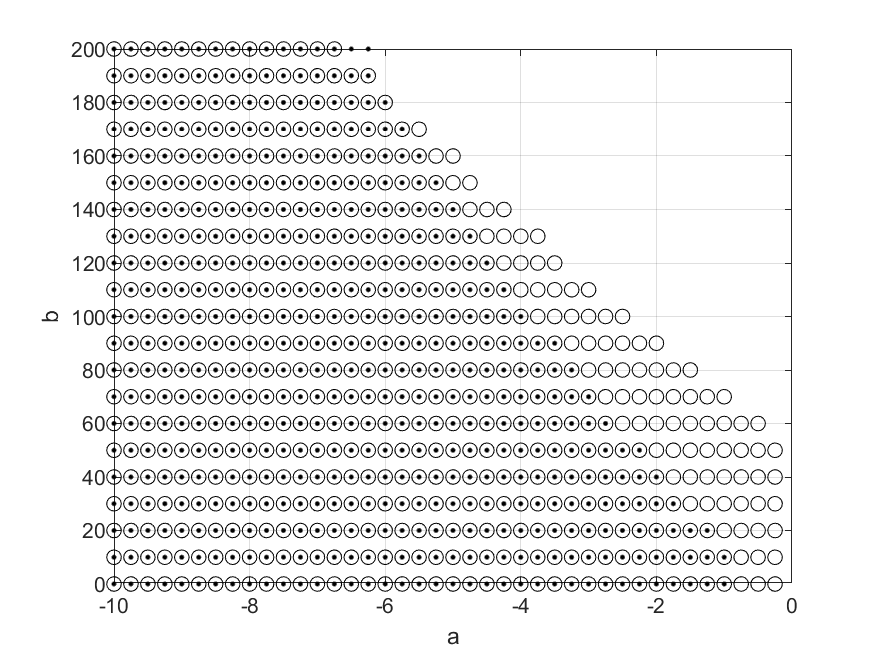}
\caption{\cref{ex:3}: Feasible region for several values of $(a,b)\in [-10,0]\times [0,200]$.
`$\bullet$' indicates the feasible point of~\cref{lemma:Mozelli2009} with $\varphi_i=0.85, i\in {\cal I}_r$.
`$\circ$' indicates the feasible point of~\cref{thm:stability2} with $b_i=0.1, i\in {\cal I}_r$.}\label{fig:4}
\end{figure}
\end{example}

\subsection{Local quadratic stability based on the ball bound}

Until now, we have explored two versions of LMI stability conditions, both of which are based on the polytopic bound ${\cal H}(b)$ for the difference $\alpha(x) - \alpha(0)$. We will now introduce a third version that is based on a ball bound for the difference $\alpha(x) - \alpha(0)$. This alternative approach provides another avenue for stability analysis, offering a different set of trade-offs in terms of conservatism and computational complexity.
\begin{theorem}[Local quadratic stability~III]\label{thm:stability3}
Suppose that the MFs are continuous.
Moreover, suppose that there exist a constant $\eta > 0$, symmetric matrices $P=P^T \in {\mathbb R}^{n\times n}$, $G=G^T\in {\mathbb R}^{n\times n}$, and a matrix $M\in {\mathbb R}^{n\times n}$ such that the following LMIs hold:
\begin{align}
&P \succ 0,\quad \left[ {\begin{array}{*{20}{c}}
{\eta G + P{A_0} + A_0^TP}&{{\Gamma ^T}}\\
\Gamma &{ - {I_r} \otimes G}
\end{array}} \right] \prec 0,\label{eq:8}
\end{align}
where ${\Gamma ^T}: = \left[ {\begin{array}{*{20}{c}}
{{{(P{A_1} + M)}^T}}& \cdots &{{{(P{A_r} + M)}^T}}
\end{array}} \right]$. Then, the fuzzy system~\eqref{fuzzy-system} is locally asymptotically stable.
\end{theorem}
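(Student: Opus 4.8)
The plan is to follow the template used in the proofs of~\cref{thm:stability1,thm:stability2}: I will show that feasibility of~\eqref{eq:8} forces the QLF $V(x) = x^T P x$ to satisfy $\dot V(x) < 0$ on the ball-bounded region
\[
{\cal H}_2(\eta) := \Bigl\{ x \in {\cal X} : \textstyle\sum_{i=1}^r (\alpha_i(x) - \alpha_i(0))^2 \le \eta \Bigr\},
\]
and then invoke continuity of the MFs together with the Lyapunov direct method, exactly as in the earlier results. Writing $\delta_i := \alpha_i(x) - \alpha_i(0)$ and recalling~\eqref{eq:3}, along the trajectories of~\eqref{fuzzy-system} we have $\dot V(x) = x^T(A_0^T P + P A_0) x + \sum_{i=1}^r \delta_i\, x^T(A_i^T P + P A_i) x$. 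Using the scalar identity $x^T(A_i^T P + P A_i)x = 2 x^T P A_i x$ and the null property~\eqref{eq:null-property}, i.e.\ $\sum_{i=1}^r \delta_i = 0$, the slack matrix $M$ can be inserted at no cost:
\[
\dot V(x) = x^T(A_0^T P + P A_0) x + 2 \sum_{i=1}^r \delta_i\, x^T (P A_i + M) x .
\]

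Next I would read off $G$ from~\eqref{eq:8}. Its $(2,2)$ block $-I_r \otimes G \prec 0$ forces $G \succ 0$, and a Schur complement shows that~\eqref{eq:8} is equivalent to $P \succ 0$ together with
\[
A_0^T P + P A_0 + \eta G + \sum_{i=1}^r (P A_i + M)^T G^{-1}(P A_i + M) \prec 0 .
\]
On the other hand, applying Young's inequality with weight $G$ to each cross term — $2 \delta_i\, x^T(P A_i + M) x = 2 (\delta_i x)^T (P A_i + M) x \le \delta_i^2\, x^T G x + x^T (P A_i + M)^T G^{-1}(P A_i + M) x$ — and summing over $i$ gives
\[
\dot V(x) \le x^T\Bigl[ A_0^T P + P A_0 + \Bigl(\textstyle\sum_{i=1}^r \delta_i^2\Bigr) G + \textstyle\sum_{i=1}^r (P A_i + M)^T G^{-1}(P A_i + M) \Bigr] x .
\]
For $x \in {\cal H}_2(\eta)$ we have $\sum_i \delta_i^2 \le \eta$, hence $\bigl(\sum_i \delta_i^2\bigr) x^T G x \le \eta\, x^T G x$ since $G \succ 0$; combined with the Schur-complement inequality above this yields $\dot V(x) < 0$ for every $x \in {\cal H}_2(\eta) \setminus \{0\}$, while $V(x) > 0$ for $x \ne 0$ because $P \succ 0$.

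It then remains to pass from this pointwise derivative bound to local asymptotic stability, which is verbatim the argument used after~\cref{prop:2}: continuity of the MFs gives a ball ${\cal B}(c) \subseteq {\cal H}_2(\eta)$ for any $\eta > 0$, so some Lyapunov sublevel set $L_V(c')$ with $c' > 0$ lies inside ${\cal H}_2(\eta)$, and the Lyapunov direct method~\cite{khalil2002nonlinear} certifies that such a sublevel set is a domain of attraction. I expect the only genuinely delicate point to be the middle step: the term $\sum_i \delta_i\, x^T(A_i^T P + P A_i)x$ is indefinite in $\delta$, and one must relax it so that the scalar ball bound $\eta$ on $\sum_i \delta_i^2$ enters the LMI linearly. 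Young's inequality with a \emph{free} weight matrix $G$ (rather than a fixed choice such as $G = P$) is precisely what accomplishes this, and it is this degree of freedom that the auxiliary variable $G$ in~\eqref{eq:8} encodes; once it is in place, everything else is a routine Schur complement plus the standard local Lyapunov argument already invoked for~\cref{thm:stability1,thm:stability2}.
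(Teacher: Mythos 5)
Your proof is correct and follows essentially the same route as the paper: a Schur complement on~\eqref{eq:8}, the null property~\eqref{eq:null-property} to insert $M$ for free, and Young's inequality with the free weight $G$ (the paper's \cref{lemma:bounding-lemma}) to absorb the cross terms, followed by the standard sublevel-set argument on ${\cal U}(\eta)$. The only cosmetic difference is that you apply the weighted Young bound term by term in quadratic form, whereas the paper packages the same step using the Kronecker identity $\sum_i \delta_i^2\, G = ((\alpha(x)-\alpha(0))\otimes I_n)^T (I_r\otimes G)((\alpha(x)-\alpha(0))\otimes I_n)$.
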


The proof is given in Appendix~\ref{appendix:3}. If the LMI condition is feasible, then any Lyapunov sublevelset ${L_V}(c)$ with some $c>0$ such that ${L_V}(c) \subseteq {\cal U}(\eta)$ is a DA, where
\begin{align*}
{\cal U}(\eta ): = \left\{ {x \in {\mathbb R}^n:\sum_{i = 1}^r {{{({\alpha _i}(x) - {\alpha _i}(0))}^2}}  \le \eta } \right\}.
\end{align*}

\begin{theorem}
Suppose that the MFs are continuous on $\cal X$. Then, there exists a ball ${\cal B}(c)$ with some $c>0$ such that ${\cal B}(c) \subseteq {\cal U}(\eta)$ for any $\eta\in [0,\infty)$.
\end{theorem}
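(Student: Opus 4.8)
The plan is to replicate, almost verbatim, the elementary continuity argument used for \cref{prop:2}, now with the Euclidean ball $\sum_{i=1}^r(\alpha_i(x)-\alpha_i(0))^2 \le \eta$ in place of the hyper-rectangle $|\alpha_i(x)-\alpha_i(0)|\le b_i$ that defines ${\cal H}(b)$. Fix $\eta>0$. First I would invoke continuity of each MF $\alpha_i$ at the origin: for the tolerance $\sqrt{\eta/r}>0$ there is a radius $c_i>0$ such that $|\alpha_i(x)-\alpha_i(0)|\le\sqrt{\eta/r}$ for every $x\in{\cal B}(c_i)$. Setting $c:=\min_{i\in{\cal I}_r}c_i>0$, every $x\in{\cal B}(c)$ then satisfies $\sum_{i=1}^r(\alpha_i(x)-\alpha_i(0))^2\le r\cdot(\eta/r)=\eta$, i.e.\ ${\cal B}(c)\subseteq{\cal U}(\eta)$, which is the claim.

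An even shorter route is to reduce directly to \cref{prop:2}: taking $b_i:=\sqrt{\eta/r}$ for all $i\in{\cal I}_r$ — which lies in $[0,1]$ for $\eta\le r$, the only nontrivial range — one has ${\cal H}(b)\subseteq{\cal U}(\eta)$, since $|\alpha_i(x)-\alpha_i(0)|\le b_i$ for all $i$ forces $\sum_{i=1}^r(\alpha_i(x)-\alpha_i(0))^2\le\sum_{i=1}^r b_i^2=\eta$. \cref{prop:2} then furnishes a ball ${\cal B}(c)\subseteq{\cal H}(b)$, hence ${\cal B}(c)\subseteq{\cal U}(\eta)$.

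I do not expect any genuine obstacle: the statement is a direct consequence of the continuity of the MFs at the origin, exactly as for \cref{prop:2}, and the core of the argument is a one-line $\varepsilon$--$\delta$ estimate combined with taking a finite minimum over the $r$ radii. The only points deserving a word of care are the boundary values of $\eta$: for $\eta\ge r$ the inclusion is immediate, since $|\alpha_i(x)-\alpha_i(0)|\le 1$ already gives $\sum_{i=1}^r(\alpha_i(x)-\alpha_i(0))^2\le r\le\eta$ for every $x$, whereas $\eta=0$ is degenerate and is in any case excluded in \cref{thm:stability3}, where $\eta>0$ is assumed. As in the hyper-rectangle case, I would also record the companion remark — illustrated by the MFs of \cref{eq:7} — that if some $\alpha_i$ is discontinuous at the origin, then ${\cal U}(\eta)$ may contain the origin without containing any ball centered there, so the continuity hypothesis is not removable.
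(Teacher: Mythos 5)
Your proof is correct and takes essentially the same route as the paper, which simply states that the result ``can be completed following similar arguments as in the proof of~\cref{prop:2}''---i.e., precisely the $\varepsilon$--$\delta$ continuity estimate with tolerance $\sqrt{\eta/r}$ and a minimum over the $r$ radii that you spell out (your reduction via ${\cal H}(b)\subseteq{\cal U}(\eta)$ with $b_i=\sqrt{\eta/r}$ is the same idea packaged as a corollary of \cref{prop:2}). Your remark that $\eta=0$ is degenerate is a fair caveat: as literally stated with $\eta\in[0,\infty)$ the claim can fail at $\eta=0$, though this case is excluded in \cref{thm:stability3} where the theorem is actually used.
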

\begin{proof}
The proof can be completed following similar arguments as in the proof of~\cref{prop:2}.
\end{proof}

\begin{example}\label{ex:4}
Let us consider the system in~\cref{ex:2} again.
The stability has been checked using~\cref{lemma:Mozelli2009} and~\cref{thm:stability3} for several values of pairs $(a,b)\in [-10,0]\times [0,200]$. For~\cref{lemma:Mozelli2009}, the derivative bounds are set to be $\varphi_i=0.85, i\in {\cal I}_r$. For~\cref{thm:stability3}, we consider $\eta=0.9$.
\cref{fig:5} depicts a set of feasible points for several values of pairs $(a,b)\in [-10,0]\times [0,200]$.
It reveals that the region of~\cref{thm:stability3} includes the region of~\cref{lemma:Mozelli2009}.
\begin{figure}
\centering
\includegraphics[width=0.3\textwidth]{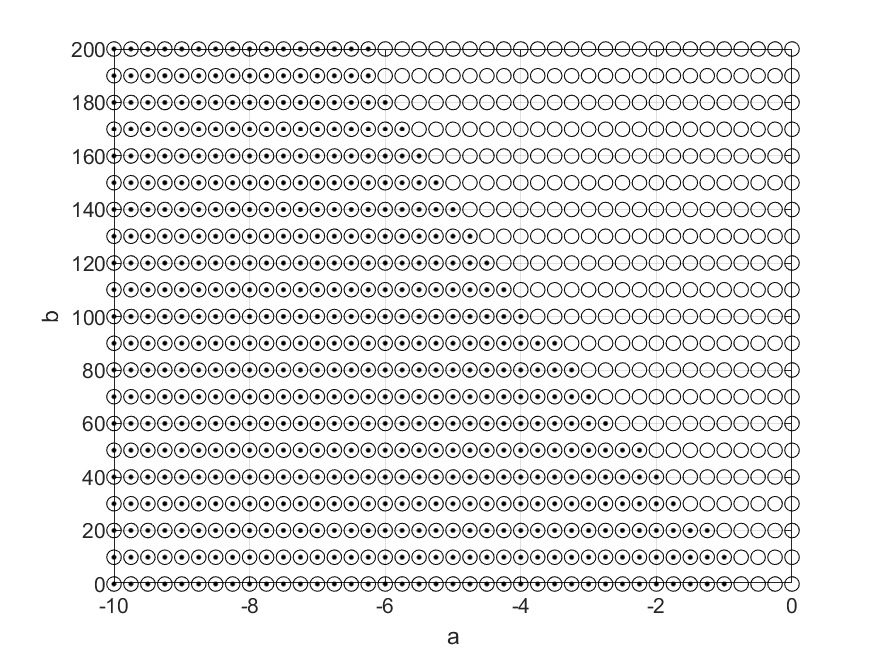}
\caption{\cref{ex:4}: Feasible region for several values of $(a,b)\in [-10,0]\times [0,200]$.
`$\cdot$' indicates~\cref{lemma:Mozelli2009} with $\varphi_i=0.85, i\in {\cal I}_r$.
`$\circ$' indicates~\cref{thm:stability3} with $\eta=0.9$.}\label{fig:5}
\end{figure}

Similar to~\cref{ex:2}, let us fix $(a,b) = (-8,100)$, and compare the largest possible DAs estimated by~\cref{lemma:Mozelli2009} with $\varphi_i=4.0789, i\in {\cal I}_r$ and~\cref{thm:stability3} with $\eta = 1.1645$. \cref{fig:6}(a) shows an estimate of a DA using~\cref{lemma:Mozelli2009}, and \cref{fig:6}(b) depicts an estimate of a DA using~\cref{thm:stability1}. As can be seen from the figures, the DA estimate from~\cref{thm:stability3} is larger than that from~\cref{lemma:Mozelli2009}. Indeed, the DA estimate using~\cref{thm:stability3} in~\cref{fig:6}(b) is the largest one in this paper.
\begin{figure}
\centering\subfigure[Region of the set $\Omega (\varphi ): = \{ x \in {\cal X}:|{\nabla _x}{\alpha _i}{(x)^T}A(\alpha (x))x| \le {\varphi _k},k \in {\cal I}_r\}$  from~\cref{lemma:Mozelli2009} with $(a,b) =(-10,0)$ and $\varphi_i=4.0789, i\in {\cal I}_r$. The red line indicates the boundary of the Lyapunov sublevel set ${L_V}(c) \subseteq \Omega(\varphi)$.]
{\includegraphics[width=0.3\textwidth]{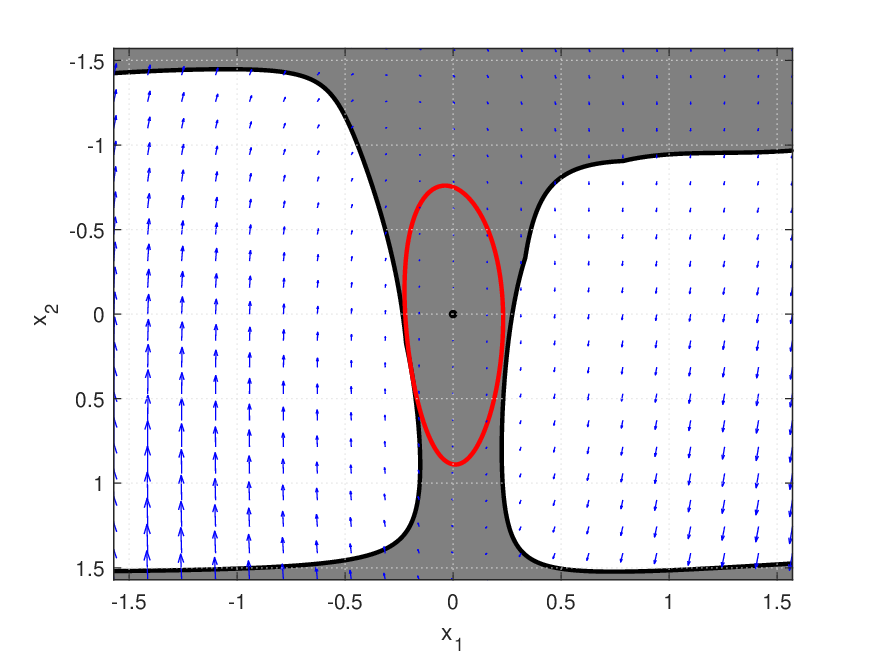}}
\centering\subfigure[Region of the set ${\cal H}(b): = \{ x \in {\cal X}:|{\alpha _i}(x) - {\alpha _i}(0)| \le {b_i},i \in {\cal I}_r\}$ from~\cref{thm:stability3} with $(a,b) =(-10,0)$ and $\eta = 1.1645$.
The red line indicates the boundary of the Lyapunov sublevel set ${L_V}(c) \subseteq {\cal U}(\eta)$.]
{\includegraphics[width=0.3\textwidth]{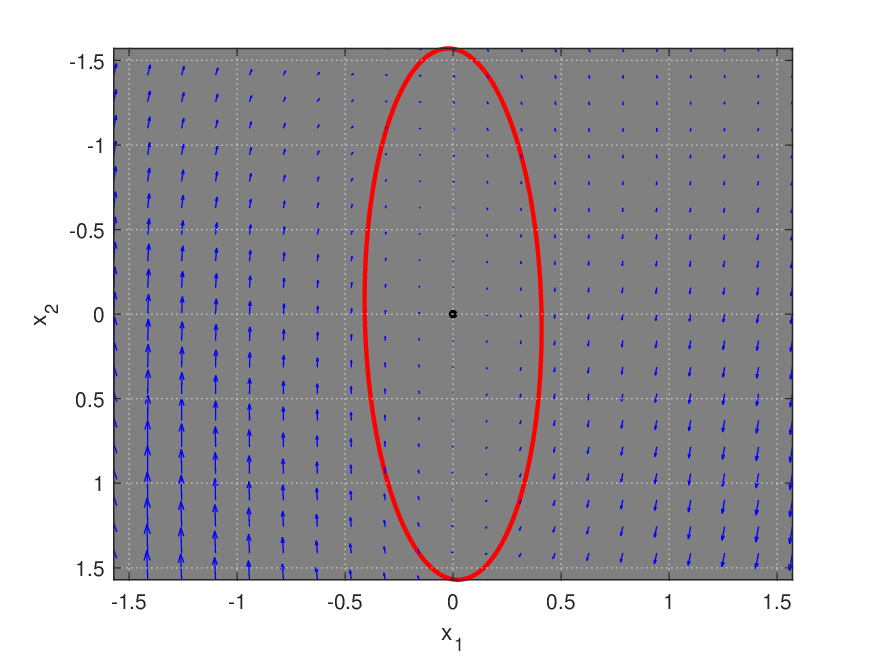}}
\caption{\cref{ex:4}: The DAs estimated by (a)~\cref{lemma:Mozelli2009} and (b)~\cref{thm:stability3} with $(a,b) = (-8,100)$.
The red line indicates the boundaries of the Lyapunov sublevel sets.}\label{fig:6}
\end{figure}

\end{example}

Based on the findings thus far, the differences and advantages of the proposed quadratic approaches in comparison to the FLF methods can be summarized as follows: 1) The proposed quadratic methods are applicable even when the MFs are not differentiable, thereby broadening the scope of systems to which these methods can be applied; 2) The bound $|{\alpha _i}(x) - {\alpha _i}(0)| \le {b_i}$ is solely dependent on the MFs, in contrast to $|{\nabla _x}{\alpha _i}{(x)^T}A(\alpha (x))x| \le {\varphi _i},i \in {\cal I}_r$, which is influenced by the system dynamics $A(\alpha (x))$. Consequently, the sets ${\cal H}(b)$ and ${\cal U}(\eta)$ are independent of the system matrices $A_i, i\in {\cal I}_r$, whereas $\Omega (\varphi )$ is dependent on both the system matrices and the MFs.
These distinctions underscore the flexibility and applicability of the proposed quadratic approaches, particularly in scenarios where the FLF methods may be limited.

\section{Analysis}\label{sec:analysis}
In the FLF approaches,~\cref{thm:conservatism} reveals that conditions based on FLFs may fail to identify local stability even when the system is, in fact, locally exponentially stable. This limitation highlights an inherent conservatism in the FLF methods.
Conversely, in the proposed quadratic approach, it can be proven that under certain mild assumptions, the converse statement holds true. Specifically, if the original nonlinear system is locally exponentially stable, then the LMI conditions outlined in~\cref{thm:stability1},~\cref{thm:stability2}, and~\cref{thm:stability3} are feasible. This feature enhances the applicability of the proposed quadratic stability methods.
\begin{theorem}[Converse theorem~I]\label{thm:converse1}
Suppose that the MFs are differentiable on $\cal X$, the origin is an equilibrium point of~\eqref{nonlinear-system}, and $f$ is continuously differentiable in some neighborhood of the origin. Then, the original nonlinear system~\eqref{nonlinear-system} is locally exponentially stable around the origin if and only if the LMI condition of~\cref{thm:stability1} is feasible for sufficiently small $b_i\in [0,1],i\in {\cal I}_r$.
\end{theorem}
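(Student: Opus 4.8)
The plan is to establish the two directions separately: the ``if'' direction is a mild strengthening of \cref{thm:stability1} to an exponential decay estimate, while the ``only if'' direction reduces to a classical linearization argument applied to the fact that the Jacobian of $f$ at the origin equals $A_0$.

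For the sufficiency direction I would start from the assumption that the LMIs in~\eqref{eq:thm1:1} hold for some $b_i\in[0,1]$, $P=P^T\succ0$, and $M=M^T$. Since the hyper-rectangle $\mathcal{D}(b)$ in~\eqref{eq:rectangle} is the convex hull of the finitely many sign vertices, a convex combination of the vertex inequalities gives $A_0^TP+PA_0+\sum_{i=1}^r\delta_i(A_i^TP+PA_i+M)\prec0$ for every $\delta\in\mathcal{D}(b)$, and since there are finitely many vertices and $\lambda_{\max}(\cdot)$ is convex, one can extract a uniform $\epsilon>0$ for which this left-hand side is $\preceq-\epsilon I$ on all of $\mathcal{D}(b)$. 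Evaluating at $\delta=\alpha(x)-\alpha(0)$ for $x\in\mathcal{H}(b)$ and using the null property~\eqref{eq:null-property} to cancel the $M$-terms yields $A(\alpha(x))^TP+PA(\alpha(x))\preceq-\epsilon I$ on $\mathcal{H}(b)$. By \cref{prop:2}, $\mathcal{H}(b)$ contains a ball about the origin, hence a sublevel set $L_V(c)$ of $V(x)=x^TPx$ with $c>0$; that set is positively invariant, and on it $\dot V(x)\le-\epsilon\|x\|_2^2\le-(\epsilon/\lambda_{\max}(P))V(x)$, which integrates to $\|x(t)\|_2^2\le(\lambda_{\max}(P)/\lambda_{\min}(P))\|x(0)\|_2^2\,e^{-(\epsilon/\lambda_{\max}(P))t}$ for all $x(0)\in L_V(c)$, i.e.\ local exponential stability.

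For the necessity direction the step I would isolate first is that the Jacobian of $f$ at the origin equals $A_0$: on the neighborhood $\mathcal{X}$ of the origin, $f(x)=A(\alpha(x))x=A_0x+(A(\alpha(x))-A_0)x$, and continuity of the MFs gives $\|(A(\alpha(x))-A_0)x\|_2=o(\|x\|_2)$ as $x\to0$, so differentiability of $f$ at $0$ forces $\partial f/\partial x(0)=A_0$. I would then invoke the classical linearization theorem (e.g.~\cite[Theorem~4.15]{khalil2002nonlinear}): with $f$ continuously differentiable near $0$, local exponential stability of the origin of $\dot x=f(x)$ is \emph{equivalent} to $A_0$ being Hurwitz. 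Given exponential stability, $A_0$ is thus Hurwitz, so solving the Lyapunov equation $A_0^TP+PA_0=-I$ produces $P=P^T\succ0$. Taking $M=0$, every vertex inequality in~\eqref{eq:thm1:1} becomes $-I+\sum_{i=1}^r\delta_i(A_i^TP+PA_i)\prec0$ with $\delta_i=\pm b_i$, and since $\big\|\sum_{i=1}^r\delta_i(A_i^TP+PA_i)\big\|_2\le\sum_{i=1}^r b_i\|A_i^TP+PA_i\|_2\to0$ as $\max_i b_i\to0$, the inequality holds at every vertex once the $b_i$ are small enough; this is exactly the claimed feasibility.

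I expect the main obstacle to be the necessity direction, and within it the combination of the identity $\partial f/\partial x(0)=A_0$ with the \emph{converse} part of the linearization theorem — namely that exponential stability of the nonlinear system forces the linearization to be Hurwitz, not merely that no eigenvalue has positive real part — together with a careful check that the standing hypotheses ($f\in C^1$ near $0$, differentiable MFs, and validity of the fuzzy representation on a neighborhood of the origin) are used consistently. Once $A_0$ is known to be Hurwitz, the remaining perturbation estimate is routine, and the sufficiency direction is essentially bookkeeping built on top of \cref{thm:stability1} and \cref{prop:2}.
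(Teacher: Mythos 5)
Your proposal is correct and follows essentially the same route as the paper: identify the Jacobian of $f$ at the origin with $A_0$, invoke the converse part of Lyapunov's indirect method to conclude $A_0$ is Hurwitz, solve a Lyapunov equation for $P$, and absorb the vertex terms $\sum_i \delta_i(A_i^TP+PA_i+M)$ by a norm bound that vanishes as $\max_i b_i\to 0$ (the paper takes $M=I$ where you take $M=0$, and computes $\nabla_x f(0)=A_0$ by the product rule on $\sum_i\alpha_i(x)A_ix$ rather than via your $o(\|x\|_2)$ estimate, which only needs continuity of the MFs). Your sufficiency direction is in fact more complete than the paper's, which simply defers to \cref{thm:stability1} (yielding only asymptotic stability), whereas you extract a uniform $\epsilon$ over the compact hyper-rectangle and integrate $\dot V\le-(\epsilon/\lambda_{\max}(P))V$ to obtain the exponential decay the ``if and only if'' statement actually requires.
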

The proof is given in Appendix~\ref{appendix:5}. By employing a similar line of reasoning, one can also establish the non-conservatism of the remaining two conditions,~\cref{thm:stability2} and~\cref{thm:stability3}.
\begin{theorem}[Converse theorem~II]\label{thm:converse2}
Suppose that the MFs are differentiable on $\cal X$, the origin is an equilibrium point of~\eqref{nonlinear-system}, and $f$ is continuously differentiable in some neighborhood of the origin.
\begin{enumerate}
\item Then, the original nonlinear system~\eqref{nonlinear-system} is locally exponentially stable around the origin if and only if the LMI condition of~\cref{thm:stability2} is feasible for sufficiently small $b_i \in [0,1],i\in {\cal I}_r$.

\item Moreover, the original nonlinear system~\eqref{nonlinear-system} is locally exponentially stable around the origin if and only if the LMI condition of~\cref{thm:stability3} is feasible for sufficiently small $\eta >0$.
\end{enumerate}
\end{theorem}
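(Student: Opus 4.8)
The plan is to establish both equivalences with the same two-step template: the two ``if'' directions are quick, and I would concentrate on the two ``only if'' directions, which share a single construction.

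\emph{Easy directions.} Suppose the LMIs of \cref{thm:stability2} (resp.\ \cref{thm:stability3}) are feasible. Then \cref{thm:stability2} (resp.\ \cref{thm:stability3}) already gives local asymptotic stability; to upgrade to exponential stability I would use that the matrix inequalities are strict, so there is $\varepsilon>0$ with $A(\alpha(x))^TP+PA(\alpha(x))\preceq-\varepsilon I_n$ for all $x\in{\cal H}(b)$ (resp.\ $x\in{\cal U}(\eta)$). Hence along any trajectory remaining in such a set, $\dot V(x)\le-\varepsilon\|x\|_2^2\le-(\varepsilon/\lambda_{\max}(P))V(x)$. Since ${\cal H}(b)$ and ${\cal U}(\eta)$ contain a ball around the origin (continuity of the MFs; cf.\ \cref{prop:2}), any sufficiently small sublevel set ${L_V}(c)$ is an invariant neighborhood of the origin on which this inequality holds, giving local exponential stability.

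\emph{The substantive direction.} Assume the origin is locally exponentially stable for \eqref{nonlinear-system}. First I would show $A_0:=\sum_{i=1}^r\alpha_i(0)A_i$ is Hurwitz: by continuity of the MFs, $f(x)-A_0x=\sum_{i=1}^r(\alpha_i(x)-\alpha_i(0))A_ix=o(\|x\|_2)$ as $x\to0$, so $f$ is differentiable at the origin with $\nabla f(0)=A_0$; together with the hypothesis that $f$ is $C^1$ near the origin, the converse Lyapunov theorem for exponential stability (the mechanism already used in the proof of \cref{thm:converse1}) forces $A_0$ to be Hurwitz. Let $P=P^T\succ0$ solve $A_0^TP+PA_0=-I_n$. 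For part~1 I would keep this $P$ and take $M=-\beta I_n$ with $\beta>0$ large enough that $A_i^TP+PA_i-\beta I_n\prec0$ for every $i\in{\cal I}_r$ (possible since the number of subsystems is finite); this is exactly \eqref{eq:thm2:1}. The left-hand side of \eqref{eq:thm2:2} is then $-I_n-\sum_{i=1}^r b_i(A_i^TP+PA_i-\beta I_n)$, which is $\prec0$ whenever $\sum_{i=1}^r b_i\|A_i^TP+PA_i-\beta I_n\|_2<1$, so \cref{thm:stability2} is feasible for all sufficiently small $b_i\in[0,1]$. For part~2 I would keep the same $P$, set $M=0$ and $G=gI_n$ with $g>0$; then $I_r\otimes G=gI_{rn}\succ0$ and a Schur complement applied to \eqref{eq:8} turns it into $\eta g\,I_n-I_n+\tfrac1g\sum_{i=1}^r(PA_i)^T(PA_i)\prec0$. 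Choosing $g$ so that $\tfrac1g\sum_{i=1}^r(PA_i)^T(PA_i)\prec\tfrac12 I_n$ and then any $\eta<1/(2g)$ closes this, so \cref{thm:stability3} is feasible for all sufficiently small $\eta>0$.

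\emph{Main obstacle.} The only genuinely nontrivial step is the identification $\nabla f(0)=A_0$ together with the converse-Lyapunov argument showing that \emph{exponential} (not merely asymptotic) stability of the nonlinear system excludes eigenvalues of $A_0$ on or to the right of the imaginary axis — this is precisely where ``$f$ is $C^1$ near the origin'' is indispensable, and I would inherit it verbatim from the proof of \cref{thm:converse1}. Everything after that is elementary LMI algebra; the remaining points are bookkeeping: checking that the constructed $M$ and $G$ are admissible (there is no sign constraint on $M$; $G=G^T\succ0$ makes $-I_r\otimes G\prec0$, which the Schur step requires) and that the parameters stay in their prescribed ranges, which is automatic since we only shrink $b_i$ and $\eta$.
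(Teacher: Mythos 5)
Your proposal is correct and follows essentially the same route as the paper's proof: show $A_0$ is Hurwitz via Lyapunov's indirect method, solve $A_0^TP+PA_0=-Q$, take $M$ a large negative multiple of $I$ for \cref{thm:stability2}, and scale $G$ (your $G=gI$ with $g$ large is the paper's $G=\varepsilon^{-1}I$ with $\varepsilon$ small) together with a small $\eta$ for \cref{thm:stability3}. Your explicit upgrade of the sufficiency direction from asymptotic to exponential stability is a small refinement that the paper leaves implicit, but it does not change the argument.
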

The proof is given in Appendix~\ref{appendix:7}.
In the following, we establish less conservatism of~\cref{thm:stability1} compared to~\cref{lemma:Mozelli2009}.
\begin{theorem}[Feasibility inclusion]\label{thm:non-conservatism2}
Suppose that the MFs are differentiable on $\cal X$, the origin is an equilibrium point of~\eqref{nonlinear-system}, and $f$ is continuously differentiable in some neighborhood of the origin.
Suppose that the LMI condition of~\cref{lemma:Mozelli2009} is feasible for some constants $\varphi_i > 0,i\in {\cal I}_r$.
Then, the LMI condition of~\cref{thm:stability1} is also feasible for sufficiently small $b_i\in [0,1],i\in {\cal I}_r$.
However, the conserve does not hold in general.
\end{theorem}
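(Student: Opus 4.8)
The plan is to show that feasibility of the Mozelli LMIs of \cref{lemma:Mozelli2009} already encodes a common quadratic Lyapunov certificate for the nominal linear part $A_0=\sum_{i=1}^r\alpha_i(0)A_i$, and that the vertex condition \eqref{eq:thm1:1} of \cref{thm:stability1} then follows from this certificate by a small perturbation as the bounds $b_i$ shrink.

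First I would extract information at the origin. Assume \cref{lemma:Mozelli2009} holds with $P_i\succ0$, $P_i+M\succeq0$ and the indexed inequalities \eqref{eq:4}, and set $\beta_i:=\alpha_i(0)$, so $\beta\in\Lambda_r$. Form the symmetric double convex combination $\sum_{i=1}^r\sum_{j=1}^r\beta_i\beta_j[\cdot]_{ij}$, where $[\cdot]_{ij}$ denotes the left-hand side of \eqref{eq:4}; note $[\cdot]_{ij}=[\cdot]_{ji}$, so this is a conic combination of the matrices $[\cdot]_{ij}$ with $i\le j$, each of which is negative definite, with nonnegative coefficients summing to $\big(\sum_i\beta_i\big)^2=1$. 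Hence the combination is strictly negative definite. Carrying out the algebra and using $\sum_i\beta_iA_i=A_0$, the combination equals $\sum_{k=1}^r\varphi_k(P_k+M)+A_0^TP(\beta)+P(\beta)A_0$ with $P(\beta):=\sum_{i=1}^r\beta_iP_i$. Since $\sum_i\beta_i=1$ and each $P_i\succ0$, the convex combination $P(\beta)$ is positive definite, and since $\varphi_k>0$ and $P_k+M\succeq0$ the term $\sum_k\varphi_k(P_k+M)$ is positive semidefinite; therefore $A_0^TP(\beta)+P(\beta)A_0\prec0$, i.e.\ $A_0^TP(\beta)+P(\beta)A_0\preceq-\varepsilon I$ for some $\varepsilon>0$.

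Next I would verify the matrix inequality in \eqref{eq:thm1:1} with the choices $P:=P(\beta)\succ0$ and slack matrix $M:=0$. At any vertex $(\delta_1,\dots,\delta_r)\in\prod_{i=1}^r\{-b_i,b_i\}$ the left-hand side equals $A_0^TP+PA_0+\sum_{i=1}^r\delta_i(A_i^TP+PA_i)\preceq-\varepsilon I+\big(\sum_{i=1}^rb_i\|A_i^TP+PA_i\|_2\big)I$; choosing $b_i\in[0,1]$ small enough that $\sum_{i=1}^rb_i\|A_i^TP+PA_i\|_2<\varepsilon$ makes the inequality strict at every one of the (finitely many) vertices, so \cref{thm:stability1} is feasible. (This argument does not even invoke the converse theorems; alternatively one could note that $A_0$ Hurwitz makes \eqref{nonlinear-system} locally exponentially stable via linearization—its Jacobian at the origin is $A_0$ by differentiability of the MFs—and then apply \cref{thm:converse1}.) For the negative part of the statement, I would combine \cref{thm:conservatism} with \cref{thm:converse1}: \cref{thm:conservatism} provides a locally exponentially stable system with continuously differentiable MFs for which \cref{lemma:Mozelli2009} is infeasible for every $\varphi_i$, whereas \cref{thm:converse1} guarantees that \cref{thm:stability1} is feasible for sufficiently small $b_i$; thus feasibility of \cref{thm:stability1} does not imply feasibility of \cref{lemma:Mozelli2009}.

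The delicate points are bookkeeping rather than conceptual: correctly handling the $i\le j$ indexing of \eqref{eq:4} when assembling the symmetric double sum, checking that at least one coefficient $\beta_i^2$ or $2\beta_i\beta_j$ is strictly positive so the conic combination is strictly negative definite even when some $\alpha_i(0)=0$, and confirming $P(\beta)\succ0$ from $\sum_i\beta_i=1$. I expect the main obstacle to be simply making this convexity extraction clean, since everything downstream is a routine continuity and perturbation estimate.
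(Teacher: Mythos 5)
Your proposal is correct and follows essentially the same route as the paper: both extract $A_0^TP(\alpha(0))+P(\alpha(0))A_0\prec 0$ from the Mozelli LMIs by a double convex combination weighted by the MF values at the origin (the paper sums against $\alpha_i(x)\alpha_j(x)$ and then sets $x=0$, which is the same computation), and then conclude feasibility of~\cref{thm:stability1} for small $b_i$ by a perturbation bound. For the non-converse direction your citation of~\cref{thm:conservatism} together with~\cref{thm:converse1} resolves to the very same $\sin(x)$ counterexample the paper writes out explicitly, so this too matches.
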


The proof is given in Appendix~\ref{appendix:6}. \cref{thm:non-conservatism2} theoretically proves that~\cref{thm:stability1} is less conservative than~\cref{lemma:Mozelli2009}. Similar arguments can be applied for~\cref{thm:stability2} and~\cref{thm:stability3}, but the results are omitted here for brevity.

\cref{thm:conservatism} establishes some fundamental limitations of the FLF approaches.
Despite the advantages of the proposed QLF approaches, both QLF and FLF share some fundamental limitations. Specifically, when employing either QLFs or FLFs, there exists a class of nonlinear systems whose stability cannot be identified through the associated LMI stability conditions.
\begin{theorem}[Fundamental limitations]\label{thm:fundamental-limit2}
Let us consider the nonlinear system~\eqref{nonlinear-system}, and suppose that the origin is a locally asymptotically but non-exponentially stable equilibrium point of~\eqref{nonlinear-system}. Consider the corresponding fuzzy system~\eqref{fuzzy-system}.
Then, both the FLF approaches (\cref{lemma:Tanaka2003} and \cref{lemma:Mozelli2009}) and the quadratic approaches (\cref{thm:stability1}, \cref{thm:stability2}, and \cref{thm:stability3}) fail to identify the stability, i.e., the corresponding LMI conditions are infeasible.
\end{theorem}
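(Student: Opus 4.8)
The plan is to argue by contradiction. Assume one of the listed LMI conditions is feasible; I will show this forces the origin of~\eqref{nonlinear-system} to be \emph{locally exponentially} stable, contradicting the standing hypothesis that it is asymptotically but non-exponentially stable. The common thread is that feasibility of any of these conditions produces a (fuzzy-)quadratic Lyapunov function $V$ with $\beta_1\|x\|_2^2\le V(x)\le\beta_2\|x\|_2^2$ and $\dot V(x)\le-\beta_3\|x\|_2^2$ on a genuine ball $\mathcal{B}(\varepsilon)$ around the origin; then $\dot V\le-(\beta_3/\beta_2)V$ and the comparison lemma give $\|x(t)\|_2\le\sqrt{\beta_2/\beta_1}\,e^{-(\beta_3/(2\beta_2))t}\|x(0)\|_2$ for all sufficiently small initial states, i.e., local exponential stability.

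For the quadratic conditions~\cref{thm:stability1,thm:stability2,thm:stability3} this is essentially already available: the feasible sets are nested in the hyperparameters (shrinking $b$ shrinks $\mathcal{D}(b)$, shrinking $\eta$ shrinks $\mathcal{U}(\eta)$, and in each case the LMIs need only hold on these sets, as one checks from the vertex / over-bounding / Schur structure), so "feasible for some $b$ or $\eta$" coincides with "feasible for sufficiently small $b$ or $\eta$"; the nontrivial directions of~\cref{thm:converse1,thm:converse2} then deliver local exponential stability and the contradiction. Alternatively one re-derives it directly: feasibility yields $P\succ0$, and after absorbing the slack matrix $M$ via the null effect~\eqref{eq:null-property} and using affineness over the vertices of $\mathcal{D}(b)$ (for~\cref{thm:stability1,thm:stability2}) or a Young-inequality/Schur-complement step (for~\cref{thm:stability3}), one gets $A(\alpha(x))^{T}P+PA(\alpha(x))\preceq-2c_0 I$ for all $x$ in the relevant set; since the MFs are continuous, $\mathcal{H}(b)$ and $\mathcal{U}(\eta)$ each contain a ball about the origin (\cref{prop:2} and its ball-bound analogue), giving $\dot V\le-2c_0\|x\|_2^2$ on $\mathcal{B}(\varepsilon)$ with $V=x^TPx$.

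For the FLF conditions (\cref{lemma:Tanaka2003} is the $M=0$ case of~\cref{lemma:Mozelli2009}) I argue directly with $V(x)=x^{T}P(\alpha(x))x$. Feasibility gives $P_i\succ0$, hence $\lambda_{\min}\|x\|_2^2\le V(x)\le\lambda_{\max}\|x\|_2^2$. Along trajectories, $\dot V(x)=x^{T}\!\big[\sum_{i,j}\alpha_i\alpha_j(A_i^{T}P_j+P_jA_i)+\sum_k\dot\alpha_k P_k\big]x$; using $\sum_k\dot\alpha_k M=0$, the bound $|\dot\alpha_k|\le\varphi_k$ valid on $\Omega(\varphi)$, and $P_k+M\succeq0$ (whence $\dot\alpha_k(P_k+M)\preceq\varphi_k(P_k+M)$), the bracketed matrix is dominated by $\sum_{i,j}\alpha_i\alpha_j(A_i^{T}P_j+P_jA_i)+\sum_k\varphi_k(P_k+M)$, which is a convex combination of the finitely many — hence uniformly strictly negative definite — blocks in~\eqref{eq:4}. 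Therefore $\dot V(x)\le-2c_0\|x\|_2^2$ on $\Omega(\varphi)$, and by~\cref{prop:1} this set contains a ball $\mathcal{B}(\varepsilon)$ about the origin. Combining the FLF and QLF cases with the comparison-lemma estimate above contradicts non-exponential stability, which completes the proof.

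I expect the main obstacle to be the FLF step: first, correctly passing from the scalar bounds $|\dot\alpha_k|\le\varphi_k$ to the matrix inequality $\dot\alpha_k(P_k+M)\preceq\varphi_k(P_k+M)$ — this is precisely why the sign constraint $P_k+M\succeq0$ is imposed in~\cref{lemma:Mozelli2009}, and it fails without it — and second, after the Polya-type regrouping over the pairs $i\le j$, ensuring that the dominating matrix is \emph{uniformly} negative definite on an honest neighborhood of the origin, which is where~\cref{prop:1} (guaranteeing $\mathcal{B}(\varepsilon)\subseteq\Omega(\varphi)$) is indispensable, since otherwise one would only control $\dot V$ on a set that merely contains the origin but no ball around it.
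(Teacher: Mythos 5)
Your proposal is correct and follows essentially the same route as the paper's own proof: argue by contradiction, note that feasibility of any of the listed LMI conditions yields a (fuzzy-)quadratic Lyapunov function obeying the sandwich bounds $\min_i\lambda_{\min}(P_i)\|x\|_2^2\le V(x)\le\max_i\lambda_{\max}(P_i)\|x\|_2^2$ together with $\dot V\le-\varepsilon\|x\|_2^2$ near the origin, and conclude exponential decay of $\|x(t)\|_2$ by the comparison argument, contradicting the assumed non-exponential stability. The only difference is that you explicitly justify the uniform negativity margin and the ball-containment facts (\cref{prop:1}, \cref{prop:2}) that the paper's proof asserts without elaboration.
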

The proof is given in Appendix~\ref{appendix:8}.
In the following, we present simple examples to demonstrate the results in~\cref{thm:fundamental-limit2}.
\begin{example}\label{ex:6}
Let us consider the nonlinear system~\eqref{nonlinear-system} with $f(x)= -x^3$, whose origin is asymptotically but non-exponentially stable because its solution is $x(t) = \frac{{x(0)}}{{\sqrt {1 + 2x{{(0)}^2}t} }}$.
The corresponding fuzzy system is~\eqref{fuzzy-system} with ${\alpha _1}(x) = {x^2},{\alpha _1}(x) = 1 - {x^2},{A_1} =  - 1,{A_2} = 0,{\cal X} = [ - 1,1]$. For this system, all the conditions in this paper are infeasible with any possible hyperparameters.
For example, let us consider the condition of~\cref{thm:stability1}. In this case, $A_0 = 0$, and the LMIs in~\eqref{eq:thm1:1} are
${\delta _1}(A_1^TP + P{A_1} + M) + {\delta _2}(A_2^TP + P{A_2} + M) = {\delta _1}( - 2P + M) + {\delta _2}M < 0$, which together with $P>0$ leads to $({\delta _1} + {\delta _2})M < 2{\delta _1}P < 0$. Since the condition should hold for all $({\delta _1},{\delta _2}) \in \{  - {b_1},{b_1}\}  \times \{  - {b_2},{b_2}\}$, when $M\geq 0$, then the condition with $({\delta _1},{\delta _2}) = ({b_1},{b_2})$ cannot be satisfied. In addition, when $M< 0$, then the condition with $({\delta _1},{\delta _2}) = ({-b_1},{-b_2})$ cannot be satisfied.
\end{example}

We can extend the concepts in~\cref{ex:6} and~\cref{thm:fundamental-limit2}, and provide some ways to classify the set of fuzzy systems with $A_i,i\in {\cal I}_r$ whose stability cannot be identified through the LMI conditions based on the FLFs. The results are given in the following theorem, which can be seen as a second version of~\cref{thm:conservatism}.
\begin{theorem}[Fundamental limitations]\label{thm:fundamental-limit3}
Let us consider the fuzzy system~\eqref{fuzzy-system} and suppose the system matrices $A_i,i\in {\cal I}_r$ are given.
If there exist some MFs $\alpha_i(x),i\in {\cal I}_r$ defined in $\cal X$ such that the nonlinear system~\eqref{nonlinear-system} with $f(x) = \sum_{i = 1}^r {{\alpha _i}(x){A_i}} ,x \in {\cal X}$ is locally asymptotically but non-exponentially stable around the origin, then for the system matrices $A_i,i\in {\cal I}_r$, the FLF approaches (\cref{lemma:Tanaka2003} and \cref{lemma:Mozelli2009}) fail to identify the stability, i.e., the corresponding LMI conditions are infeasible.
\end{theorem}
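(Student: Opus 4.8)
The plan is to reduce the statement to \cref{thm:fundamental-limit2} by exploiting the structural fact, already emphasized after \cref{lemma:Mozelli2009} and \cref{thm:stability1}, that the LMI conditions of \cref{lemma:Tanaka2003} and \cref{lemma:Mozelli2009} are written purely in terms of the subsystem matrices $A_i$ and the scalar hyperparameters $\varphi_i$: they never reference the membership functions. Hence ``feasibility of the FLF LMIs'' is a property of the tuple $(A_1,\dots,A_r)$ alone, and it is this observation that carries the whole argument.

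First I would argue by contradiction. Suppose that, for the given matrices $A_i$, the LMI condition of \cref{lemma:Mozelli2009} is feasible for some $\varphi_i>0$ and symmetric matrices $P_i,M$; the case of \cref{lemma:Tanaka2003} is the special case $M=0$, so it suffices to treat \cref{lemma:Mozelli2009}. By hypothesis there exist membership functions $\tilde\alpha_i$ on $\cal X$ for which the nonlinear system $\dot x=\tilde f(x):=\sum_{i=1}^r\tilde\alpha_i(x)A_ix$ is locally asymptotically but non-exponentially stable at the origin. Regarding $\tilde f$ together with the representation $(A_i,\tilde\alpha_i)$ as an instance of~\eqref{nonlinear-system}--\eqref{fuzzy-system}, \cref{thm:fundamental-limit2} applies and asserts that the FLF LMI conditions built from these $A_i$ --- precisely the conditions of \cref{lemma:Tanaka2003} and \cref{lemma:Mozelli2009} --- are infeasible. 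This contradicts the assumed feasibility, so the LMI conditions of \cref{lemma:Tanaka2003} and \cref{lemma:Mozelli2009} are infeasible for the given $A_i$, which is the assertion.

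The only delicate point is the regularity of $\tilde\alpha_i$: \cref{lemma:Tanaka2003} presupposes differentiable membership functions and \cref{lemma:Mozelli2009} continuously differentiable ones, while \cref{thm:fundamental-limit2} speaks of ``the corresponding fuzzy system''. I would handle this by noting that the part of \cref{thm:fundamental-limit2} we invoke --- infeasibility of the membership-function-free LMIs --- is again a statement about $A_i$ alone, whose proof (Appendix~\ref{appendix:8}) ultimately rests on a quadratic sandwich $c_1\|x\|_2^2\le x^TP(\tilde\alpha(x))x\le c_2\|x\|_2^2$ with positive constants $c_1,c_2$, together with a negative-definite bound on $\dot V$ in a neighborhood of the origin, which together would certify local exponential stability and thereby contradict non-exponential stability; thus it is enough that \emph{some} admissible realization be non-exponentially stable. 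Alternatively one simply restricts attention --- as is implicit throughout the paper whenever FLFs appear --- to (continuously) differentiable $\tilde\alpha_i$, in which case the reduction to \cref{thm:fundamental-limit2} is immediate. I expect this regularity bookkeeping, rather than any substantive inequality, to be the only obstacle; the core of the proof is the one-line observation that the FLF LMIs see only the matrices $A_i$.
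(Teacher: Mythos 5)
Your proposal is correct and matches the paper's approach: the paper states that this theorem is a direct corollary of \cref{thm:fundamental-limit2} and omits the proof, and your argument supplies exactly the intended reasoning, namely that the LMI conditions of \cref{lemma:Tanaka2003} and \cref{lemma:Mozelli2009} depend only on the matrices $A_i$ (not on the MFs), so infeasibility for one non-exponentially stable realization forces infeasibility for all realizations sharing those $A_i$.
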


The proof is a direct corollary of~\cref{thm:fundamental-limit2} and is therefore omitted for brevity.
The following example demonstrates~\cref{thm:fundamental-limit3}.
\begin{example}\label{ex:7}
Let us consider the linear system $\dot x(t) = -x(t)$, which is globally exponentially stable.
The corresponding fuzzy model~\eqref{fuzzy-system} is given with $A_1 = -1, A_2 = 0, \alpha_1(x) = 1, \alpha_2(x) = 0$.
For this fuzzy system, the FLF-based conditions (\cref{lemma:Tanaka2003} and \cref{lemma:Mozelli2009}) are infeasible.
This is because we can find MFs, ${\alpha _1}(x) = {x^2},{\alpha _1}(x) = 1 - {x^2}$ in~\cref{ex:6}, so that $\sum_{i = 1}^r {{\alpha _i}(x){A_i}}  = f(x) =  - {x^3},x \in {\cal X}$, and for the corresponding nonlinear system~\eqref{nonlinear-system} with $f(x)= -x^3$, whose origin is asymptotically stable but non-exponentially stable.

Let us consider the nonlinear system~\eqref{nonlinear-system} with $f(x) = \left[ {\begin{array}{*{20}{c}}
{{x_2}}\\
{ - {x_1} + \mu (1 + {x_2}){x_2}/2}
\end{array}} \right]$ and $\mu = -2$. By the Lyapunov's indirect method, since
\begin{align*}
{\left. {{\nabla _x}f(x)} \right|_{x = 0}} = \left[ {\begin{array}{*{20}{c}}
0&1\\
{ - 1}&{\mu /2}
\end{array}} \right] = A
\end{align*}
is Hurwitz, the system is locally exponentially stable at the origin.
The corresponding fuzzy system~\eqref{fuzzy-system} is defined with
\begin{align}
{A_1} = \left[ {\begin{array}{*{20}{c}}
0&1\\
{ - 1}&\mu
\end{array}} \right],\quad {A_2} = \left[ {\begin{array}{*{20}{c}}
0&1\\
{ - 1}&0
\end{array}} \right],\label{eq:14}
\end{align}
and ${\alpha _1}(x) = (1 + {x_2})/2,{\alpha _2}(x) = 1 - {\alpha _1}(x)$, ${\cal X} = \{ x \in {\mathbb R}^2:{x_1} \in [ - 1,1],{x_2} \in [ - 1,1]\} $.

Using~\cref{thm:fundamental-limit3}, we can prove that the FLF-based conditions (\cref{lemma:Tanaka2003}, \cref{lemma:Mozelli2009}, and other conditions in the literature) are infeasible.
To prove this, consider the MFs ${\alpha _1}(x) = 1 - x_1^2,{\alpha _2}(x) = 1 - {\alpha _1}(x)$ with ${\cal X} = \{ x \in {\mathbb R}^2:{x_1} \in [ - 1,1],{x_2} \in [ - 1,1]\}$. The corresponding nonlinear system~\eqref{nonlinear-system} with $f(x) = \left[ {\begin{array}{*{20}{c}}
{{x_2}}\\
{\mu (1 - x_1^2){x_2} - {x_1}}
\end{array}} \right]$ is the well-known van der Pol oscillator, which is known to be locally asymptotically but non-exponentially stable around the origin.
Therefore, by~\cref{thm:fundamental-limit3}, the FLF-based conditions should be infeasible.
\end{example}

A source of conservatism of the FLF-based conditions (e.g.,~\cref{lemma:Tanaka2003} and~\cref{lemma:Mozelli2009}) arises from the lack of information on the MFs. These conditions are solely dependent on the system matrices $A_i, i\in {\cal I}_r$, and do not incorporate any information about the structure of the MFs. In contrast, the proposed local QLF-based conditions take into account specific information about the structure of the MFs at the origin. As a result, the proposed QLF conditions can reduce the conservatism inherent in the FLF-based approaches.
\begin{theorem}[Fundamental limitations]\label{thm:fundamental-limit4}
Let us consider the fuzzy system~\eqref{fuzzy-system}, and suppose that
\begin{enumerate}
\item the system matrices $A_i,i\in {\cal I}_r$ are given;

\item the MFs are not given, but their values at the origin $\alpha_i(0)=\beta_i,i\in {\cal I}_r$ are given.
\end{enumerate}

If there exist some MFs $\alpha_i(x),i\in {\cal I}_r$ defined in $\cal X$ such that $\alpha_i(0)=\beta_i,i\in {\cal I}_r$ and the nonlinear system~\eqref{nonlinear-system} with $f(x) = \sum_{i = 1}^r {{\alpha _i}(x){A_i}} ,x \in {\cal X}$ is locally asymptotically but non-exponentially stable around the origin. Then, for the system matrices $A_i,i\in {\cal I}_r$ and MFs' values at the origin, the QLF approaches (\cref{thm:stability1}, \cref{thm:stability2}, and \cref{thm:stability1}) fail to identify the stability, i.e., the corresponding LMI conditions are infeasible.
\end{theorem}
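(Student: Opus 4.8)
The plan is to exploit the fact that the LMI conditions of \cref{thm:stability1}, \cref{thm:stability2}, and \cref{thm:stability3} depend on the problem data only through the subsystem matrices $A_i$ and the nominal matrix $A_0 = \sum_{i=1}^r \beta_i A_i$ (together with the free hyperparameters $b_i$, respectively $\eta$), and \emph{not} on the membership functions themselves. Consequently, feasibility of any of these conditions is a property of the pair $(\{A_i\},\{\beta_i\})$ alone, and if it holds it certifies stability of the fuzzy system~\eqref{fuzzy-system} for \emph{every} continuous choice of MFs $\alpha$ satisfying $\alpha(0)=\beta$. The proof then proceeds by contradiction against the existence of a ``bad'' such MF choice.

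First I would show that feasibility of any one of the three LMI conditions yields a matrix $P=P^T\succ 0$, a neighborhood $\mathcal{B}(\varepsilon)$ of the origin, and a matrix $Q=Q^T\succ 0$ such that $A(\alpha(x))^TP + PA(\alpha(x)) \preceq -Q$ for all $x\in\mathcal{B}(\varepsilon)$ and for \emph{any} continuous MFs with $\alpha(0)=\beta$. This is essentially already contained in the proofs of \cref{thm:stability1}--\cref{thm:stability3}: by \cref{prop:2} (and its ball-bound counterpart stated after \cref{thm:stability3}) the region $\mathcal{H}(b)$, respectively $\mathcal{U}(\eta)$, contains a ball $\mathcal{B}(\varepsilon)$; on that ball $\alpha(x)-\alpha(0)$ lies in $\mathcal{D}(b)$ (resp. in the ball of radius $\sqrt{\eta}$), and the null property~\eqref{eq:null-property} annihilates the slack term $\sum_i(\alpha_i(x)-\alpha_i(0))M$, so $A(\alpha(x))^TP+PA(\alpha(x))$ equals a convex combination of the strictly negative-definite vertex (resp. Schur-complemented) matrices; strictness over a compact index set then yields the uniform margin $Q\succ 0$. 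We may assume all $b_i>0$ without loss of generality, since a strictly feasible LMI stays feasible under small perturbations of $b_i$.

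Second, I would upgrade ``asymptotic'' to ``exponential''. With $V(x)=x^TPx$ we have $\lambda_{\min}(P)\|x\|_2^2 \le V(x)\le \lambda_{\max}(P)\|x\|_2^2$, and along trajectories of~\eqref{fuzzy-system}, $\dot V(x)=x^T(A(\alpha(x))^TP+PA(\alpha(x)))x \le -\lambda_{\min}(Q)\|x\|_2^2 \le -\gamma V(x)$ on $\mathcal{B}(\varepsilon)$, with $\gamma=\lambda_{\min}(Q)/\lambda_{\max}(P)>0$. Taking a sublevel set $L_V(c)\subseteq\mathcal{B}(\varepsilon)$ and invoking the comparison lemma gives $\|x(t)\|_2 \le \sqrt{\lambda_{\max}(P)/\lambda_{\min}(P)}\,\|x(0)\|_2\,e^{-\gamma t/2}$ for every $x(0)\in L_V(c)$, so the origin of~\eqref{fuzzy-system} is locally exponentially stable --- and this holds for \emph{any} continuous MFs realizing $\alpha(0)=\beta$.

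Finally, the contradiction: by hypothesis there exist (continuous) MFs $\alpha_i$ with $\alpha_i(0)=\beta_i$ for which $\dot x=\sum_i\alpha_i(x)A_ix$ is locally asymptotically but \emph{not} exponentially stable at the origin. Applying the second step to exactly these MFs would force local exponential stability, a contradiction. Hence none of the LMI conditions of \cref{thm:stability1}, \cref{thm:stability2}, \cref{thm:stability3} can be feasible for the given $\{A_i\}$ and $\{\beta_i\}$, which is the claim. The only nontrivial ingredient is the second step --- confirming that the QLF certificates deliver \emph{exponential} rather than merely asymptotic decay --- and this parallels the reasoning already used in the converse theorems (Appendices~\ref{appendix:5},~\ref{appendix:7}); a minor point worth recording is that we take the posited ``bad'' MFs to be continuous, consistent with the standing assumptions of \cref{thm:stability1}--\cref{thm:stability3} and with the examples in the paper.
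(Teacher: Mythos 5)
Your proposal is correct and follows essentially the same route the paper intends: the paper gives no separate proof for \cref{thm:fundamental-limit4}, treating it as the analogue of \cref{thm:fundamental-limit2}, whose proof in Appendix~\ref{appendix:8} is exactly your contradiction argument (a feasible quadratically-bounded Lyapunov certificate with negative-definite derivative forces local exponential stability, contradicting the posited non-exponential stability). Your explicit observation that the LMIs depend on the data only through $\{A_i\}$ and $A_0=\sum_i\beta_i A_i$ --- so that feasibility certifies exponential stability for \emph{every} admissible MF realization --- is the one step the paper leaves implicit, and you supply it correctly.
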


\begin{example}
Here, we will cosider the two examples used in~\cref{ex:7} again.
Let us consider the linear system $\dot x(t) = -x(t)$, which is globally exponentially stable.
For its fuzzy model~\eqref{fuzzy-system} with $A_1 = -1, A_2 = 0, \alpha_1(x) = 1, \alpha_2(x) = 0$, the FLF-based conditions (\cref{lemma:Tanaka2003} and \cref{lemma:Mozelli2009}) are infeasible as shown in~\cref{ex:7}. In this case, $\alpha_1(0) = 1, \alpha_2(0) = 0$, and $A_ 0 = -1$, which is Hurwitz. Therefore, with sufficiently small bounds $b_i , i\in {\cal I}_r$ and $\eta$, \cref{thm:stability1}, \cref{thm:stability2}, and \cref{thm:stability3} are feasible.

If we consider the different MFs ${\alpha _1}(x) = {x^2},{\alpha _1}(x) = 1 - {x^2}$ in~\cref{ex:6}, then ${\alpha _1}(0) = 0,{\alpha _1}(x) = 1$ and $A_0 = 0$. In this case, \cref{thm:stability1}, \cref{thm:stability2}, and \cref{thm:stability3} are feasible with any values of $b_i , i\in {\cal I}_r$ and $\eta$.

Let us consider the nonlinear system~\eqref{nonlinear-system} with $f(x) = \left[ {\begin{array}{*{20}{c}}
{{x_2}}\\
{ - {x_1} + \mu (1 + {x_2}){x_2}/2}
\end{array}} \right]$ and $\mu = -2$, which is locally exponentially stable as shonw in~\cref{ex:7}.
The corresponding fuzzy system~\eqref{fuzzy-system} is defined with~\eqref{eq:14} and ${\alpha _1}(x) = (1 + {x_2})/2,{\alpha _2}(x) = 1 - {\alpha _1}(x)$, ${\cal X} = \{ x \in {\mathbb R}^2:{x_1} \in [ - 1,1],{x_2} \in [ - 1,1]\} $.
Using~\cref{thm:fundamental-limit4}, we can prove that the FLF-based conditions (\cref{lemma:Tanaka2003}, \cref{lemma:Mozelli2009}, and other conditions in the literature) are infeasible.
On the other hand, we have ${\alpha _1}(0) = 1/2,{\alpha _2}(0) = 1/2$, and ${A_0} = \left[ {\begin{array}{*{20}{c}}
0&1\\
{ - 1}&{\mu /2}
\end{array}} \right]$, which is Hurwitz. Therefore, \cref{thm:stability1}, \cref{thm:stability2}, and \cref{thm:stability3} are feasible with some values of $b_i \in [0,1], i\in {\cal I}_r$ and $\eta >0$.

Now, let us consider the MFs ${\alpha _1}(x) = 1 - x_1^2,{\alpha _2}(x) = 1 - {\alpha _1}(x)$ with ${\cal X} = \{ x \in {\mathbb R}^2:{x_1} \in [ - 1,1],{x_2} \in [ - 1,1]\}$. The corresponding nonlinear system~\eqref{nonlinear-system} with $f(x) = \left[ {\begin{array}{*{20}{c}}
{{x_2}}\\
{\mu (1 - x_1^2){x_2} - {x_1}}
\end{array}} \right]$ is the well-known van der Pol oscillator, which is known to be locally asymptotically but non-exponentially stable around the origin.
In this case, \cref{thm:stability1}, \cref{thm:stability2}, and \cref{thm:stability3} are infeasible with any values of $b_i\in [0,1] , i\in {\cal I}_r$ and $\eta>0$.
\end{example}

Before closing this section, let us summarize the advantages of the proposed approaches: 1) As demonstrated in~\cref{thm:converse1} and~\cref{thm:converse2}, the proposed local quadratic stability methods offer necessary and sufficient LMI conditions for ensuring local exponential stability. Furthermore, these conditions are less conservative compared to those based on FLF methods. 2) Moreover, the proposed quadratic approaches can be synergistically combined with FLF methods. The compounded effects of both methods serve to further reduce the conservatism inherent in each individual approach.

\section{Compounding the two effects}\label{sec:combined-approach}
In the preceding section, we established that the proposed QLF approaches are less conservative than the FLF approaches. However, it is important to note that these feasibility inclusions are contingent upon the hyperparameters $\varphi_i,i\in {\cal I}_r$, $b_i,i\in {\cal I}_r$ and $\eta$ being variables that can be optimized. When these hyperparameters are predetermined, the feasibility inclusions no longer hold. In such instances, both the FLF and QLF methods can reduce their inherent conservatism by being compounded with each other. To elaborate, let us consider the following candidate Lyapunov function:
\begin{align}
V(x) = {x^T}{P_0}x + \sum_{i = 1}^r {{\alpha _i}(x){x^T}{P_i}x}.\label{eq:11}
\end{align}
where $P_0 = P_0^T \in {\mathbb R}^{n\times n}, P_i = P_i^T \in {\mathbb R}^{n\times n},i \in {\cal I}_r$, which combines the FLF and QLF candidates
Its time-derivative along the solution is
\begin{align}
\dot V(x) =& \sum_{i = 1}^r {({\alpha _i}(x) - {\alpha _i}(0)){x^T}({P_0}{A_i} + A_i^T{P_0})x}\nonumber\\
&  + {x^T}(A_0^T{P_0} + {P_0}{A_0})x\nonumber\\
&+ \sum\limits_{j = 1}^r {\sum\limits_{i = 1}^r {{\alpha _j}(x){\alpha _i}(x)} } {x^T}({P_j}{A_i} + A_i^T{P_j})x\nonumber\\
& + \sum\limits_{i = 1}^r {{{\dot \alpha }_i}(x){x^T}{P_i}x}.\label{eq:10}
\end{align}

To address the derivative ${{\dot \alpha }_i}(x)$ and the difference ${\alpha _i}(x) - {\alpha _i}(0)$, the techniques involving polytopic and ball bounds discussed in the previous section can be applied. These techniques can be variously combined to yield a range of LMI stability conditions, each with differing degrees of conservativeness. Due to space constraints, this paper will focus solely on the combination of Lemma~\cref{lemma:Mozelli2009} and Theorem~\cref{thm:stability1}. It should be noted that different combinations could lead to novel conditions, representing potential avenues for future research.
\begin{theorem}\label{thm:stability4}
Suppose that the MFs are continuously differentiable.
Moreover, suppose that there exist constants $b_i \in [0,1], \varphi_i > 0,i\in {\cal I}_r$, symmetric matrices $P_0 = P_0^T \in {\mathbb R}^{n\times n}$, $P_i = P_i^T \in {\mathbb R}^{n\times n},i\in {\cal I}_r$, $M=M^T \in {\mathbb R}^{n\times n}$, $N=N^T \in {\mathbb R}^{n\times n}$, and $W=W^T \in {\mathbb R}^{n\times n}$ such that the following LMIs hold:
\begin{align}
&{P_i} + N \succ 0,\quad i \in {\cal I}_r,\label{eq:11}\\
&\sum\limits_{i = 1}^r {{\delta _i}({P_i} + W)}  + \sum\limits_{i = 1}^r {{\alpha _i}(0){P_i}} + P_0 \succ 0,\label{eq:13}\\
&\sum\limits_{k = 1}^r {{\varphi _k}({P_k} + N)}  + \frac{1}{2}\left( {A_i^T{P_j} + {P_j}{A_i} + A_j^T{P_i} + {P_i}{A_j}} \right)\nonumber\\
& + A_0^TP + P{A_0} + \sum\limits_{k = 1}^r {{\delta _k}(A_k^TP + P{A_k} + M)}  \prec 0,\label{eq:12}\\
&\forall ({\delta _1},{\delta _2}, \ldots ,{\delta _r}) \in \{  - {b_1},{b_1}\}  \times \{  - {b_2},{b_2}\}  \times  \cdots  \times \{  - {b_r},{b_r}\}\nonumber\\
&\forall (i,j) \in {\cal I}_r \times {\cal I}_r,\quad i \le j\nonumber
\end{align}

Then, the fuzzy system~\eqref{fuzzy-system} is locally asymptotically stable.
\end{theorem}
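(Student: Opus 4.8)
The plan is to verify the two defining properties of a Lyapunov function for the candidate $V(x)=x^{T}P_{0}x+\sum_{i=1}^{r}\alpha_{i}(x)\,x^{T}P_{i}x$ on a suitable neighbourhood of the origin---positive definiteness of $V$ and negative definiteness of $\dot V$---and then to invoke the Lyapunov direct method exactly as in the proofs of \cref{lemma:Mozelli2009} and \cref{thm:stability1}. Throughout I read the matrix $P$ appearing in \eqref{eq:12} as $P_{0}$. The feasible region will be $\Omega(\varphi)\cap{\cal H}(b)$, where $\Omega(\varphi)$ and ${\cal H}(b)$ are the sets already introduced for the FLF and QLF approaches.

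\textbf{Positivity.} Put $\delta_{i}:=\alpha_{i}(x)-\alpha_{i}(0)$, so that $V(x)=x^{T}\bigl(P_{0}+\sum_{i}\alpha_{i}(0)P_{i}+\sum_{i}\delta_{i}P_{i}\bigr)x$. Using the null effect \eqref{eq:null-property}, $\sum_{i}\delta_{i}W=0$ for any $W$, hence $V(x)=x^{T}\bigl(P_{0}+\sum_{i}\alpha_{i}(0)P_{i}+\sum_{i}\delta_{i}(P_{i}+W)\bigr)x$. On ${\cal H}(b)$ one has $\delta\in{\cal D}(b)$; since the bracketed matrix is affine in $\delta$, the vertex conditions \eqref{eq:13} (evaluated over $\delta\in\{-b_{1},b_{1}\}\times\cdots\times\{-b_{r},b_{r}\}$) make it positive definite for every $\delta\in{\cal D}(b)$, and therefore $V(x)>0$ for all $x\in{\cal H}(b)\setminus\{0\}$.

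\textbf{Negativity of $\dot V$.} Starting from \eqref{eq:10}, treat the four groups of terms separately. (i) In $\sum_{i}\dot\alpha_{i}(x)x^{T}P_{i}x$ insert the slack $\sum_{i}\dot\alpha_{i}(x)N=0$ coming from \eqref{eq:slack1}, giving $\sum_{i}\dot\alpha_{i}(x)x^{T}(P_{i}+N)x$; since $P_{i}+N\succ0$ and $|\dot\alpha_{i}(x)|=|\nabla_{x}\alpha_{i}(x)^{T}A(\alpha(x))x|\le\varphi_{i}$ on $\Omega(\varphi)$, the matrix inequalities $\dot\alpha_{i}(P_{i}+N)\preceq|\dot\alpha_{i}|(P_{i}+N)\preceq\varphi_{i}(P_{i}+N)$ bound this above by $\sum_{i}\varphi_{i}(P_{i}+N)$. (ii) In $\sum_{i}\delta_{i}x^{T}(P_{0}A_{i}+A_{i}^{T}P_{0})x$ insert $\sum_{i}\delta_{i}M=0$ from \eqref{eq:null-property}, obtaining $\sum_{i}\delta_{i}x^{T}(P_{0}A_{i}+A_{i}^{T}P_{0}+M)x$. (iii) Relabel indices in the double sum: $\sum_{i,j}\alpha_{i}\alpha_{j}(P_{j}A_{i}+A_{i}^{T}P_{j})=\sum_{i,j}\alpha_{i}\alpha_{j}\tfrac{1}{2}(A_{i}^{T}P_{j}+P_{j}A_{i}+A_{j}^{T}P_{i}+P_{i}A_{j})$. (iv) Keep the nominal term $x^{T}(A_{0}^{T}P_{0}+P_{0}A_{0})x$. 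Since $\sum_{i,j}\alpha_{i}\alpha_{j}=1$, the constant matrices from (i), (ii), (iv) can be rewritten as convex combinations with weights $\alpha_{i}\alpha_{j}$, and collecting everything yields $\dot V(x)\le x^{T}\bigl(\sum_{i,j}\alpha_{i}\alpha_{j}\,\Xi_{ij}(\delta)\bigr)x$ on $\Omega(\varphi)\cap{\cal H}(b)$, where $\Xi_{ij}(\delta):=\sum_{k}\varphi_{k}(P_{k}+N)+\tfrac{1}{2}(A_{i}^{T}P_{j}+P_{j}A_{i}+A_{j}^{T}P_{i}+P_{i}A_{j})+A_{0}^{T}P_{0}+P_{0}A_{0}+\sum_{k}\delta_{k}(A_{k}^{T}P_{0}+P_{0}A_{k}+M)$ is exactly the left-hand side of \eqref{eq:12}. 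Because $\Xi_{ij}=\Xi_{ji}$ is affine in $\delta$, the vertex conditions \eqref{eq:12} for $i\le j$ force $\Xi_{ij}(\delta)\prec0$ for every $\delta\in{\cal D}(b)$ and every pair $(i,j)$, so the convex combination is negative definite, whence $\dot V(x)<0$ for all $x\in(\Omega(\varphi)\cap{\cal H}(b))\setminus\{0\}$.

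\textbf{Conclusion and main obstacle.} By \cref{prop:1} and \cref{prop:2}, $\Omega(\varphi)$ and ${\cal H}(b)$ each contain a ball centred at the origin, hence so does $\Omega(\varphi)\cap{\cal H}(b)$; on this neighbourhood $V$ is positive definite and $\dot V$ negative definite, so the Lyapunov direct method gives local asymptotic stability, and any sublevel set $L_{V}(c)\subseteq\Omega(\varphi)\cap{\cal H}(b)$ is a domain of attraction. I expect the main obstacle to be the bookkeeping: three relaxation devices must be juggled simultaneously---the slack $W$ for positivity, the slack $M$ for the $\alpha(x)-\alpha(0)$ cross terms, and the slack $N$ together with the sign condition $P_{i}+N\succ0$ for the $\dot\alpha_{i}$ term---while every constant matrix is redistributed as a convex combination with weights $\alpha_{i}\alpha_{j}$ so that all contributions land inside the single $\delta$-affine, $(i,j)$-indexed matrix $\Xi_{ij}(\delta)$ whose vertices are precisely \eqref{eq:12}. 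A secondary point, already resolved by \cref{prop:1} and \cref{prop:2}, is confirming that the region on which all of these bounds hold simultaneously is genuinely a neighbourhood of the origin rather than merely a set containing it.
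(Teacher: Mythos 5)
Your proof is correct and follows essentially the same route as the paper's: evaluate \eqref{eq:13} and \eqref{eq:12} at $\delta_i=\alpha_i(x)-\alpha_i(0)$ and $\varphi_i \ge |\dot\alpha_i(x)|$, exploit the two null effects to absorb the slacks $W$, $M$, $N$, symmetrize and convexify the double sum with weights $\alpha_i\alpha_j$, and conclude via the Lyapunov direct method on $\Omega(\varphi)\cap{\cal H}(b)$. Your write-up is in fact more explicit than the paper's (correctly reading $P$ as $P_0$ in \eqref{eq:12} and spelling out the affine-in-$\delta$ vertex argument), but the underlying argument is the same combination of the proofs of \cref{lemma:Mozelli2009} and \cref{thm:stability1}.
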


The proof is given in Appendix~\ref{appendix:10}. If the LMI condition of~\cref{thm:stability4} is feasible, then it implies that $V(x)  > 0,\forall x \in {\cal H}(b)\backslash \{ 0\} ,{\nabla _x}V{(x)^T}A(\alpha )x < 0,\forall x \in ({\cal H}(b) \cap \Omega (\varphi ))\backslash \{ 0\}$, and any Lyapunov sublevelset such that ${L_V}(c) \subseteq {\cal H}(b) \cap \Omega (\varphi )$ with some $c>0$ is a DA. In what follows, we prove that~\cref{thm:stability4} is less conservative than~\cref{lemma:Mozelli2009} and~\cref{thm:stability1}.
\begin{theorem}[Feasibility inclusion]\label{thm:non-conservatism4}
Suppose that the LMI condition of~\cref{lemma:Mozelli2009} is feasible for some constants $\varphi_i > 0,i\in {\cal I}_r$.
Then, the LMI condition of~\cref{thm:stability4} is also feasible with the same constants $\varphi_i > 0,i\in {\cal I}_r$ and any $b_i\in [0,1],i\in {\cal I}_r$. The conserve does not hold in general.

On the other hand, suppose that the LMI condition of~\cref{thm:stability1} is feasible for some constnats $b_i\in [0,1],i\in {\cal I}_r$. Then, the LMI condition of~\cref{thm:stability4} is also feasible with the same $b_i\in [0,1],i\in {\cal I}_r$ and any $\varphi_i > 0,i\in {\cal I}_r$. The conserve does not hold in general.
\end{theorem}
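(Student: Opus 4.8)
The plan is to prove each of the two feasibility inclusions by exhibiting an explicit choice of the extra slack matrices in~\cref{thm:stability4} that reduces its LMIs to the LMIs of the source condition (either~\cref{lemma:Mozelli2009} or~\cref{thm:stability1}). The structure of~\eqref{eq:11}--\eqref{eq:12} is deliberately chosen so that by ``switching off'' the unused block the combined condition collapses to each component. I would therefore treat the two statements separately, each time: (i) take the given feasible matrices from the source condition, (ii) embed them into the variables of~\cref{thm:stability4} together with a judicious choice of the new slack matrices, and (iii) verify that~\eqref{eq:11}, \eqref{eq:13} and \eqref{eq:12} reduce to the already-known feasible inequalities.

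For the first inclusion, assume~\cref{lemma:Mozelli2009} is feasible with matrices $P_i \succ 0$, $M_{\mathrm{Moz}}$ (the slack matrix there) and constants $\varphi_i > 0$. In~\cref{thm:stability4} I would set $P_0 = 0$, keep the same $P_i$ and $\varphi_i$, take $N = M_{\mathrm{Moz}}$, and choose the new polytopic slacks $M = 0$ and $W = 0$; since $P = \sum_i \alpha_i(0) P_i$ is the relevant combined QLF matrix here, note that with $A_0 = \sum_i \alpha_i(0) A_i$ one has $A_0^T P + P A_0 + \sum_k \delta_k (A_k^T P + P A_k) = (A(\alpha(0)+\delta))^T P + P A(\alpha(0)+\delta)$, but in fact the cleaner route is to set $P=0$ in~\eqref{eq:12} by taking $P_0=0$ and then choosing $W$ so that the $\delta$-dependent term in~\eqref{eq:13} also cancels — concretely $W = -\sum_i \alpha_i(0)P_i$ won't be symmetric-definite-safe, so instead I would absorb the $\delta$ terms by picking $b_i$ arbitrary and relying on $P_0$ being large enough; the honest statement in the theorem is ``any $b_i$'', which suggests the intended argument is $M=0$, $W=0$, $P_0 = \epsilon I$ for small $\epsilon$ (or $P_0=0$), so that~\eqref{eq:12} becomes $\sum_k \varphi_k(P_k+N) + \tfrac12(A_i^TP_j + P_jA_i + A_j^TP_i + P_iA_j) + (\text{terms that vanish or are dominated})\prec 0$, which is exactly~\eqref{eq:4} up to the harmless $A_0^TP_0+P_0A_0$ and $\delta$-contributions that are made negligible by continuity of LMI feasibility under small perturbations. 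The key technical point to check is that~\eqref{eq:13} holds: with $P_0 = 0$ it reads $\sum_i\delta_i(P_i+W) + \sum_i\alpha_i(0)P_i \succ 0$, and since $\sum_i\alpha_i(0)P_i \succ 0$ strictly (convex combination of positive-definite matrices), for $W=0$ and sufficiently small $b_i$ the perturbation $\sum_i\delta_i P_i$ cannot destroy positivity — but the theorem claims ``any $b_i$'', so one must instead choose $W$ to kill the $\delta$-term, e.g. $W$ such that $P_i + W = 0$ is impossible simultaneously; the resolution is that $\sum_i \delta_i(P_i+W)$ with $\sum_i\delta_i$ not fixed forces $P_i + W$ to be sign-controlled, so the correct choice is $W = -\bar P$ with $\bar P \succeq P_i$ chosen so that... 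I would flag exactly this as the main obstacle: \emph{finding the slack choice that makes~\eqref{eq:13} and~\eqref{eq:12} hold for \textbf{all} $b_i\in[0,1]$ rather than merely small ones}, and I expect the resolution to exploit that~\eqref{eq:13} only needs to hold on the region $\mathcal H(b)$ in the DA interpretation, or that $P_0$ is taken strictly positive definite and large.

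For the second inclusion, assume~\cref{thm:stability1} is feasible with $P \succ 0$ and $M_{\mathrm{stab}}$. In~\cref{thm:stability4} I would set $P_0 = P$, all $P_i = 0$, $M = M_{\mathrm{stab}}$, $N = \epsilon I$ (to make $P_i + N = \epsilon I \succ 0$), $W = 0$, and keep $b_i$ the same. Then~\eqref{eq:11} holds trivially; \eqref{eq:13} becomes $P \succ 0$ (since the $P_i$ and $W$ terms vanish), which holds; and~\eqref{eq:12} becomes $\sum_k \varphi_k \epsilon I + A_0^T P + P A_0 + \sum_k \delta_k(A_k^T P + P A_k + M)\prec 0$, which for $\epsilon$ small enough follows from~\eqref{eq:thm1:1} by continuity, for \emph{any} choice of $\varphi_i > 0$ once $\epsilon = \epsilon(\varphi)$ is taken small — and indeed we may take $N$ as small as we like independently of $\varphi$. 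This direction is the clean one and should go through essentially as sketched. Finally, for both ``the converse does not hold in general'' claims I would simply invoke the earlier examples: \cref{thm:stability4} inherits from~\cref{thm:stability1} the ability to certify systems (via $A_0$ Hurwitz and $P_i=0$) that~\cref{lemma:Mozelli2009} cannot — the linear system $\dot x = -x$ with the $x^2$-membership representation from~\cref{ex:6}/\cref{ex:7} being the witness for the first non-converse — and conversely one can cite or construct a system where the FLF slack is essential, so that~\cref{thm:stability4} with its $P_i$ terms active succeeds while~\cref{thm:stability1} (pure QLF) fails; since~\cref{thm:stability4} subsumes both, any such asymmetry between the two base methods furnishes both non-converse statements.
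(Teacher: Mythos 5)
Your overall strategy is the same as the paper's: embed the given feasible solution into the variables of~\cref{thm:stability4} and set the remaining matrices to small multiples of the identity, relying on strictness of the source LMIs. Your treatment of the second inclusion (take $P_0=P$, $P_i$ zero or $\varepsilon I$, $N=\varepsilon I$, $W=0$, so that~\eqref{eq:11} and~\eqref{eq:13} are immediate and~\eqref{eq:12} is an $O(\varepsilon)$ perturbation of~\eqref{eq:thm1:1}) matches the paper's proof, which uses $P_i=\varepsilon_i I$, $N=\varepsilon_{r+1}I$; and your non-converse witnesses (the $\dot x=-x$ example where the QLF condition succeeds but~\cref{lemma:Mozelli2009} fails, and the asymmetry inherited by~\cref{thm:stability4}) are fine and indeed more than the paper's appendix provides.

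The first inclusion is where your write-up stalls, and you should separate two issues you have conflated. For~\eqref{eq:12}, your worry about ``all $b_i\in[0,1]$ rather than merely small ones'' is unfounded: with $P_0=\varepsilon_1 I$ and $M=\varepsilon_2 I$ (the paper's choice), the added terms $A_0^TP_0+P_0A_0+\sum_k\delta_k(A_k^TP_0+P_0A_k+M)$ have norm at most $2\varepsilon_1\|A_0\|+\sum_k(2\varepsilon_1\|A_k\|+\varepsilon_2)$ uniformly over all vertices, because $|\delta_k|\le b_k\le 1$ is bounded regardless of $b$; so strict feasibility of~\eqref{eq:4} absorbs the perturbation for $\varepsilon_1,\varepsilon_2$ small, with no dependence on $b$. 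For~\eqref{eq:13}, however, your concern is genuine and you must not leave it as a flagged ``obstacle'': at a vertex such as $\delta=(b_1,-b_2)$ with $b_i$ near $1$ and $P_2\succ P_1$, the term $\sum_i\delta_i(P_i+W)$ cannot be cancelled by any single $W$ (the $W$-contribution is $(\sum_i\delta_i)W$, which vanishes on the null hyperplane and has the wrong sign at opposite vertices), and a small $P_0=\varepsilon_1 I$ does not rescue positivity. You need either to restrict~\eqref{eq:13} to $\delta$ satisfying the null constraint $\sum_i\delta_i=0$ (which is how the proof of~\cref{thm:stability4} actually uses it), or to enlarge $P_0$ in a direction compatible with~\eqref{eq:12}, e.g.\ $P_0=cP^*$ with $A_0^TP^*+P^*A_0\prec 0$ (recall $A_0$ is Hurwitz whenever~\cref{lemma:Mozelli2009} is feasible), and then re-balance $M$. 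The paper disposes of this step with ``it is clear,'' so your instinct that something must be checked here is sound, but a complete proof requires you to carry it out rather than leave it open.
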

The proof is given in Appendix~\ref{appendix:11}.

\begin{example}\label{ex:8}
Let us consider the system in~\cref{ex:2} again.
The stability has been checked using~\cref{lemma:Mozelli2009} and~\cref{thm:stability4} for several values of pairs $(a,b)\in [-10,0]\times [0,200]$. For~\cref{lemma:Mozelli2009}, the derivative bounds are set to be $\varphi_i=0.85, i\in {\cal I}_r$. For~\cref{thm:stability4}, we consider $\varphi_i=0.85, i\in {\cal I}_r$ and  $b_i=0.2, i\in {\cal I}_r$.
\cref{fig:7}(a) depicts a set of feasible points for several values of pairs $(a,b)\in [-10,0]\times [0,200]$.
It reveals that the region of~\cref{thm:stability4} includes the region of~\cref{lemma:Mozelli2009}.
\begin{figure}
\centering
\includegraphics[width=0.3\textwidth]{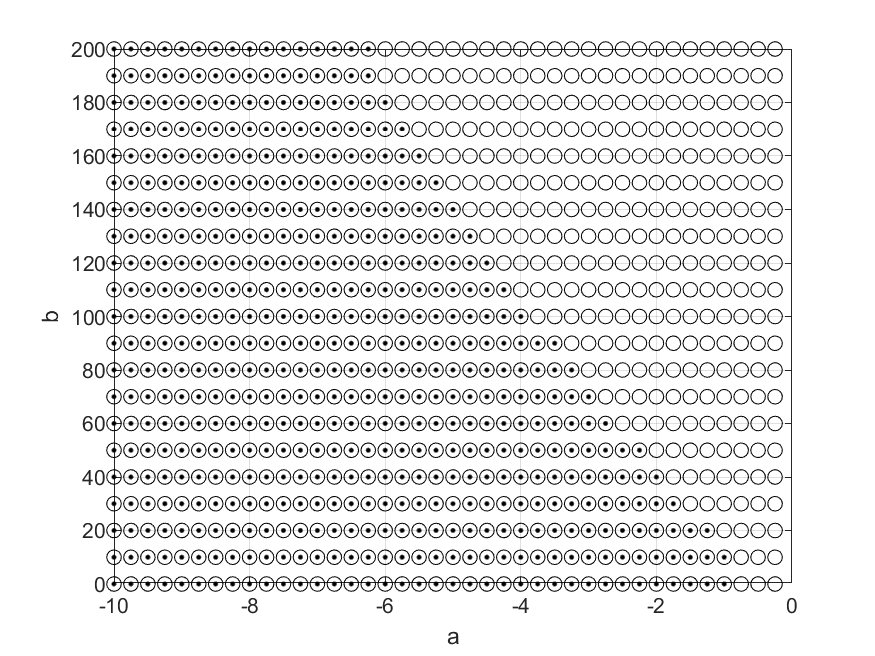}
\caption{\cref{ex:8}: Feasible region for several values of $(a,b)\in [-10,0]\times [0,200]$.
`$\bullet$' indicates the feasible point of~\cref{lemma:Mozelli2009} with $\varphi_i=0.85, i\in {\cal I}_r$.
`$\circ$' indicates the feasible point of~\cref{thm:stability4} with $\varphi_i=0.85, i\in {\cal I}_r$ and $b_i=0.2, i\in {\cal I}_r$.}\label{fig:7}
\end{figure}
\end{example}

\section{Conclusion and future directions}\label{sec:conclusion}
In this paper, we have introduced novel local stability conditions for T-S fuzzy systems based on QLFs. These newly formulated conditions incorporate information from the membership functions (MFs) at the origin. We have rigorously demonstrated that these conditions are both necessary and sufficient for ensuring local exponential stability. Moreover, it has been established that the proposed conditions exhibit less conservatism compared to traditional approaches based on FLFs. Additionally, we have proposed an extended Lyapunov function that combines both quadratic and FLFs, and theoretically proved that this generalization improves the aforementioned methods. To validate the theoretical claims, several illustrative examples have been provided.

The proposed methods hold promise for extension to state-feedback stabilization problems, as well as to the stability and stabilization of discrete-time T-S fuzzy systems. Various avenues exist for further reducing conservatism, including the application of additional LMI techniques to automatically estimate DAs as sublevel sets of Lyapunov functions. These topics constitute promising directions for future research.

\bibliographystyle{IEEEtran}
\bibliography{reference}

\begin{thebibliography}{10}
\providecommand{\url}[1]{#1}
\csname url@samestyle\endcsname
\providecommand{\newblock}{\relax}
\providecommand{\bibinfo}[2]{#2}
\providecommand{\BIBentrySTDinterwordspacing}{\spaceskip=0pt\relax}
\providecommand{\BIBentryALTinterwordstretchfactor}{4}
\providecommand{\BIBentryALTinterwordspacing}{\spaceskip=\fontdimen2\font plus
\BIBentryALTinterwordstretchfactor\fontdimen3\font minus
  \fontdimen4\font\relax}
\providecommand{\BIBforeignlanguage}[2]{{%
\expandafter\ifx\csname l@#1\endcsname\relax
\typeout{** WARNING: IEEEtran.bst: No hyphenation pattern has been}%
\typeout{** loaded for the language `#1'. Using the pattern for}%
\typeout{** the default language instead.}%
\else
\language=\csname l@#1\endcsname
\fi
#2}}
\providecommand{\BIBdecl}{\relax}
\BIBdecl

\bibitem{khalil2002nonlinear}
H.~K. Khalil, \emph{Nonlinear systems}, 2002.

\bibitem{tanaka2004fuzzy}
K.~Tanaka and H.~O. Wang, \emph{Fuzzy control systems design and analysis: a
  linear matrix inequality approach}.\hskip 1em plus 0.5em minus 0.4em\relax
  John Wiley \& Sons, 2004.

\bibitem{boyd1994linear}
S.~Boyd, L.~El~Ghaoui, E.~Feron, and V.~Balakrishnan, \emph{Linear matrix
  inequalities in system and control theory}.\hskip 1em plus 0.5em minus
  0.4em\relax SIAM, 1994.

\bibitem{nguyen2019fuzzy}
A.-T. Nguyen, T.~Taniguchi, L.~Eciolaza, V.~Campos, R.~Palhares, and M.~Sugeno,
  ``Fuzzy control systems: Past, present and future,'' \emph{IEEE Computational
  Intelligence Magazine}, vol.~14, no.~1, pp. 56--68, 2019.

\bibitem{fang2006new}
C.-H. Fang, Y.-S. Liu, S.-W. Kau, L.~Hong, and C.-H. Lee, ``A new lmi-based
  approach to relaxed quadratic stabilization of ts fuzzy control systems,''
  \emph{IEEE Transactions on fuzzy systems}, vol.~14, no.~3, pp. 386--397,
  2006.

\bibitem{kim2000new}
E.~Kim and H.~Lee, ``New approaches to relaxed quadratic stability condition of
  fuzzy control systems,'' \emph{IEEE Transactions on Fuzzy systems}, vol.~8,
  no.~5, pp. 523--534, 2000.

\bibitem{sala2007asymptotically}
A.~Sala and C.~Ari{\~n}o, ``Asymptotically necessary and sufficient conditions
  for stability and performance in fuzzy control: Applications of polya's
  theorem,'' \emph{Fuzzy sets and systems}, vol. 158, no.~24, pp. 2671--2686,
  2007.

\bibitem{oliveira2007parameter}
R.~C. Oliveira and P.~L. Peres, ``Parameter-dependent lmis in robust analysis:
  Characterization of homogeneous polynomially parameter-dependent solutions
  via lmi relaxations,'' \emph{IEEE Transactions on Automatic Control},
  vol.~52, no.~7, pp. 1334--1340, 2007.

\bibitem{sala2007relaxed}
A.~Sala and C.~Arino, ``Relaxed stability and performance conditions for
  takagi--sugeno fuzzy systems with knowledge on membership function overlap,''
  \emph{IEEE Transactions on Systems, Man, and Cybernetics, Part B
  (Cybernetics)}, vol.~37, no.~3, pp. 727--732, 2007.

\bibitem{narimani2009relaxed}
M.~Narimani and H.-K. Lam, ``Relaxed lmi-based stability conditions for
  takagi--sugeno fuzzy control systems using
  regional-membership-function-shape-dependent analysis approach,'' \emph{IEEE
  Transactions on Fuzzy Systems}, vol.~17, no.~5, pp. 1221--1228, 2009.

\bibitem{kruszewski2009triangulation}
A.~Kruszewski, A.~Sala, T.~M. Guerra, and C.~Arino, ``A triangulation approach
  to asymptotically exact conditions for fuzzy summations,'' \emph{IEEE
  Transactions on Fuzzy Systems}, vol.~17, no.~5, pp. 985--994, 2009.

\bibitem{johansson1999piecewise}
M.~Johansson, A.~Rantzer, and K.-E. Arzen, ``Piecewise quadratic stability of
  fuzzy systems,'' \emph{IEEE Transactions on Fuzzy Systems}, vol.~7, no.~6,
  pp. 713--722, 1999.

\bibitem{feng2003controller}
G.~Feng, ``Controller synthesis of fuzzy dynamic systems based on piecewise
  lyapunov functions,'' \emph{IEEE Transactions on Fuzzy Systems}, vol.~11,
  no.~5, pp. 605--612, 2003.

\bibitem{campos2012new}
V.~C. Campos, F.~O. Souza, L.~A. T{\^o}rres, and R.~M. Palhares, ``New
  stability conditions based on piecewise fuzzy lyapunov functions and tensor
  product transformations,'' \emph{IEEE Transactions on Fuzzy Systems},
  vol.~21, no.~4, pp. 748--760, 2012.

\bibitem{tanaka2003multiple}
K.~Tanaka, T.~Hori, and H.~O. Wang, ``A multiple lyapunov function approach to
  stabilization of fuzzy control systems,'' \emph{IEEE Transactions on fuzzy
  systems}, vol.~11, no.~4, pp. 582--589, 2003.

\bibitem{mozelli2009reducing}
L.~A. Mozelli, R.~M. Palhares, F.~Souza, and E.~M. Mendes, ``Reducing
  conservativeness in recent stability conditions of ts fuzzy systems,''
  \emph{Automatica}, vol.~45, no.~6, pp. 1580--1583, 2009.

\bibitem{guerra2004lmi}
T.~M. Guerra and L.~Vermeiren, ``Lmi-based relaxed nonquadratic stabilization
  conditions for nonlinear systems in the takagi--sugeno's form,''
  \emph{Automatica}, vol.~40, no.~5, pp. 823--829, 2004.

\bibitem{ding2006further}
B.~Ding, H.~Sun, and P.~Yang, ``Further studies on lmi-based relaxed
  stabilization conditions for nonlinear systems in takagi--sugeno's form,''
  \emph{Automatica}, vol.~42, no.~3, pp. 503--508, 2006.

\bibitem{lee2010improvement}
D.~H. Lee, J.~B. Park, and Y.~H. Joo, ``Improvement on nonquadratic
  stabilization of discrete-time takagi--sugeno fuzzy systems:
  Multiple-parameterization approach,'' \emph{IEEE Transactions on Fuzzy
  Systems}, vol.~18, no.~2, pp. 425--429, 2010.

\bibitem{lee2011new}
------, ``A new fuzzy lyapunov function for relaxed stability condition of
  continuous-time takagi--sugeno fuzzy systems,'' \emph{IEEE Transactions on
  Fuzzy Systems}, vol.~19, no.~4, pp. 785--791, 2011.

\bibitem{lee2014generalized}
D.~H. Lee and Y.~H. Joo, ``On the generalized local stability and local
  stabilization conditions for discrete-time takagi--sugeno fuzzy systems,''
  \emph{IEEE Transactions on Fuzzy Systems}, vol.~22, no.~6, pp. 1654--1668,
  2014.

\bibitem{lee2013relaxed}
D.~H. Lee \emph{et~al.}, ``Relaxed lmi conditions for local stability and local
  stabilization of continuous-time takagi--sugeno fuzzy systems,'' \emph{IEEE
  Transactions on Cybernetics}, vol.~44, no.~3, pp. 394--405, 2013.

\bibitem{ding2010homogeneous}
B.~Ding, ``Homogeneous polynomially nonquadratic stabilization of discrete-time
  takagi--sugeno systems via nonparallel distributed compensation law,''
  \emph{IEEE Transactions on Fuzzy Systems}, vol.~18, no.~5, pp. 994--1000,
  2010.

\bibitem{rhee2006new}
B.-J. Rhee and S.~Won, ``A new fuzzy lyapunov function approach for a
  takagi--sugeno fuzzy control system design,'' \emph{Fuzzy sets and systems},
  vol. 157, no.~9, pp. 1211--1228, 2006.

\bibitem{Tanaka2009}
K.~Tanaka, H.~Yoshida, H.~Ohtake, and H.~O. Wang, ``A sum-of-squares approach
  to modeling and control of nonlinear dynamical systems with polynomial fuzzy
  systems,'' \emph{IEEE Transactions on Fuzzy systems}, vol.~17, no.~4, pp.
  911--922, 2008.

\bibitem{sala2009polynomial}
A.~Sala and C.~Arino, ``Polynomial fuzzy models for nonlinear control: A taylor
  series approach,'' \emph{IEEE Transactions on Fuzzy Systems}, vol.~17, no.~6,
  pp. 1284--1295, 2009.

\bibitem{lam2011polynomial}
H.-K. Lam, ``Polynomial fuzzy-model-based control systems: Stability analysis
  via piecewise-linear membership functions,'' \emph{IEEE Transactions on Fuzzy
  Systems}, vol.~19, no.~3, pp. 588--593, 2011.

\bibitem{bernal2011stability}
M.~Bernal, A.~Sala, A.~Jaadari, and T.-M. Guerra, ``Stability analysis of
  polynomial fuzzy models via polynomial fuzzy lyapunov functions,''
  \emph{Fuzzy Sets and Systems}, vol. 185, no.~1, pp. 5--14, 2011.

\bibitem{kruszewski2008nonquadratic}
A.~Kruszewski, R.~Wang, and T.-M. Guerra, ``Nonquadratic stabilization
  conditions for a class of uncertain nonlinear discrete time ts fuzzy models:
  A new approach,'' \emph{IEEE Transactions on Automatic Control}, vol.~53,
  no.~2, pp. 606--611, 2008.

\bibitem{lee2011approaches}
D.~H. Lee, J.~B. Park, and Y.~H. Joo, ``Approaches to extended non-quadratic
  stability and stabilization conditions for discrete-time takagi--sugeno fuzzy
  systems,'' \emph{Automatica}, vol.~47, no.~3, pp. 534--538, 2011.

\bibitem{bernal2010generalized}
M.~Bernal and T.~M. Guerra, ``Generalized nonquadratic stability of
  continuous-time takagi--sugeno models,'' \emph{IEEE Transactions on Fuzzy
  Systems}, vol.~18, no.~4, pp. 815--822, 2010.

\bibitem{pan2011nonquadratic}
J.-T. Pan, T.~M. Guerra, S.-M. Fei, and A.~Jaadari, ``Nonquadratic
  stabilization of continuous t--s fuzzy models: Lmi solution for a local
  approach,'' \emph{IEEE Transactions on Fuzzy Systems}, vol.~20, no.~3, pp.
  594--602, 2011.

\bibitem{lee2012fuzzy}
D.~H. Lee, J.~B. Park, and Y.~H. Joo, ``A fuzzy lyapunov function approach to
  estimating the domain of attraction for continuous-time takagi--sugeno fuzzy
  systems,'' \emph{Information Sciences}, vol. 185, no.~1, pp. 230--248, 2012.

\bibitem{tanaka1997design}
K.~Tanaka and T.~Kosaki, ``Design of a stable fuzzy controller for an
  articulated vehicle,'' \emph{IEEE Transactions on Systems, Man, and
  Cybernetics, Part B (Cybernetics)}, vol.~27, no.~3, pp. 552--558, 1997.

\end{thebibliography}

\appendices

\section{Proof of~\cref{prop:1}}\label{appendix:2}
Since the derivatives of the MFs, ${\nabla _x}{\alpha_i}(x)$, exist and continuous, and $\cal X$ is compact, ${\nabla _x}{\alpha _i}(x)$ is bounded over $\cal X$, i.e., ${\left\| {{\nabla _x}{\alpha _i}(x)} \right\|_2} \le \gamma ,\forall x \in {\cal X}$ for some $\gamma >0$. Therefore, we have
\begin{align*}
|{\nabla _x}{\alpha_i}(x)^T A(\alpha(x)) x | \le& {\left\| {\nabla _x}{\alpha _i}(x) \right\|_2}{\left\| A(\alpha (x)) \right\|_2}{\left\| x \right\|_2}\\
\le& \gamma \sum_{i = 1}^r {\left\| A_i \right\|_2} {\left\| x \right\|_2},\quad \forall x \in {\cal X},
\end{align*}
where the Cauchy–Schwarz inequality is used in the first inequality, and the triangle inequality is used in the second inequality.
Therefore, we have $\lim_{x \to 0} |{\nabla _x}{\alpha _i}(x)A(\alpha (x))x| = 0,\forall i \in {\cal I}_r$, and one can choose a sufficiently small ball ${\cal B}(\varepsilon)$ with some $\varepsilon >0$ such that
\begin{align*}
|{\nabla _x}{\alpha _i}{(x)^T}A(\alpha (x))x| \le \gamma \sum\limits_{i = 1}^r {{{\left\| {{A_i}} \right\|}_2}} \varepsilon  \le {\varphi _i},\quad \forall x \in {\cal B}(\varepsilon)
\end{align*}
which completes the proof.

\section{Proof of~\cref{thm:conservatism}}\label{appendix:9}
Let us consider the following simple scalar system
\begin{align}
\dot x(t) = f(x(t)) = (2\sin(x(t)) - 1)x(t).\label{eq:nonlinear-system-ex}
\end{align}
Note that the linearization ${\left. {{\nabla _x}f(x)} \right|_{x = 0}} =  - 1$ is Hurwitz.
Therefore, by Lyapunov direct method, the system is locally asymptotically stable around the origin.
A fuzzy system corresponding to~\eqref{eq:nonlinear-system-ex} is given by~\eqref{fuzzy-system} with ${A_1} = 1,{A_2} =  - 1,{\alpha _1}(x) = \sin (x),{\alpha _2}(x) = (1 - \sin (x)), {\cal X} = [0,\pi /2]$.
Now, let us focus on the LMI condition of~\cref{lemma:Mozelli2009}, and suppose that it is feasible, which implies that
\begin{align}
&P_i \succ 0,\quad A_i^T{P_j} + {P_j}{A_i} + A_j^T{P_i} + {P_i}{A_j} \prec 0,\nonumber\\
&\forall (i,j) \in {\cal I}_r \times {\cal I}_r,\quad i \le j.\label{eq:6}
\end{align}
are feasible. Multiplying the first and second inequality by ${\alpha _i}(x){\alpha _j}(x)$ and summing them over all ${\cal I}_r \times {\cal I}_r, i \le j$ lead to
\begin{align}
&P(\alpha (x)) \succ 0,\quad A{(\alpha (x))^T}P(\alpha (x)) + P(\alpha (x))A(\alpha (x)) \prec 0,\nonumber\\
&x \in {\cal X},\label{eq:5}
\end{align}
which means that $A(\alpha (x))$ is Hurwitz for any $x\in {\cal X}$.
As a contraposition, one concludes that if~\eqref{eq:5} does not hold, then~\eqref{eq:6} is infeasible.
Moreover, if~\eqref{eq:6} is infeasible, then so is the LMI condition of~\cref{lemma:Mozelli2009}.
Returning to the fuzzy system example, one can prove that~\eqref{eq:5} does not hold at $x = \pi/2$ because ${\alpha _1}(\pi /2) = \sin (\pi /2) = 1,{\alpha _2}(\pi /2) = 1 - \sin (\pi /2) = 0$, and $A(\alpha (\pi /2)) = 1$. Therefore,~\eqref{eq:6} is infeasible. Therefore, the LMI condition of~\cref{lemma:Mozelli2009} is infeasible as well. This completes the proof.

\section{Proof of~\cref{thm:stability1}}\label{appendix:4}
Suppose that the LMIs in~\cref{thm:stability1} are feasible. Then, it is obvious that~\eqref{eq:thm1:1} holds for all $\delta : = {\left[ {\begin{array}{*{20}{c}}
{{\delta _1}}&{{\delta _2}}& \cdots &{{\delta _r}}
\end{array}} \right]^T} \in H(b)$, which implies that~\eqref{eq:thm1:1} holds for all points in the intersection ${\cal H}(b) \cap \{ \delta  \in {\mathbb R}^r:{\bf{1}}_r^T\delta  = 0\}$, where ${\bf{1}}_r$ is the $r$-dimensional vector with all entries one. Now, setting ${\delta _i} = {\alpha _i}(x) - {\alpha _i}(0)$, assuming $\alpha (x) - \alpha (0) \in {\cal H}(b) \cap \{ \delta  \in {\mathbb R}^r:{\bf{1}}_r^T\delta  = 0\}$, and using the null property~\eqref{eq:null-property}, we have
\begin{align*}
&\sum_{i = 1}^r {({\alpha _i}(x) - {\alpha _i}(0))(A_i^TP + P{A_i})}  + A_0^TP + P{A_0} \prec 0
\end{align*}
for all $x \in {\cal H}(b)$. By multiplying both sides of the above inequality~\eqref{eq:1} by the state $x$ from the right and its transpose from the left and considering~\eqref{eq:3}, one concludes that $\dot V(x(t)) < 0,\forall x(t) \in {\cal H}(b) \backslash \{ 0\}$ holds. This completes the proof.

\section{Proof of~\cref{thm:stability3}}\label{appendix:3}
\begin{lemma}\label{lemma:bounding-lemma}
Given matrices $U,V$ of appropriate dimensions, the following holds for any $G = G^T \succ 0$:
\[
 - U^T G^{-1} U - V^T G V \preceq U^T V + V^T U \preceq U^T G^{-1} U + V^T G V.
\]
\end{lemma}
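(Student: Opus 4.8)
The plan is to prove each of the two inequalities by a matrix completion of squares, exploiting that $G = G^T \succ 0$ possesses a symmetric positive-definite square root $G^{1/2}$, with inverse $G^{-1/2} := (G^{1/2})^{-1}$, satisfying $G^{1/2} G^{1/2} = G$, $G^{-1/2} G^{-1/2} = G^{-1}$, and $G^{1/2} G^{-1/2} = I$. The whole argument then rests on the elementary fact that $W^T W \succeq 0$ for any real matrix $W$ of appropriate dimension.

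For the upper bound, I would apply this fact to $W = G^{1/2} V - G^{-1/2} U$. Expanding the quadratic form and using the symmetry of $G^{1/2}$ and $G^{-1/2}$ gives
\[
V^T G V - U^T V - V^T U + U^T G^{-1} U \succeq 0,
\]
and rearranging yields $U^T V + V^T U \preceq U^T G^{-1} U + V^T G V$. For the lower bound, I would repeat the argument with the opposite sign, i.e.\ apply $W^T W \succeq 0$ to $W = G^{1/2} V + G^{-1/2} U$, obtaining $V^T G V + U^T V + V^T U + U^T G^{-1} U \succeq 0$, hence $-U^T G^{-1} U - V^T G V \preceq U^T V + V^T U$. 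Chaining the two inequalities gives the claimed sandwich.

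The argument is entirely routine; the only ingredient that is not a one-line manipulation is the existence of the symmetric positive-definite square root of a positive-definite matrix, which is classical. Accordingly, I do not anticipate any genuine obstacle, and the proof reduces to carefully expanding two congruence expressions and collecting terms.
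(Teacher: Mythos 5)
Your proof is correct and is essentially identical to the paper's: both arguments expand $(G^{-1/2}U \pm G^{1/2}V)^T(G^{-1/2}U \pm G^{1/2}V) \succeq 0$ using the symmetric positive-definite square root of $G$ and rearrange. No further comment is needed.
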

\begin{proof}
The first inequality comes from $(G^{ - 1/2} U + G^{1/2} V)^T ( G^{ - 1/2} U + G^{1/2} V) =  U^T G^{ - 1} U + U^T V + V^T U + V^T G V \succeq 0$ and the reversed inequality is obtained from $(G^{ - 1/2} U - G^{1/2} V)^T (G^{ - 1/2} U - G^{1/2} V) =  U^T G^{-1} U - U^T V - V^T U + V^T G V\succeq 0$. This completes the proof.
\end{proof}

Let us define
\begin{align*}
{\Gamma ^T}: = \left[ {\begin{array}{*{20}{c}}
{{{(P{A_1} + M)}^T}}& \cdots &{{{(P{A_r} + M)}^T}}
\end{array}} \right],
\end{align*}
and suppose that the LMIs in~\cref{thm:stability3} are feasible.
Applying the Schur complement leads to $\eta G + {\Gamma ^T}{(I \otimes G)^{ - 1}}\Gamma  + P{A_0} + A_0^T P \prec 0$.
Then, for any $x \in {\cal U}(\eta)$, we have
\begin{align*}
&\sum_{i = 1}^r {(\alpha_1(x) - \alpha_1(0))^2} G\\
& + \Gamma^T (I_r \otimes G)^{-1} \Gamma + P A_0 + A_0^T P \prec 0.
\end{align*}
Noting the identity $\sum_{i = 1}^r {{({\alpha _1}(x) - {\alpha _1}(0))}^2} G = {((\alpha (x) - \alpha (0)) \otimes I_n)^T}(I_r \otimes G)((\alpha (x) - \alpha (0)) \otimes {I_n})$ and using~\cref{lemma:bounding-lemma} lead to
\begin{align*}
&((\alpha (x) - \alpha (0)) \otimes {I_n})^T \Gamma  + \Gamma^T((\alpha (x) - \alpha (0)) \otimes {I_n})\\
& + A_0^TP + P{A_0} \prec 0,\quad x \in {\cal U}(\eta ).
\end{align*}
Using the null property~\eqref{eq:null-property} leads to~\eqref{eq:1} for all $x \in {\cal U}(\eta )$.

\section{Proof of~\cref{thm:converse1}}\label{appendix:5}

\begin{lemma}[{Lyapunov's indirect method~\cite[Corollary~4.3]{khalil2002nonlinear}}]
Let us consider the system~\eqref{nonlinear-system}, suppose that $f$ is continuously differentiable in some neighborhood of the origin, and the origin is an equilibrium point of~\eqref{nonlinear-system}.
Then, the origin is locally exponentially stable if and only if the linearization $A = {\left. {{\nabla _x}A(\alpha (x))x} \right|_{x = 0}}$ around the origin is Hurwitz.
\end{lemma}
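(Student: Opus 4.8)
The plan is to prove the two implications of the equivalence separately, after first recording the standard first-order decomposition of the vector field. Since $f(x)=A(\alpha(x))x$ is continuously differentiable near the origin with $f(0)=0$, I would write $f(x)=Ax+g(x)$, where $A:=\nabla_x A(\alpha(x))x|_{x=0}$ is the Jacobian at the origin and $g(x):=f(x)-Ax$ is the remainder. Continuity of $\nabla_x f$ together with $\nabla_x g(0)=0$ yields the key estimate $\lim_{x\to 0}\|g(x)\|_2/\|x\|_2=0$; equivalently, for every $\gamma>0$ there is a ball ${\cal B}(\rho)$ on which $\|g(x)\|_2\le\gamma\|x\|_2$. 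This estimate is what lets me treat $g$ as a negligible perturbation of the linear part in both directions.

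For sufficiency ($A$ Hurwitz $\Rightarrow$ local exponential stability), I would invoke the Lyapunov equation: since $A$ is Hurwitz there is a unique $P=P^T\succ 0$ with $A^TP+PA=-I_n$. Taking $V(x)=x^TP x$, I compute $\dot V(x)=-\|x\|_2^2+2x^TPg(x)\le -\|x\|_2^2+2\|P\|_2\gamma\|x\|_2^2$ on ${\cal B}(\rho)$. Choosing $\gamma$ so that $2\|P\|_2\gamma<\tfrac12$ gives $\dot V(x)\le-\tfrac12\|x\|_2^2$ near the origin, and combining with $\lambda_{\min}(P)\|x\|_2^2\le V(x)\le\lambda_{\max}(P)\|x\|_2^2$ produces $\dot V\le -\tfrac{1}{2\lambda_{\max}(P)}V$. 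Integrating this differential inequality and passing back from $V$ to $\|x\|_2$ yields an exponential bound $\|x(t)\|_2\le k\|x(0)\|_2 e^{-\lambda t}$ for sufficiently small initial conditions, which is local exponential stability.

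For necessity (local exponential stability $\Rightarrow$ $A$ Hurwitz), I would argue by transferring a Lyapunov certificate of the nonlinear system to the linear one. Invoking a converse Lyapunov theorem for exponentially stable equilibria (cf.~\cite{khalil2002nonlinear}), there exists a continuously differentiable $V$ on a neighborhood of the origin satisfying $c_1\|x\|_2^2\le V(x)\le c_2\|x\|_2^2$, $\nabla_x V(x)^T f(x)\le -c_3\|x\|_2^2$, and $\|\nabla_x V(x)\|_2\le c_4\|x\|_2$ for positive constants $c_1,\dots,c_4$. Evaluating the derivative of $V$ along the linear flow $\dot z=Az$ and substituting $Az=f(z)-g(z)$ gives $\nabla_x V(z)^T Az\le -c_3\|z\|_2^2+c_4\|z\|_2\|g(z)\|_2$; shrinking the ball via the remainder estimate so that $c_4\|g(z)\|_2\le\tfrac{c_3}{2}\|z\|_2$ leaves $\nabla_x V(z)^T Az\le-\tfrac{c_3}{2}\|z\|_2^2<0$ for $z\ne0$ near the origin. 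Thus $V$ is a strict Lyapunov function for $\dot z=Az$ on a neighborhood, and since a linear system that is asymptotically stable on any neighborhood is globally asymptotically stable, $A$ must be Hurwitz.

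The main obstacle I anticipate is the necessity direction: it rests on the existence of the converse Lyapunov function $V$ with the stated quadratic sandwich and gradient bound, whose construction (typically from the flow of the exponentially stable system) is the genuinely nontrivial ingredient. The remaining technical care is bookkeeping---ensuring the neighborhoods on which the remainder estimate $\|g(x)\|_2=o(\|x\|_2)$ and the Lyapunov inequalities simultaneously hold are compatible, and correctly identifying $A=\nabla_x f(0)$, which here in fact equals $A_0=\sum_{i=1}^r\alpha_i(0)A_i$ since the terms involving $\nabla_x\alpha_i(0)$ are multiplied by $x$ and vanish at the origin.
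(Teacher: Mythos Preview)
Your proof is correct and is essentially the standard textbook argument for Lyapunov's indirect method. However, note that the paper does not actually prove this lemma: it is stated with a direct citation to \cite[Corollary~4.3]{khalil2002nonlinear} and used as a black box in the proof of \cref{thm:converse1}. There is therefore no ``paper's own proof'' to compare against; you have supplied precisely the argument that Khalil gives, which the authors chose to invoke rather than reproduce. Your closing observation that $A=\nabla_x f(0)=A_0=\sum_{i=1}^r\alpha_i(0)A_i$ is exactly the computation the paper carries out immediately after stating the lemma, and is the only part of this that is specific to the T--S fuzzy setting.
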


The sufficiency has been proved in~\cref{thm:stability1}.
For the necessity, suppose that the original nonlinear system is locally exponentially stable around the origin.
Since $A(\alpha(x) )x = f(x),\forall x \in {\cal X}$, one gets
\begin{align*}
{\nabla _x}f(x) =& {\nabla _x}A(\alpha )x\\
=& {\nabla _x}\left( {\sum\limits_{i = 1}^r {{\alpha _i}(x){A_i}x} } \right)\\
=& \sum\limits_{i = 1}^r {{A_i}x {\nabla _x}{\alpha _i}(x)}  + \sum\limits_{i = 1}^r {{\alpha _i}(x){A_i}}
\end{align*}
Therefore, by Lyapunov's indirect method, one concludes that
\begin{align*}
{\left. {{\nabla _x}A(\alpha(x) )x} \right|_{x = 0}} =& {\left. {\sum_{i = 1}^r {{A_i}x {\nabla _x}{\alpha _i}(x)} } \right|_{x = 0}} + {\left. {\sum\limits_{i = 1}^r {{\alpha _i}(x){A_i}} } \right|_{x = 0}}\\
=& \sum\limits_{i = 1}^r {{\alpha _i}(0){A_i}}  = {A_0}
\end{align*}
is Hurwitz stable, i.e., there exists a Lyapunov matrix $P^* \succ0$ such that $A_0^TP^* + P^*{A_0} =  - Q \prec 0$
for any $Q\succ 0$. Letting $M = I$ and $P = P^*$ leads to
\begin{align*}
&\sum_{i = 1}^r {{\delta _i}(A_i^TP + P{A_i} + M)}\\
=& \sum_{i = 1}^r {{\delta _i}(A_i^TP + P{A_i})}  + \sum\limits_{i = 1}^r {{\delta _i}M} \\
\le& e(b)I,
\end{align*}
where $e(b): = \sum_{i = 1}^r {{{\max }_{i \in {\cal I}_r}}{b_i}\xi_i }$ and ${\xi _i}: = \max \{ |{\lambda _{\max }}(A_i^TP + P{A_i})|,|{\lambda _{\max }}( - A_i^TP - P{A_i})|\}  + 1$. Then, one can choose sufficiently small $b_i,i\in {\cal I}_r$ such that $e(b)I \prec Q$, implying $A_0^TP + P{A_0} + e(b)I \prec 0$ so that $A_0^TP + P{A_0} + \sum_{i = 1}^r {{\delta _i}(A_i^TP + P{A_i} + M)}  \prec 0$ for all $({\delta _1},{\delta _2}, \ldots ,{\delta _r}) \in \{  - {b_1},{b_1}\}  \times \{  - {b_2},{b_2}\}  \times  \cdots  \times \{  - {b_r},{b_r}\} $. This completes the proof.

\section{Proof of~\cref{thm:converse2}}\label{appendix:7}

The sufficiency has been proved in~\cref{thm:stability2} and~\cref{thm:stability3}.
For the necessity, suppose that the original nonlinear function is locally asymptotically stable around the origin.
Then, following similar lines as in the proof of~\cref{thm:converse1}, there exists a Lyapunov matrix $P^* \succ0$ such that $A_0^TP^* + P^*{A_0} =  - Q \prec 0$ for any $Q\succ 0$.

{\it Proof of the first statement:} Let $M = -c I$ and choose a sufficiently large $c>0$ such that~\eqref{eq:thm2:1} is satisfied.
Next, since $A_0^TP^* + P^*{A_0} \prec 0$ holds, letting $P = P^*$, one can choose a sufficiently small $b_i \in [0,1],i\in {\cal I}_r$ so that~\eqref{eq:thm2:2} holds. This completes the first part of the proof.

{\it Proof of the second statement:} Let us define
\begin{align*}
{\Gamma ^T}: = \left[ {\begin{array}{*{20}{c}}
{{{(P{A_1} + M)}^T}}& \cdots &{{{(P{A_r} + M)}^T}}
\end{array}} \right].
\end{align*}
Applying the Schur complement to~\eqref{eq:8} leads to
\begin{align}
&\eta G + \Gamma^T {(I \otimes G)^{ - 1}}\Gamma + P{A_0} + A_0^TP \prec 0.\label{eq:9}
\end{align}
Let $P=P^*$, $M = I$, $G = \varepsilon^{-1} I$, and $\eta \leq \varepsilon^2$.
When $\varepsilon \to 0$, the first and second terms in~\eqref{eq:9} vanish. Therefore, one can choose a sufficiently small $\varepsilon >0$ such that the inequality in~\eqref{eq:9} holds. In other words, we can choose a sufficiently small $\eta>0$ so that the LMI condition in~\cref{thm:stability3} is feasible. This completes the proof.

\section{Proof of~\cref{thm:non-conservatism2}}\label{appendix:6}
Suppose that the LMI condition of~\cref{lemma:Tanaka2003} is feasible for some constants $\varphi_i> 0,i\in {\cal I}_r$.
Since $P_i + M \succeq 0, i\in {\cal I}_r$, the LMIs in~\eqref{eq:4} imply that
\begin{align*}
A_i^T{P_j} + {P_j}{A_i} + A_j^T{P_i} + {P_i}{A_j} \prec 0,\quad \forall (i,j) \in {\cal I} \times {\cal I}_r,i \le j.
\end{align*}

Multiplying $\alpha_i(x)\alpha_j(x)$ on both sides of the above inequalities and summing them over all $i,j \in {\cal I}_r, i \le j$ lead to
\begin{align*}
A{(\alpha (x))^T}P(\alpha (x)) + P(\alpha (x))A(\alpha (x)) \prec 0,\forall x \in {\cal X}
\end{align*}
with $P(\alpha (x)) = \sum_{i = 1}^r {{\alpha _i}(x){P_i}}$. Letting $x=0$ yields $A{(\alpha (0))^T}P(\alpha (0)) + P(\alpha (0))A(\alpha (0)) \prec 0 \Rightarrow A_0^TP + P{A_0} \prec 0$
with $P = P(\alpha (0))$, which means that $A_0$ is Hurwitz. Now, it is clear that one can choose sufficiently small $b_i > 0,i\in {\cal I}_r$ so that~\eqref{eq:thm1:1} holds.

To prove the converse, let us consider the fuzzy system~\eqref{fuzzy-system} with ${A_1} = 1, {A_2} =  - 1,{\alpha _1}(x) = \sin (x),{\alpha _2}(x) = (1 - \sin (x)),{\cal X} = [0,\pi /2]$.
Noting ${\alpha _1}(0) = 0,{\alpha _2}(0) = 1$, we have ${A_0} = {\alpha _1}(0){A_1} + {\alpha _2}(0){A_2} =  - 1$, which is Hurwitz. Therefore, following the same steps as in the proof of~\cref{thm:converse1}, one concludes that the LMI condition of~\cref{thm:stability1} is also feasible for sufficiently small $b_i,i\in {\cal I}_r$.
For this system, we can prove that the LMI condition of~\cref{lemma:Tanaka2003} is not feasible for any constants $\varphi_i > 0,i\in {\cal I}_r$. In particular, ${\varphi _1}{P_1} + {\varphi _2}{P_2} + A_1^T{P_1} + {P_1}{A_1} + A_1^T{P_1} + {P_1}{A_1} \prec 0$ implies ${P_1} <  - \frac{{{\varphi _2}}}{{1 + {\varphi _1}}}{P_2}$, which cannot be satisfies with $P_1 >0, P_2 >0$. This completes the proof.

\section{Proof of~\cref{thm:fundamental-limit2}}\label{appendix:8}
Let us consider a nonlinear system~\eqref{nonlinear-system} which is locally asymptotically but non-exponentially stable.
By contradiction, suppose that the FLF approaches (\cref{lemma:Tanaka2003} and \cref{lemma:Mozelli2009}) and the QLF approaches (\cref{thm:stability1}, \cref{thm:stability2}, and \cref{thm:stability3}) found a feasible solution.
This implies that there exists some Lyapunov sublevel set ${L_V}(c)$ with some $c>0$ such that it is a DA.
To proceed further, let us consider the fuzzy Lypaunov function $V(x) = {x^T}P(\alpha (x))x$.
Then, for any $x(t)\in {L_V}(c)$, we have $\frac{d}{{dt}}V(x(t)) \le  - \varepsilon x{(t)^T}x(t) < 0$ for all $x(t) \ne 0$ and for a sufficiently small constant $\varepsilon>0$. Noting ${\min _{i \in {\cal I}_r}}{\lambda _{\min }}({P_i}){x^T}x \le {x^T}P(\alpha (x))x \le {\max _{i \in {\cal I}_r}}{\lambda _{\max }}({P_i}){x^T}x$, this implies that
\begin{align*}
\frac{d}{{dt}}V(x(t)) \le  - \left( {\frac{\varepsilon }{{{{\max }_{i \in {\cal I}_r}}{\lambda _{\max }}({P_i})}}} \right)V(x(t)),\forall x(t) \ne 0
\end{align*}
and equivalently, $V(x(t)) \le \exp \left( { - \frac{{\varepsilon t}}{{{{\max }_{i \in {\cal I}_r}}{\lambda _{\max }}({P_i})}}} \right)V(0)$. Then, it follows that ${\left\| {x(t)} \right\|_2} \le \sqrt {\frac{1}{{{{\min }_{i \in {\cal I}_r}}{\lambda _{\min }}({P_i})}}\exp \left( { - \frac{{\varepsilon t}}{{{{\max }_{i \in {\cal I}_r}}{\lambda _{\max }}({P_i})}}} \right)V(0)}$.
Therefore, \eqref{nonlinear-system} is locally exponentially stable, and it contradicts with the hypothesis that \eqref{nonlinear-system} is locally asymptotically but non-exponentially stable.
This completes the proof.

\section{Proof of~\cref{thm:stability4}}\label{appendix:10}
The proof is a combination of the proof of~\cref{lemma:Mozelli2009} and~\cref{thm:stability1}.
Suppose that the LMIs in~\cref{thm:stability4} are feasible.
Then, one can let ${\varphi _i} = {{\dot \alpha }_i}(x(t)),i \in {\cal I}_r$ and ${\delta _i} = {\alpha _i}(x(t)) - {\alpha _i}(0),i \in {\cal I}_r$ in~\eqref{eq:12}. Using the null effects $\sum_{k = 1}^r {{{\dot \alpha }_k}(x(t))}  = 0$, $\sum_{i = 1}^r {({\alpha _i}(x(t)) - {\alpha _i}(0))}  = 0$, using the bound~\eqref{eq:11}, multiplying the inequality~\eqref{eq:12} by $\alpha_i(x(t))\alpha_j(x(t))$, and summing it over all $(i,i)\in {\cal I}_r \times {\cal I}_r, i \leq j$ lead to
\begin{align*}
&\sum_{i = 1}^r {({\alpha _i}(x(t)) - {\alpha _i}(0))({P_0}{A_i} + A_i^T{P_0})}  + A_0^T{P_0} + {P_0}{A_0}\\
& + \sum_{j = 1}^r {\sum_{i = 1}^r {{\alpha _j}(x(t)){\alpha _i}(x(t))} } ({P_j}{A_i} + A_i^T{P_j})\\
& + \sum_{i = 1}^r {{{\dot \alpha }_i}(x(t)){P_i}}  \prec 0,\quad \forall x \in {\cal H}(b) \cap \Omega (\varphi )
\end{align*}
which holds if and only if the time-derivative of the Lyapunov function along the solution~\eqref{eq:10} is negative definite.

On the other hand, letting ${\delta _i} = {\alpha _i}(x(t)) - {\alpha _i}(0),i \in {\cal I}_r$ in~\eqref{eq:13} and using the null effect $\sum_{i = 1}^r {({\alpha _i}(x(t)) - {\alpha _i}(0))}  = 0$, one has ${P_0} + \sum_{i = 1}^r {{\alpha _i}(x){P_i}}  \succ 0,\forall x \in {\cal H}(b) \Leftrightarrow V(x) > 0,\forall x \in {\cal H}(b)\backslash \{ 0\}$. This completes the proof.

\section{Proof of~\cref{thm:non-conservatism4}}\label{appendix:11}
Suppose that the LMI condition of~\cref{lemma:Mozelli2009} is feasible for some constants $\varphi_i > 0,i\in {\cal I}_r$.
Then, for any $b_i \in [0,1],i\in {\cal I}_r$, it can be easily proved that with $P_0= \varepsilon_1 I$, $M= \varepsilon_2 I$, and sufficiently small $\varepsilon_1, \varepsilon_2 >0$,~\eqref{eq:12} holds. Moreover, it is clear that~\eqref{eq:11} and~\eqref{eq:13} hold.

On the other hand, suppose that the LMI condition of~\cref{thm:stability1} is feasible for some constnats $b_i\in [0,1],i\in {\cal I}_r$. Then, for any $\varphi_i \in [0,1],i\in {\cal I}_r$, it can be easily proved that with $P_i= \varepsilon_i I,i\in {\cal I}_r$, $N= \varepsilon_{r+1} I$ and sufficiently small $\varepsilon_i>0,i\in {\cal I}_r$, $\varepsilon_{r+1}>0$,~\eqref{eq:12} holds. Moreover, it is clear that~\eqref{eq:11} and~\eqref{eq:13} hold.
This completes the proof.

\end{document}